\documentclass[preprint,12pt,3p]{elsarticle}

\usepackage{makeidx}
\usepackage{amsmath}
\usepackage{amsthm}
\usepackage{amssymb}
\usepackage{amsfonts}
\usepackage{dsfont}
\usepackage{float}
\usepackage[plainpages=false,pdfpagelabels=true,colorlinks=true,citecolor=blue,hypertexnames=false]{hyperref}
\usepackage{mathtools}
\usepackage{extarrows}
\usepackage{url}
\usepackage{balance}
\usepackage{multirow}
\usepackage{color}
\usepackage{diagbox}
\usepackage{booktabs}
\usepackage{makecell}
\usepackage[switch]{lineno}
\newcolumntype{V}{!{\vrule width 1pt}}

\usepackage{caption}
\usepackage{amsfonts}

\newcommand{\bA}{{\mathbb{A}}}
\newcommand{\bE}{{\mathbb{E}}}
\newcommand{\bF}{{\mathbb{F}}}
\newcommand{\bZ}{{\mathbb{Z}}}
\newcommand{\bQ}{{\mathbb{Q}}}

\newcommand{\bK}{{\mathbb{K}}}
\newcommand{\bN}{{\mathbb{N}}}

\newcommand{\bI}{{\mathbb{I}}}
\newcommand{\gG}{{G}}
\newcommand{\bH}{{\mathbb{H}}}
\newcommand{\bT}{{\mathbb{T}}}

\newcommand{\cP}{{\mathcal{P}}}

\newcommand{\coeff}{\operatorname{coeff}}
\newcommand{\lc}{\operatorname{lc}}

\newcommand{\im}{\operatorname{im}}

\newcommand{\spa}{\operatorname{span}}

\newcommand{\den}{\operatorname{den}}
\newcommand{\num}{\operatorname{num}}

\newcommand{\tdeg}{\operatorname{tdeg}}
\newcommand{\hdeg}{\operatorname{hdeg}}

\newcommand{\const}{\operatorname{const}}
\newcommand{\ord}{\operatorname{ord}}
\newcommand{\directsum}{\sum}
\newcommand{\directsumideals}{\bigoplus}

\newtheorem{thm}{Theorem}[section]
\newtheorem{prop}[thm]{Proposition}
\newtheorem{cor}[thm]{Corollary}
\newtheorem{lemma}[thm]{Lemma}
\newtheorem{convention}[thm]{Convention}
\newtheorem{remark}[thm]{Remark}
\newtheorem{define}[thm]{Definition}
\newtheorem{example}[thm]{Example}

\newtheorem{hyp}[thm]{Hypothesis}
\newtheorem{alg}[thm]{Algorithm}

\numberwithin{equation}{section}

\journal{*****}

\begin{document}

\begin{frontmatter}



\title{Complete Reductions and Idempotent Representations for $R\Pi\Sigma^*$-towers}

\author[jku]{Yiman Gao}
\ead{ymgao@risc.jku.at}

\author[jku]{Jakob Obrovsky}
\ead{Jakob.Obrovsky@risc.jku.at}

\author[jku]{Carsten Schneider}
\ead{Carsten.Schneider@risc.jku.at}

\begin{flushleft}
\vspace*{-2cm}
		RISC Report 26--14 
\vspace*{1cm}
\end{flushleft}

\address[jku]{Johannes Kepler University Linz, Research Institute for Symbolic Computation (RISC), \\ Altenberger Stra\ss e 69, 4040, Linz, Austria}

\begin{abstract}
 $R\Pi\Sigma^*$-extensions form a rich class of difference rings that provide a unified algebraic framework for modeling indefinite nested sums, transcendental products, and nested products over roots of unity structures that frequently appear
 in combinatorics, number theory, and particle physics. For a large subclass of these extensions whose ring of constants is a field, we introduce a complete reduction approach to resolve the telescoping problem without solving any difference equations. More precisely, we explicitly construct a complement to the subspace of differences over the constant field and develop an  algorithm that decomposes any element of the extension into the sum of a difference and a component lying in this complement. Consequently, summability holds
 if and only if this complementary component is zero. This structural approach yields significant speed-ups for parameterized telescoping and, notably, creative telescoping for deriving linear recurrences of definite sums. Finally, we compute an explicit idempotent representation that extends existing telescoping algorithms and our complete reduction framework to the  general class of $R\Pi\Sigma^*$-extensions, opening up previously untreatable classes of sums and products.
\end{abstract}

\begin{keyword}
 difference algebra, complete reduction, idempotent representation, telescoping

 \vskip 6 pt

 \noindent MSC(2020): 12H10, 68W30, 33F10
\end{keyword}

\end{frontmatter}

\section{Introduction}\label{SECT:Intro}
The symbolic simplification of discrete sums is a foundational challenge across diverse branches of mathematics and theoretical physics, ranging from enumerative combinatorics and number theory to the analysis of algorithms and quantum field theory. Central to symbolic summation is the telescoping problem. 
Given an expression $f(k)$ that belongs to some domain of sequences, decide constructively  whether there exists a sequence $g(k)$ in the same domain such that
\begin{equation}\label{EQ:telescoping}
f(k)=g(k+1)-g(k).
\end{equation}
If such a solution can be found, one obtains the identity
$$\sum_{k=a}^{b}f(k)=g(b+1)-g(a)$$
by choosing appropriate non-negative integers $a$ and $b$.  
Algorithmic solutions to the telescoping problem have been developed for  rational functions \cite{Abra1971}, hypergeometric terms \cite{Gosp1978,PS1995}, $q$-hypergeometric terms \cite{Koor1993, PR1997} and  D-finite sequences~\cite{Zeil1990b,Chyz2000,Kout2009}. 
By developing a discrete analogue of Risch's  algorithm \cite{Risc1970} for
symbolic integration, Karr \cite{Karr1981,Karr1985} reformulated the above problem in terms of difference algebra, introduced  so-called $\Pi\Sigma^*$-fields covering wide classes of indefinite nested sums and products such as harmonic numbers and their generalized versions, and provided an algorithm for solving the telescoping problem in this setting. 

However, Karr's algorithm cannot handle algebraic products like $(-1)^n$ because of zero divisors, such as $1-(-1)^n$ and $1+(-1)^n$, which are incompatible with the underlying field and integral domain structures. Nevertheless, alternating signs and nested sums constructed with them~\cite{AB2013,ABS:2011,Verm1999} arise frequently as fundamental building blocks across combinatorics, number theory, and mathematical physics. Properly capturing and managing these objects is therefore essential for symbolic summation.
To overcome this drawback, $R\Pi\Sigma^*$-extensions were introduced in \cite{Schn2016, Schn2017} and the telescoping problem was solved for the subclass of simple $R\Pi\Sigma^*$-extensions (see Definition~\ref{DEF:SimpleRPiSigma} below). In this setting, one can model indefinite nested sums defined over indefinite nested products, where the products within the multiplicands may take the form of power products; see~\cite{Schn2021} and the literature therein. More generally, algorithms have been provided to solve the parameterized  telescoping problem, which encompasses Zeilberger's creative telescoping paradigm \cite{Zeil1990b} for producing linear recurrences. Furthermore, algorithms to find solutions of linear recurrences defined over such rings were introduced in~\cite{AS:2021,ABPS2021}.
Most of these
algorithms have been implemented in the Mathematica package \texttt{Sigma}, which has been applied to solving problems from combinatorics \cite{PSW2011}, number theory \cite{OS2009,BPRS2025} and elementary particle physics \cite{QCD20}. 

The telescoping algorithms mentioned above are limited in two aspects.
\begin{itemize}
    \item They rely on solving auxiliary parameterized linear difference equations, which often suffer from high computational costs.
     \item They do not apply to general $R\Pi\Sigma^*$-extensions which are not simple.
\end{itemize}

As an alternative, one can tackle the telescoping problem via additive decompositions: 
given a sequence $f(k)$ that belongs to some domain, compute $g(k)$ and $r(k)$ in the same domain such that\begin{equation}\label{Equ:RefinedTele}
f(k)=g(k+1)-g(k)+r(k),
\end{equation}
and $r(k)$ is minimal in some sense with the following extra property: $f(k)$ is summable if and only if $r(k)=0$. 
Additive decompositions provide us with an alternative and possibly more informative way to express sums. If one fails to find a solution $g(k)$ of~\eqref{EQ:telescoping}, summing~\eqref{Equ:RefinedTele} over $k$ from $a$ to $b$ yields
\begin{equation}\label{Equ:RefinedTeleSummed}
	\sum_{k=a}^bf(k)=g(b+1)-g(a)+\sum_{k=a}^b r(k),
\end{equation}
where the arising sum on the right hand side can be considered in many approaches to be simpler than the input sum on the left hand side.

Additive decompositions are well established for various classes of functions and sequences, including rational functions \cite{Abra1975,Paul1995}, ($q$)-hypergeometric terms  \cite{AbPe2002b,CHKL2015,DHL2018,CDGHL2025}, and, more generally, nested sums~\cite{Schn2007,CGHS2025}
and D-finite sequences \cite{BrSa2024, CDKW2025}.
While the underlying algorithms in~\cite{AbPe2002b,Schn2007} require explicitly solving difference equations within the underlying difference ring, the remaining approaches bypass these expensive computations entirely. 
Related reduction techniques have also been developed for symbolic summation in multivariate difference fields; see, e.g.,~\cite{CDFW2026,DW2026}.
Crucially, these newer constructions served as the key step toward refined versions --referred to as complete reductions -- which have been elaborated, e.g., for hypergeometric terms in~\cite[Theorem~5.78]{Kaue2023}. As it turns out, the additive decompositions for D-finite sequences in~\cite{BrSa2024,CDKW2025}, and for nested sums in~\cite{CGHS2025} also establish the refined construction of complete reductions.
Notably, this framework originated in the differential context of symbolic integration; we refer the reader to~\cite{Herm1872,Ostr1845, BLLRR2016, CKK2016,vdHJ2021, DGLL2025}  and the references therein for further details.

 Below, to facilitate our subsequent constructions, we use the following definition of complete reductions, which is equivalent to the one given in \cite[Definition 5.67]{Kaue2023}. 

\begin{define} \label{DEF:cr}
	Let $C$ be a field, $U$ be a $C$-linear space and $\rho$ be a $C$-linear operator on $U$.
	\begin{itemize}
		\item[(i)] Another  $C$-linear operator $\phi$ on $U$ is called a {\em complete reduction for $(U,\, \rho)$} if $U = \im(\rho) \oplus \im(\phi)$ and $\phi$ is the projection from $U$ to $\im(\phi)$ with respect to the direct sum\footnote{The first condition $U=\im(\rho) \oplus \im(\phi)$ means that every element $u$ in $U$ can be uniquely written as $q+w$ for $q\in \im(\rho)$ and $w\in \im(\phi)$; the second condition that $\phi$ is the projection from $U$ to $\im(\phi)$ with respect to the direct sum means that $\phi$ itself establishes this direct sum representation by mapping $u$ to the component $w$.}.
		
		\item[(ii)] Let $\phi$ be a complete reduction for $(U,\rho)$ and $u \in U$. If there is  $v \in U$ such that $\rho(v)=u-\phi(u)$, then $(v, \phi(u))$ is called a {\em $\Sigma$-pair with respect to $\phi$}.

		\item[(iii)] 	We say that {\em there is an algorithm for constructing a complete reduction
			for $(U, \rho)$} if we can fix a subspace $W$ such that $U=\im(\rho)\oplus W$, and there is an algorithm that, for every $u \in U$, computes 
		$v \in U$ and $w \in W$ such that $u = \rho(v) +w$. 
Note that such an algorithm induces a complete reduction for $(U, \rho)$ by mapping $u$ to $w$.
	\end{itemize} 
\end{define}

\begin{remark} \label{RE:cr}
	An easy linear algebra argument reveals that a $C$-linear operator $\phi$ is a complete reduction for $(U, \rho)$ 
	if and only if it is an idempotent whose kernel is equal to $\im(\rho)$.
\end{remark}

Reduction-based methods can be broadly defined in the context of difference rings as follows. Let $(\bA,\, \sigma)$ be a \emph{difference ring} (resp.\ \emph{difference field}), that is, a ring (resp.\ field) $\bA$ equipped with an automorphism $\sigma: \bA \to \bA$. Here the summation objects are represented formally in the ring $\bA$, and the shift operator is modeled by $\sigma$. An element $c \in \bA$ is called a \emph{constant} if $\sigma(c) = c$, and the set of all constants in $\bA$ is denoted by $C = \const(\bA,\, \sigma)$. In particular, if $C$ is a field (which is always the case when $\bA$ is a field), $C$ is referred to as the \emph{constant field} of $(\bA,\, \sigma)$. In this setting, the forward difference operator $\Delta: \bA \to \bA$ is defined by
\begin{equation}\label{EQ:differenceoperator} 
	\begin{array}{cccc}
		\Delta: & \bA & \longrightarrow & \bA \\
		&  f   & \mapsto         &  \sigma(f)-f.
	\end{array}
\end{equation}
The operator $\Delta$ is a $C$-linear map, and its image $\Delta(\bA)$ forms a $C$-subspace of $\bA$, which is called the \emph{summable space} of $\bA$. 
A complete reduction $\phi$ for $(\bA,\, \Delta)$ then satisfies the following key properties which naturally emerge from the definition; see also the existing literature. \\ 
(i) \textit{Decidability of summability:} If $f\in\bA$ is decomposed as $f=\Delta(g)+\phi(f)$ for some $g\in\bA$, then $f\in\Delta(\bA)$ if and only if $\phi(f)=0$.\\
(ii) \textit{Additivity:} If $f_1,f_2\in\bA$ are decomposed as $f_i=\Delta(g_i)+\phi(f_i)$ with $g_i\in\bA$, then $f_1+f_2=\Delta(g_1+g_2)+\phi(f_1+f_2)$, where $\phi(f_1+f_2)=\phi(f_1)+\phi(f_2)$.\\
(iii)\label{prop:complete_reduction} \textit{Parameterized telescoping:} 
 Parameterized telescoping, including its special case of creative telescoping~\cite{Zeil1991} (see Section~\ref{SECT:ParaTele}) can be tackled in an elegant manner.
Given nonzero elements $ f_0, \ldots, f_m \in \bA$, the goal is to find $g \in \bA$ and constants $c_0, \ldots, c_m \in C$ (not all zero) such that
\begin{equation}\label{Equ:ParaTele}
\Delta(g)= c_0 f_0 + \cdots + c_m f_m.
\end{equation}
Let $(v_i, \phi(f_i))$ be $\Sigma$-pairs of $f_i$ with respect to $\phi$ for $i=0,1,\ldots,m$. Then 
$0=\phi(c_0\,f_0+\dots+c_m\,f_m-\Delta(g))=c_0\,\phi(f_0)+\dots+c_m\,\phi(f_m)$
by Property (ii).
So the $c_i$ can be obtained by solving a linear system over $C$ (provided a suitable $C$-basis of the image space under $\phi$ is available),
after which $g$ can be taken as $c_0\,v_0+\dots+c_m\,v_m$.\\
(iv) \textit{Relation to additive decompositions:} In general, a complete reduction for $(\bA,\rho)$ decomposes an element $f \in \bA$ as $\rho(g) + r$, where $r \in \bA$ is minimal with respect to a partial ordering that has been introduced in  \cite[Section 1]{GaoPhD2024} in the vector space setting. 
Consequently, $\phi$ in the special case $\rho = \Delta$ constitutes an additive decomposition with respect to this minimality notion.  

\medskip

We remark that additive decomposition approaches that do not implement a complete reduction (such as~\cite{Abra1975,Paul1995,AbPe2002b,CHKL2015,DHL2018,CDGHL2025,Schn2007}) satisfy property~(i) but fail to satisfy property~(ii). Consequently, the consise approach to solving the parameterized telescoping problem as described in~(iii) cannot be applied directly, and further non-trivial calculations are required to pursue this tactic.

\subsection*{Contribution} We are interested in  constructing complete reductions for $R\Pi\Sigma^*$-towers~(see Definition \ref{DEF:RPiSigma}). More precisely, we start with a difference ring $(\bA, \, \sigma)$ with constant field $C$. Then we extend it to $(\bE,\, \sigma)$ by a tower of ring extensions  of the following form:
\begin{itemize}
    \item algebraic ring extensions which introduce objects of the type $\alpha^n$ with $\alpha$ being a root of unity;
    \item product extensions which introduce nested products in terms of Laurent polynomials where the multiplicands are invertible;
    \item sum extensions which introduce nested sums in terms of polynomials. 
\end{itemize}
Finally, we require that $\const(\bE,\,\sigma)=C$.

This paper derives two main results. The first is summarized in Theorem~\ref{THM:SimpleRPiSigmaExt}. Here we
develop an algorithm for constructing a complete reduction for $(\bE,\, \Delta)$, where $(\bE,\, \sigma)$ is a simple $R\Pi\Sigma^*$-tower~(see Definition \ref{DEF:SimpleRPiSigma})  over a ground ring $(\bA, \sigma)$ that satisfies certain properties, including the ability to compute complete reductions in $(\bA,\,\sigma)$. 
The crucial step is to construct a complete reduction for each of the three types of extensions within a slightly generalized framework and subsequently combine them inductively. Specifically, we utilize the framework for sum extensions from~\cite{CGHS2025} and derive new constructions for algebraic and product extensions. Finally, by specializing to the rational difference field $(\bA, \sigma) = (C(x),\, \sigma)$ with $\sigma(x) = x + 1$ over a factorizable constant field $C$ (see Definition~\ref{DEF:computable}), we employ techniques from~\cite{CHKL2015, CDGHL2025} and \cite[Theorem~5.78]{Kaue2023} to obtain an algorithm that computes complete reductions for this ground ring. As a consequence, we obtain a fully algorithmic procedure for constructing complete reductions for simple $R\Pi\Sigma^*$-towers over $(C(x),\,\sigma)$.

The second main result is Theorem~\ref{THM:CR+RPiSigma}, in which we construct a complete reduction for $(\bE,\, \Delta)$, where $(\bE,\, \sigma)$ is an $R\Pi\Sigma^*$-tower over an integral and constant-stable difference ring $(\bA,\, \sigma)$ (see Definition~\ref{DEF:stable}), where $(\bA,\, \sigma)$ is subject to the same assumptions as in Theorem~\ref{THM:SimpleRPiSigmaExt}.
Since the generators in the general case lack a regular pattern, constructing a complete reduction directly is challenging. Inspired by \cite{vdSi1997,HS2008,Schn2017,AS:2021}, we introduce the concept of idempotent representations for difference rings (see Definition~\ref{DEF:idempotent}) and generalize the construction of idempotent representations for basic $R\Pi\Sigma^*$-extensions --a special class of simple $R\Pi\Sigma^*$-extensions --from \cite[Section 4]{Schn2017} to general $R\Pi\Sigma^*$-towers. We prove that each component in the resulting decomposition is a $\Pi\Sigma^*$-tower, which is a special type of simple $R\Pi\Sigma^*$-tower. Furthermore, we reduce the problem of constructing complete reductions for $(\bE,\, \Delta)$ to constructing a complete reduction on a single component, enabling us to complete the construction using our first main result. As a byproduct, we obtain a fully algorithmic procedure for $R\Pi\Sigma^*$-towers over $(C(x),\,\sigma)$. 
 Remarkably, this provides the first algorithm to solve the telescoping problem for the general class of $R\Pi\Sigma^*$-towers.  In particular, the algorithm does not require solving any difference equations.
Specifically, one is now in the position to directly model products such as $\lfloor \frac {n}{2}\rfloor!$, which can not be treated in a simple $R\Pi\Sigma^*$-tower. For instance, our algorithm finds
\[\sum_{j=0}^n \frac{((-1)^j(2-j)+j)j!}{2^j \lfloor \frac j2\rfloor!}=\frac{2(n+1)!}{2^n \lfloor \frac n2\rfloor!},\]
compare Example~\ref{EX:newidentity} below. 

Finally, we obtain a new efficient framework for the creative telescoping problem in this general setting based on property~(iii) stated above. As a case study, we compute recurrences for the following family of sums 
\begin{equation}\label{EQ:recurrence}
S_{n, \ell}= \sum_{j=0}^{n}\left(1-\ell jH_j+\ell (n-j)H_j\right)\binom{n}{j}^{\ell}, \quad \ell\in \bZ^{+}
\end{equation}
in terms of the harmonic numbers $H_j=\sum_{i=1}^j\frac1i$ considered in~\cite{PS2003,KrattenthalerRivoal2007}. These sums are connected to irrationality proofs of zeta-values and and are related to supercongruences using the $p$-adic gamma function,  see \cite{Mort2003a,Mort2003b,AO2000}.
We benchmark our method against the corresponding functions in \texttt{Sigma}; the results are summarized in Tables \ref{tab:CT1}--\ref{tab:CT3}.

All algorithms presented in this paper are implemented in Mathematica. The code is available at
\begin{center}
\url{https://risc.jku.at/sw/corrps/}.
\end{center}

\subsection*{Structure of the paper}

In Section~\ref{SECT:pre}, we present basic notions and properties of $R\Pi\Sigma^*$-towers. In Section~\ref{SECT:simplecase}, we algorithmically construct complete reductions for simple $R\Pi\Sigma^*$-towers and present computational benchmarks. Utilizing idempotent representations, we obtain telescoping algorithms for the general class of $R\Pi\Sigma^*$-towers over $\Pi\Sigma^*$-fields in Section~\ref{SECT:IdemotentRepresentation}. Furthermore, we derive a general mechanism for constructing complete reductions for $R\Pi\Sigma^*$-towers over the rational difference field. Finally, Section~\ref{SECT:ParaTele} illustrates how to solve the creative telescoping problem in $R\Pi\Sigma^*$-towers via complete reductions, followed by concluding remarks in Section~\ref{SECT:Conclusion}.

\section{Preliminaries}\label{SECT:pre}

Basic terminology is reviewed in Section~\ref{SUBSECT:term}, and $R\Pi\Sigma^*$-extensions are introduced in Section~\ref{Sec:RPS-extensions}, summarizing the properties from~\cite{Schn2016a} required to construct complete reductions. Here, the condition that the ground ring forms a field is partially relaxed to an integral domain, thereby highlighting and extending relevant properties.

\subsection{Basic terminologies} \label{SUBSECT:term}

Let $\bN$, $\bZ^{+}$ and $\bZ^{-}$ be the sets of nonnegative, positive and negative integers, respectively. Set $[n]:=\{1, \ldots, n\}$ for $n \in \bZ^{+}$, and $[n]_0:=\{0,1,\ldots,n\}$ for $n\in \bN$. For $a\in \bZ$ and $\lambda \in \bZ^{+}$, 
$a \bmod \lambda$ is the unique integer $r$ such that $a = q\lambda + r$ for some integer $q$, where $0 \le r < \lambda$.

For a ring $\bA$, $\bA^*$ stands for the multiplicative group of units in $\bA$. For a root of unity $\alpha\in \bA$, we define the \emph{order of $\alpha$} to be $\min\{n \in\bZ^+\mid \alpha^n=1\}$, denoted by $\ord(\alpha)$. Let $t$ be an indeterminate over $\bA$ and consider the polynomial ring $\bA[t]$. For  $p \in \bA[t]$, denote its leading coefficient and degree by $\lc_t(p)$ and $\deg_t(p)$, respectively. In addition,  $\lc_t(0):=0$ and $\deg_t(0):=-\infty$.
The ring of Laurent polynomials in $t$ over a ring $\bA$ is denoted by $\bA[t,t^{-1}]$. Let $q =q_k t^k + q_{k-1} t^{k-1} + \cdots + q_{\ell+1} t^{\ell+1} + q_{\ell} t^{\ell} \in \bA[t, t^{-1}]$,
where $k, \ell \in \bZ$, $k \ge \ell$, $q_k, q_{k-1}, \ldots,  q_{\ell+1}, q_{\ell} \in \bA$, and $q_k, q_{\ell} \neq 0$.
We define $k$ and $\ell$  to be {\em head degree} and {\em tail degree} of $q$, which are
denoted by  $\hdeg_t(q)$ and $\tdeg_t(q)$, respectively. 
In addition, $\hdeg_t(0):= -\infty$ and $\tdeg_t(0) := \infty$. 
For a ring $\bE=\bA[z_1,z_2,\ldots,z_n]$ ( where $z_i$ is
transcendental or algebraic over $\bA[z_1,\ldots,z_{i-1}]$
for $i\in[n]$), $f\in\bE$, and $a_1,\ldots,a_n\in\bA$, whenever
the substitution $z_i\mapsto a_i$ is well-defined on $\bE$, we write
$
f(a_1,\ldots,a_n)
=f|_{\{z_i\to a_i\mid i\in[n]\}}.$

 Let $(\bA, \, \sigma)$ be a difference ring. For any element $a$  in $(\bA, \, \sigma)$ and any $i \in \bN$, we define
\[a^{\sigma,i}:=\prod_{j=0}^{i-1} \sigma^j(a)\quad\text{and}\quad a^{\sigma,\sigma,i}:=\prod_{j=0}^{i-1} a^{\sigma,j}, \]
where $a^{\sigma, 0}=a^{\sigma, \sigma, 0}=1$. 
\begin{remark}\label{RE:sigma+notation}
$a\sigma(a^{\sigma,i})=a^{\sigma,i+1}$ and $a^i\sigma(a^{\sigma,\sigma,i})=a^{\sigma,\sigma,i+1}$.
\end{remark}

A difference ring $(\tilde{\bA},\,\tilde \sigma)$ is called a {\em difference ring extension} of $(\bA,\,\sigma)$ if $\bA$ is a subring of $\tilde{\bA}$ and $\tilde{\sigma}|_{\bA}=\sigma$. For simplicity, we denote the automorphism $\tilde{\sigma}$ by $\sigma$ whenever the context is clear. Let $(\bA_1,\, \sigma_1)$ and $(\bA_2, \, \sigma_2)$ be two difference rings. A ring isomorphism $\tau: \bA_1 \to \bA_2$ is called a {\em difference ring isomorphism} if $\tau \sigma_1= \sigma_2 \tau$. For a comprehensive introduction to these concepts, we refer to \cite{Cohn1965, Levi2008}.
 
Throughout this article, when algorithms come into play, we assume  that $\bA$ and its constant field $C$ satisfy one of the following properties, compare \cite{Karr1981, KS:06}.

\begin{define}\label{DEF:computable}
A ring/field $\bA$ is called {\em computable} if all ring/field operations are computable and the zero-recognition problem is solvable. A field $\bF$ is said to be {\em factorizable} if it is computable and one can additionally factorize multivariate polynomials over $\bF$. Moreover, a field $\bF$ is said to be {\em orbit-computable} if it is factorizable, and for any $c_1,\dots,c_n \in \bF$, one can compute  a $\bZ$-basis of
$$M(c_1,\dots,c_n; \bF)= \{(e_1,\dots,e_n) \in \mathbb{Z}^n\mid 1=\prod_{i=1}^n c_i^{e_i}\}.$$
 A difference ring/field $(\bA,\,\sigma)$ is said to be \emph{computable} if both $\bA$ and the function $\sigma:\bA\to\bA$ are computable. 
\end{define}

\begin{remark}\label{Remark:AlgRatOribt}
Any  field of multivariate rational functions over an algebraic number field is orbit-computable~\cite{Ge:93a,Schneider:2005}.
\end{remark}

\begin{convention}
Throughout the paper,
\begin{itemize}
\item [(i)] all rings are commutative with 1 and have characteristic $0$,
 \item[(ii)] whenever $U$ as given in Definition \ref{DEF:cr} is a difference ring with constant 
 field $C$,  $U$ is regarded as a $C$-linear space and every operator on $U$ is understood to be $C$-linear, unless otherwise specified.
 \end{itemize}
 \end{convention}

 \subsection{$R\Pi\Sigma^*$-towers and basic properties}\label{Sec:RPS-extensions}

 Let $(\bA,\,\sigma)$ be a difference ring. We consider three types of difference ring extensions. First, given $a\in \bA$, let $\bA[t]$ be a polynomial ring over $\bA$. The automorphism $\sigma$ of $\bA$ extends uniquely to a ring
automorphism of $\bA[t]$ such that $\sigma(t)=t+a$. The resulting difference ring extension $(\bA[t],\,\sigma)$ of $(\bA,\,\sigma)$ is called a \emph{sum extension} (in short \emph{$S$-extension}), and $t$ is called an \emph{$S$-monomial (over $\bA$)}. Similarly, given $b\in\bA^*$, let $\bA[t,t^{-1}]$ be a Laurent polynomial ring over $\bA$. The automorphism $\sigma$ extends uniquely to a ring automorphism of $\bA[t,t^{-1}]$ such that $\sigma(t)=bt$. In particular, $\sigma(t^{-1})=b^{-1}t^{-1}$. The resulting difference ring extension $(\bA[t,t^{-1}],\,\sigma)$ of $(\bA,\,\sigma)$ is called a \emph{product extension} (in short \emph{$P$-extension}), and $t$ is called a \emph{$P$-monomial (over $\bA$)}. Finally, we introduce algebraic extensions in which one can model $\alpha^n$ for a root of unity $\alpha$. To be more precise, let $\lambda \in \bN$ with $\lambda>1$ and $\alpha \in \bA$ be a $\lambda$th root of unity, i.e., $\alpha^{\lambda}=1$. According to \cite[Lemma~2.6]{Schn2016}, there is (up to a difference ring isomorphism) a unique difference ring extension $(\bA[t],\, \sigma)$ of $(\bA, \, \sigma)$,  where $\bA[t]$ is generated by an element $t\notin \bA$ subject to the relation $t^{\lambda}=1$ and $\sigma(t)=\alpha t$. Such an extension is called an \emph{algebraic extension} (in short \emph{$A$-extension}) \emph{of order $\lambda$}, and the generator $t$ is also called an \emph{$A$-monomial (over $\bA$)}. In this case, 
$\ord(t)=\lambda$. By construction, $\bA[t]$ is isomorphic to the quotient ring $\bA[x]/\langle x^\lambda-1\rangle$, where $x$ is an indeterminate over $\bA$ and $t$ corresponds to the equivalence class of $x$. Since $x^\lambda-1$ is monic of degree $\lambda$, every element of $\bA[t]$ can be uniquely written as $\sum_{i=0}^{\lambda-1}f_it^i$, where $f_i \in \bA$; for further details, we also refer to the recent survey article~\cite{Schneider:26}.

For convenience, we introduce the following notion. Let $(\bA,\, \sigma)$ be a difference and let $t$ be one of the generators introduced above. Then $\bA\langle t \rangle$ denotes the polynomial ring $\bA[t]$ if $t$ is an $S$-monomial, $\bA\langle t \rangle$ stands for the ring of Laurent polynomials $\bA[t,t^{-1}]$ if $t$ is a $P$-monomial; and $\bA\langle t \rangle$ denotes the ring $\bA[t]$ with $t\notin \bA$ subject to the relation $t^{\lambda}=1$ if $t$ is an $A$-monomial with  $\ord(t)=\lambda$. We call a difference ring extension $(\bA \langle t \rangle, \sigma)$ of $(\bA, \, \sigma)$  an {\em $AP$-extension} if it is an $A$ or $P$-extension; a {\em $PS$-extension} if it is a $P$ or an $S$-extension; an {\em $APS$-extension} if it is an $A$-, $P$- or $S$-extension.

Note that $(\bA \langle t \rangle, \sigma)$ is computable if $(\bA, \, \sigma)$ is computable. Later, we will use the following lemma that can be verified by a straightforward calculation.

\begin{lemma}\label{LM:RPiSigma+term}
Let $(\bA, \, \sigma)$ be a difference ring and $\ell \in \bZ^{+}$.
\begin{itemize}
  \item [(i)] Assume that $t$ is an $S$-monomial over $\bA$ with $\sigma(t)=t+a$ for some $a \in \bA \backslash \{0\}$. Then $\sigma^{\ell}(t)=t+\sum_{i=0}^{\ell-1}\sigma^i(a)$.
  \item [(ii)] Assume that $t$ is an $AP$-monomial over $\bA$ with $\sigma(t)=bt$ for some $b \in \bA^*$. Then $\sigma^{\ell}(t)=b^{\sigma, \ell}t$.
\end{itemize}
\end{lemma}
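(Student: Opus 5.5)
Both parts follow by induction on $\ell$, using only that $\sigma$ is a ring homomorphism on $\bA\langle t\rangle$ which restricts to the given automorphism of $\bA$.

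For (i), the base case $\ell=1$ is the defining relation $\sigma(t)=t+a$, and the sum $\sum_{i=0}^{0}\sigma^i(a)$ equals $a$. For the inductive step, assume $\sigma^{\ell}(t)=t+\sum_{i=0}^{\ell-1}\sigma^i(a)$. Applying $\sigma$ and using additivity gives
\[
\sigma^{\ell+1}(t)=\sigma(t)+\sum_{i=0}^{\ell-1}\sigma^{i+1}(a)=t+a+\sum_{i=1}^{\ell}\sigma^{i}(a)=t+\sum_{i=0}^{\ell}\sigma^i(a).
\]
This is exactly the claim for $\ell+1$.

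For (ii), the base case $\ell=1$ is $\sigma(t)=bt$ with $b^{\sigma,1}=b$. For the inductive step, assume $\sigma^{\ell}(t)=b^{\sigma,\ell}t$. Applying $\sigma$ and using multiplicativity gives $\sigma^{\ell+1}(t)=\sigma(b^{\sigma,\ell})\,\sigma(t)=\sigma(b^{\sigma,\ell})\,b\,t$. By Remark~\ref{RE:sigma+notation}, $b\,\sigma(b^{\sigma,\ell})=b^{\sigma,\ell+1}$, which closes the induction. For an $A$-monomial the relation $t^\lambda=1$ plays no role here, since $\sigma$ is a well-defined ring automorphism of $\bA\langle t\rangle$ by construction.

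There is no real obstacle; the only point needing care is the index shift in the sums and products. The hypothesis $a\neq 0$ in (i) is not used in the computation.
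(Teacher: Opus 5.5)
Your proposal is correct and is the same straightforward calculation the paper has in mind (it gives no written proof beyond ``a straightforward calculation''): induction on $\ell$, using the ring-homomorphism property of $\sigma$ and, for (ii), Remark~\ref{RE:sigma+notation}. You are also right that the hypothesis $a\neq 0$ plays no role in (i).
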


In what follows, we often take the ground difference ring to be the rational difference field $(\bA,\,\sigma)=(C(x),\, \sigma)$. Here, $C(x)$ is the rational function field over $C$, equipped with the automorphism $\sigma:C(x)\to C(x)$ defined by $\sigma(x)=x+1$ and $\sigma(c)=c$ for all $c\in C$.  Clearly, $\const(C(x),\,\sigma)=C$.
More generally, we consider, e.g., the  broader class of Karr's $\Pi\Sigma^*$-fields~\cite{Karr1981}.

\begin{define}
	 A difference field $(\bF,\,\sigma)$ with constant field $C$ is called {\em a $\Pi\Sigma^*$-field } over $C$ if $(\bF,\sigma)$ is built by a tower of difference field extensions $\bF=C(t_1)\dots(t_e)$ with the same constant field
$C$, where, for every $i\in [e]$, we have 
	\begin{enumerate}
		\item [(i)] $\sigma(t_i)/t_i\in C(t_1)\dots(t_{i-1})^*$~($t_i$ is called a {\em $\Pi$-field monomial}), or
		\item [(ii)] $\sigma(t_i)-t_i\in C(t_1)\dots(t_{i-1})$~($t_i$ is called a {\em $\Sigma^*$-field monomial}).
	\end{enumerate}
\end{define}
Building upon  such a difference field or ring $(\bA,\,\sigma)$, we focus on  towers of $APS$-extensions and their refined versions, namely $R\Pi\Sigma^*$- extensions introduced in \cite[Definition 2.10]{Schn2016}. 

\begin{define}\label{DEF:RPiSigma}
Let $(\bA,\, \sigma)$ be a difference ring and $(\bE,\, \sigma)$ be a tower of difference ring extensions of $(\bA,\, \sigma)$,
\begin{equation} \label{EQ:tower}
	\begin{array}{cccccccc}
		\bA=\bA_0       & \leq &  \bA_1      & \leq & \cdots         & \leq     & \bA_n=\bE  \\
		&         & \shortparallel &         &            &             & \shortparallel \\
		&         &  \bA_0\langle t_1 \rangle &         &                   &  & \bA_{n-1}\langle t_n \rangle.
	\end{array}
\end{equation}
If $t_i$ is an $A$-/$P$-/$S$-/$AP$-/$PS$-/$APS$-monomial over $\bA_{i-1}$ for every $i\in [n]$, then $(\bE,\, \sigma)$ is called an {\em $A$-/$P$-/$S$-/$AP$-/$PS$-/$APS$-tower} over $(\bA,\,\sigma)$.
\noindent A difference ring extension $(\bA\langle t \rangle, \, \sigma)$ is called an {\em  $R$-/$\Pi$-/$\Sigma^*$-/$R\Pi$-/$\Pi\Sigma^*$-/$R\Pi\Sigma^*$- extension} of $(\bA, \, \sigma)$ if $t$ is an $A$-/$P$-/$S$-/$AP$-/$PS$-/$APS$-monomial and $\const(\bA\langle t \rangle, \sigma)=\const(\bA, \, \sigma)$. Moreover, the difference ring extension $\bE$ as given in \eqref{EQ:tower} is called an {\em  $R$-/$\Pi$-/$\Sigma^*$-/$R\Pi$-/$\Pi\Sigma^*$-/$R\Pi\Sigma^*$-tower} over $(\bA, \, \sigma)$~(or {\em $R$-/$\Pi$-/$\Sigma^*$-/$R\Pi$-/$\Pi\Sigma^*$-/$R\Pi\Sigma^*$-tower} over $(\bA, \, \sigma)$) if $\const(\bE,\, \sigma)=\const(\bA, \, \sigma)$. 
\end{define}

Next, we state the necessary and sufficient conditions for an $APS$-extension to be an $R\Pi\Sigma^*$-extension \cite[Theorem~2.2]{Schn2016}, referring to~\cite{Karr1981} for the field setup.

\begin{thm}\label{THM:testingRPiSigma-monomials}
Let $(\bA, \, \sigma)$ be a difference ring, where the set of constants forms a field. Then the following assertions hold.
\begin{itemize}
  \item [(i)] Let $(\bA[t],\, \sigma)$ be an $S$-extension of $(\bA, \, \sigma)$ with $\sigma(t)=t+a$ for $a\in \bA$. Then it is an $\Sigma^*$-extension if and only if there is
  no $g\in \bA$ such that $\sigma(g)=g+a$.
  \item [(ii)] Let $(\bA[t, t^{-1}],\, \sigma)$ be a $P$-extension of $(\bA, \, \sigma)$ with $\sigma(t)=b\,t$ for $b\in \bA^*$. Then it is an $\Pi$-extension if and only if there are no $g\in \bA\backslash \{0\}$ and $m \in \bZ\backslash \{0\}$ such that $\sigma(g)=b^mg$.
  \item [(iii)] Let $(\bA[t], \,\sigma)$ be an $A$-extension of $(\bA, \, \sigma)$ of order $\lambda>1$ with $\sigma(t)=\alpha t$ for $\alpha \in \bA^*$. Then it is an $R$-extension if and only if there are no $g\in \bA\backslash \{0\}$ and $m \in [\lambda-1]$ such that $\sigma(g)=\alpha^mg$. If it is an $R$-extension, then $\alpha$ is primitive, i.e., $\ord(\alpha)=\lambda$.
\end{itemize} 
\end{thm}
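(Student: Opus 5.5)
The plan is to compare coefficients in a candidate constant written in the canonical basis of each extension; one direction of each equivalence builds a constant outside $\bA$ from a solution $g$, the other shows such a $g$ must exist. Let $C=\const(\bA,\sigma)$, a field by assumption.

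\emph{Part (i).} Suppose $g\in\bA$ satisfies $\sigma(g)=g+a$. Then $t-g\notin\bA$ is a new constant, so the extension is not a $\Sigma^*$-extension. Conversely, suppose $c=\sum_{i=0}^d c_it^i$ is a constant with $d\ge1$ and $c_d\neq0$. Comparing the coefficients of $t^d$ in $\sigma(c)=c$ gives $\sigma(c_d)=c_d$, so $c_d\in C^*$. We may therefore assume $c_d=1$. Comparing the coefficients of $t^{d-1}$ then gives $\sigma(c_{d-1})+d\,a=c_{d-1}$. Since the characteristic is $0$, we can divide by $d$, and $g=-c_{d-1}/d$ satisfies $\sigma(g)=g+a$. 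This contradicts the hypothesis, so there is no new constant.

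\emph{Part (ii).} Suppose $\sigma(g)=b^mg$ with $g\neq0$ and $m\neq0$. Then $t^{-m}g$ is a constant, and it lies outside $\bA$ because $m\neq0$ and $g\neq0$. Conversely, let $c=\sum_i c_it^i$ be a constant of the Laurent polynomial ring that is not in $\bA$. Comparing coefficients in $\sigma(c)=c$ gives $\sigma(c_i)\,b^i=c_i$ for every $i$. Choose $i\neq0$ with $c_i\neq0$. Then $g=c_i$ satisfies $\sigma(g)=b^{-i}g$, which is the forbidden relation with $m=-i$. No integral-domain assumption is needed here, because the terms decouple degree by degree.

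\emph{Part (iii).} Write $c=\sum_{i=0}^{\lambda-1}c_it^i$, using the unique representation noted before Lemma~\ref{LM:RPiSigma+term}. Then $\sigma(c)=\sum_i\sigma(c_i)\alpha^it^i$. So $c$ is a constant if and only if $\sigma(c_i)\alpha^i=c_i$ for every $i$, that is, $\sigma(c_i)=\alpha^{-i}c_i=\alpha^{\lambda-i}c_i$. A constant outside $\bA$ exists exactly when some $i\in[\lambda-1]$ has a nonzero $c_i$ with this property. Putting $m=\lambda-i$, which ranges over $[\lambda-1]$, this is exactly the stated condition. In the other direction, a solution $g$ with $m\in[\lambda-1]$ gives the constant $g\,t^{\lambda-m}$.

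\emph{Primitivity.} Suppose $\ord(\alpha)=\mu<\lambda$; since $\alpha^\lambda=1$, $\mu$ divides $\lambda$. Take $g=1$ and $m=\mu\in[\lambda-1]$. Then $\sigma(1)=1=\alpha^\mu\cdot1$, so $t^{\lambda-\mu}$ is a nonconstant-free witness: it is a constant not in $\bA$. This contradicts the $R$-extension property, so $\alpha$ is primitive.

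The only genuinely delicate step is part (i). There the leading coefficient must be a unit, which holds because $C$ is a field, and one must divide by $d$, which is why characteristic $0$ is needed. The remaining parts are direct coefficient comparisons.
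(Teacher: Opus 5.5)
Your proof is correct. The paper does not prove this result itself but quotes it from \cite[Theorem~2.2]{Schn2016} (and \cite{Karr1981} in the field case), and your degree-by-degree coefficient comparison, using the $t^{d}$ and $t^{d-1}$ coefficients for the $\Sigma^*$-case, is the standard argument behind these criteria; the one step to sharpen is the division by $d$ in part (i), which is justified because $d\cdot 1$ is a nonzero constant (characteristic $0$) and the constants form a field, so $1/d\in\const(\bA,\sigma)\subseteq\bA$, whereas characteristic $0$ alone would not make $d$ invertible in a ring.
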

For algorithmic versions of these criteria, see \cite[\S 2.2]{Schn2016}.
   
   \begin{example}
 Let $\bF=\bQ(x)$ be the rational difference field with  $\sigma(x)=x+1$. 
 \begin{itemize}
   \item [(i)] Take the unimonomial ring extension $(\bF[t],\sigma)$ of $(\bF,\sigma)$ with $\sigma(t)=t+\frac{1}{x+1}$. Then $t$, which models the harmonic numbers $H_n:=\sum_{i=1}^{n}\frac{1}{i}$, is a $\Sigma^*$-monomial over $\bF$. 
   \item [(ii)] Let $(\bF[t, t^{-1}],\sigma)$ be a difference ring extension of $\bF$ with $\sigma(t)=(x+1)t$. Then $t$, which models $n!$, is a $\Pi$-monomial over $\bF$. 
   \item [(iii)] Take the algebraic extension $(\bF[t], \sigma)$ with $\sigma(t)=-t$ of order 2. Then it is an $R$-extension subject to the relation $t^2=1$. And $t$ represents  $(-1)^n$.
 \end{itemize}
 \end{example}

Later we will utilize the property that any $R\Pi\Sigma^*$-tower over an integral domain can be reordered as follows.

\begin{prop}\label{PROP:orderRPiSigma}
Let $(\bE,\, \sigma)$ be an $R\Pi\Sigma^*$-tower over a difference ring $(\bA, \, \sigma)$ which is an integral domain. Then it can be reordered to the form 
\begin{equation}\label{EQ:reoder+RPiSigma}
\bA\langle y_1 \rangle \ldots \langle y_{\ell} \rangle \langle p_1 \rangle \ldots \langle p_{m} \rangle  \langle s_1 \rangle \ldots \langle s_{n} \rangle,
\end{equation}
where the $y_i$ are $R$-monomials, the $t_i$ are $\Pi$-monomials and the $s_i$ are $\Sigma^*$-monomials.
\end{prop}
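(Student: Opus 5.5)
The plan is to reorder by a sequence of adjacent transpositions, in the spirit of a bubble sort. It suffices to show one thing: if $\bB\langle u\rangle\langle v\rangle$ is part of an $R\Pi\Sigma^*$-tower over the integral domain $\bA$, and $v$ has a type that must come before $u$'s type, then the two generators can be swapped. The required orderings are $v$ an $R$-monomial with $u$ a $\Pi$- or $\Sigma^*$-monomial, and $v$ a $\Pi$-monomial with $u$ a $\Sigma^*$-monomial. After the swap we must still have an $R\Pi\Sigma^*$-tower of the same monomial types, i.e.\ $\bB\langle v\rangle\langle u\rangle$ with $\sigma(v)$ lying in $\bB$ and $v$, $u$ of the same types. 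Since the underlying ring is the same and the constants of the whole tower are unchanged, the constant condition for the full tower is automatic. Constant-freeness of the intermediate steps follows as well: the constants of $\bB\langle v\rangle$ lie inside those of $\bB\langle v\rangle\langle u\rangle$, which equal $\const(\bB,\sigma)$. So the real content is that $\sigma(v)$ already lies in the smaller ring $\bB$. By induction over the number of inversions we then reach \eqref{EQ:reoder+RPiSigma}.

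\textbf{Case 1: $v$ is an $R$-monomial, $\sigma(v)=\alpha v$ with $\alpha\in\bB\langle u\rangle$ a root of unity.} We show $\alpha\in\bB$. If $u$ is a $\Sigma^*$-monomial, units of $\bB[u]$ are units of $\bB$ provided $\bB$ has no nilpotents. This uses reducedness of $\bB$, which I would verify: an $R\Pi\Sigma^*$-tower over an integral domain is reduced, as one sees inductively since $\bA[y]/(y^\lambda-1)$ in characteristic $0$ is reduced. Hence $\alpha\in\bB^*$. If $u$ is a $\Pi$-monomial, $\alpha$ is a unit of $\bB[u,u^{-1}]$ satisfying $\alpha^\lambda=1$. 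Write $\alpha=\sum_i a_iu^i$. After splitting $\bB$ via idempotents into reduced components, $\alpha$ is locally $c\,u^k$, and then $\alpha^\lambda=1$ forces $k=0$. This is the main obstacle, because $\bB$ may contain zero divisors coming from earlier $R$-monomials. To avoid that, I will first move all $R$-monomials to the very bottom, inductively, so that $\bB$ is a $\Pi\Sigma^*$-tower over $\bA$ and hence an integral domain. Over an integral domain the units of a Laurent polynomial ring are exactly the monomials $c\,u^k$ with $c\in\bB^*$. Then $\alpha^\lambda=1$ gives $k=0$, so $\alpha\in\bB$.

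\textbf{Case 2: $v$ is a $\Pi$-monomial, $\sigma(v)=bv$ with $b\in\bB[u]^*$, $u$ a $\Sigma^*$-monomial.} This needs $b\in\bB^*$. By this stage $\bB$ contains only $R$- and $\Pi$-monomials below, so $\bB$ is reduced (same argument as above). Units of a polynomial ring over a reduced ring are constant, hence $b\in\bB^*$.

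After either swap, $u$ is still of the same type over $\bB\langle v\rangle$, since $\sigma(u)$ lies in $\bB\subseteq\bB\langle v\rangle$. Also, $\bB\langle v\rangle$ is a genuine $A$/$P$-extension of $\bB$: it is a free $\bB$-module with the right basis, because it sits inside the free $\bB\langle u\rangle$-module. The orders of processing are therefore as follows. First push the $R$-monomials down using Case~1, which always happens over integral domains. Then sort the $\Pi$-monomials below the $\Sigma^*$-monomials using Case~2.
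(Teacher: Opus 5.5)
Your adjacent-transposition scheme is fine. Case~2 and the $\Sigma^*$-subcase of Case~1 also work once the intermediate rings are known to be reduced. The gap is the $\Pi$-subcase of Case~1.

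Your remedy of moving the $R$-monomials down ``so that $\mathbb{B}$ is a $\Pi\Sigma^*$-tower over $\bA$'' only works for the lowest $R$-monomial. Once $y_1$ sits at the bottom, bringing a later $R$-monomial $y_2$ below a $\Pi$-monomial $p$, as in $\bA\langle y_1\rangle\langle p\rangle\langle y_2\rangle$, is a swap over $\mathbb{B}=\bA\langle y_1\rangle$, which has zero divisors. In general no cleverer order avoids this. The multiplicand $\sigma(y_2)/y_2$ may genuinely involve $y_1$, as in the non-simple Example~\ref{EX:CheckRmonomials} where $\sigma(y_2)/y_2=e_0+\sqrt{-1}\,e_1$, so $y_2$ must stay above $y_1$. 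Over such a base your key fact is false. For $\ord(y_1)=2$, take $e_0=\frac{1-y_1}{2}$ and $e_1=\frac{1+y_1}{2}$. Then $e_0p+e_1$ is a unit of $\mathbb{B}[p,p^{-1}]$, with inverse $e_0p^{-1}+e_1$, and it is not of the form $c\,p^k$. So as written you do not establish $\alpha\in\mathbb{B}$ in exactly the non-simple cases the proposition is meant to cover.

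The missing step is that reducedness, which you already assert, suffices for the descent. Suppose $\mathbb{B}$ is reduced and $\alpha=\sum_i a_iu^i\in\mathbb{B}[u,u^{-1}]$ satisfies $\alpha^\lambda=1$. For each prime $\mathfrak{p}$ of $\mathbb{B}$, the image of $\alpha$ in the domain $(\mathbb{B}/\mathfrak{p})[u,u^{-1}]$ is a unit, hence equals $\bar c\,u^k$, and $\alpha^\lambda=1$ forces $k=0$. Thus every $a_i$ with $i\neq 0$ lies in all primes, is nilpotent, and therefore vanishes, so $\alpha\in\mathbb{B}$. Your idempotent-splitting sketch would additionally need $\mathbb{B}$ to be a finite product of domains, which you do not justify; the prime-ideal argument avoids that. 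With this lemma the processing order no longer matters.

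This is precisely the point the paper addresses. It reuses the field-case reordering argument, whose only use of the field property is reducedness of the $R$-part, and supplies that via its Claim that an $R$-tower over an integral domain has no nonzero nilpotents. As a plain ring, every $APS$-tower is $\bA[y_1,\ldots,y_\ell]/(y_1^{\lambda_1}-1,\ldots,y_\ell^{\lambda_\ell}-1)$ with Laurent and polynomial variables adjoined. Hence the Claim also gives reducedness of all your interleaved intermediate rings.

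Your own one-line justification of reducedness needs $\lambda$ to be invertible, not merely characteristic~$0$. The reduced characteristic-$0$ ring $\bZ\times\mathbb{F}_2$ gives $(\bZ\times\mathbb{F}_2)[y]/(y^2-1)$, which contains the nonzero nilpotent $(0,y+1)$. This is why the paper passes to the quotient field of $\bA$.
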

\begin{proof}
The proof for the case where $\bA$ is a field is given in~\cite[Lemma~4.13]{Schn2016}. Since this argument does not rely on the invertibility of elements, it carries over straightforwardly to the setting where $\bA$ is an integral domain by replacing \cite[Corollary~4.3]{Schn2016} with the following claim.

\noindent{\em Claim.} Let $(\bA\langle y_1\rangle \ldots \langle y_\ell \rangle,\sigma)$ be an $R$-tower over a difference ring $(\bA,\,\sigma)$ which is integral. Then it contains no nonzero nilpotent elements.

\noindent{\em Proof of the claim.}  Set $\lambda_i=\ord(y_i)$ for $i\in[\ell]$. Let $H$ be the polynomial ring $\bA[ y_1] \ldots [y_\ell]$ and $I\subset H$ be the ideal generated by $y_1^{\lambda_1}-1,\ldots,y_\ell^{\lambda_\ell}-1$. Then $\bA\langle y_1\rangle \ldots \langle y_\ell \rangle$ is isomorphic to the quotient ring $H/I$. It remains to show that $I$ is radical. 
Assume that $h\in H$ satisfies $h^n\in I$.
Let $Q(\bA)$ denote the quotient field of $\bA$ and let $I_Q\subset Q(\bA)[ y_1] \ldots [y_\ell]$ be the ideal generated by  $y_1^{\lambda_1}-1,\ldots,y_\ell^{\lambda_\ell}-1$. It is radical by \cite[Lemma~4.2]{Schn2016}. Note that $h^n\in I \subset I_Q$. So $h\in I_Q$.  Hence, there are $g_1,\ldots,g_r\in Q(\bA)[ y_1] \ldots [y_\ell]$ such that $h=g_1 (y_1^{\lambda_1}-1)+\cdots +g_\ell (y_\ell^{\lambda_\ell}-1)$. Since the generators of $I_Q$ form a Gröbner basis of $I_Q$ with respect to the lexicographic order and are monic in $H$, we get $g_1,\ldots,g_\ell\in H$, which implies that $h\in I$. Thus, $I$ is radical. 
\end{proof}

In addition, we will also rely on the fact that the ground difference ring or field is constant-stable.
 
 \begin{define}\label{DEF:stable}
  A difference ring $(\bA, \, \sigma)$ is said to be {\em constant-stable} if $\const(\bA,\, \sigma^k)=\const(\bA, \, \sigma)$ for every $k \in \bZ^{+}$. 
 \end{define}
 \begin{remark}\label{RE:constant-stable}
If the difference ring $(\bA, \, \sigma)$ is constant-stable, then  $(\bA, \sigma^{\lambda})$ is constant-stable for every $\lambda \in \bZ^{+}$. 
\end{remark}
 
 For instance, the rational difference field is constant-stable.
 More generally, Karr's $\Pi\Sigma^*$-fields over a constant field $C$ are constant-stable; for a proof of the field version, see, e.g., \cite[Proposition~32]{Schn2020}.
 Below we will rely in particular on the following ring version.
 
\begin{lemma}\label{Lemma:ConstantStable}
Let $(\bE, \,\sigma)$ be a $\Pi\Sigma^*$-tower over a difference ring $(\bA, \,\sigma)$. If $(\bA,\, \sigma)$ is a constant-stable integral domain, then so is $(\bE,\,\sigma)$.
\end{lemma}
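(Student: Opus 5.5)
By induction on the length of the tower it suffices to treat a single $\Pi\Sigma^*$-extension $(\bA\langle t\rangle,\sigma)$ of a constant-stable integral domain $(\bA,\sigma)$. The plan is to show two things: that $\bA\langle t\rangle$ is again an integral domain, and that $\const(\bA\langle t\rangle,\sigma^k)=\const(\bA,\sigma)$ for every $k\in\bZ^+$. This gives the claim, since the constants of each intermediate ring $\bA_i$ coincide with those of $\bA$ by the definition of a $\Pi\Sigma^*$-tower.

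Integrality is immediate, because a polynomial or Laurent polynomial ring over an integral domain is again an integral domain. For the constants, fix $k$ and consider $(\bA\langle t\rangle,\sigma^k)$. By Lemma~\ref{LM:RPiSigma+term}, $\sigma^k(t)=t+a'$ with $a'=\sum_{i=0}^{k-1}\sigma^i(a)$ in the $S$-case, and $\sigma^k(t)=b'\,t$ with $b'=b^{\sigma,k}\in\bA^*$ in the $P$-case. Hence $(\bA\langle t\rangle,\sigma^k)$ is an $S$-, resp.\ $P$-extension of $(\bA,\sigma^k)$, whose constants are $\const(\bA,\sigma^k)=\const(\bA,\sigma)=:C$, a field. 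Now let $f=\sum_i f_it^i$ with $\sigma^k(f)=f$.

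\emph{$P$-case.} Comparing coefficients gives $\sigma^k(f_i)\,b'^i=f_i$ for all $i$. For $i=0$ this yields $f_0\in C$. Suppose some $f_m\neq0$ with $m\neq 0$; then $\sigma^k(f_m)=b'^{-m}f_m$. From this I must derive a contradiction with the $\Pi$-property over $(\bA,\sigma)$ (Theorem~\ref{THM:testingRPiSigma-monomials}(ii)). Theorem~\ref{THM:testingRPiSigma-monomials} is stated for a ring whose constants form a field, and the constants of $(\bA,\sigma^k)$ do form the field $C$, so the criterion is available. The plan is to build from $f_m$ a nonzero $g$ with $\sigma(g)=b^{m'}g$ for some $m'\neq0$. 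Setting $u:=f_m t^m$ gives $\sigma^k(u)=u$, so $u\in\const(\bA\langle t\rangle,\sigma^k)$. Then consider the norm-like product $N=\prod_{j=0}^{k-1}\sigma^j(u)$, which satisfies $\sigma(N)=N$ because $\sigma^k(u)=u$. Thus $N\in C$. On the other hand, $N=\big(\prod_j\sigma^j(f_m)\big)\cdot\big(\prod_j\sigma^j(t)^m\big)$ is a nonzero multiple of $t^{km}$, using that $\bA$ is a domain. Since $t^{km}\notin\bA$ when $km\neq0$, this contradicts $N\in C\subseteq\bA$. Hence $f=f_0\in C$.

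\emph{$S$-case.} Write $f=\sum_{i=0}^d f_it^i$ with $d\ge1$ and $f_d\neq0$. The top coefficient gives $\sigma^k(f_d)=f_d$, so $f_d\in C^*$, and after normalizing we may take $f_d=1$. The coefficient of $t^{d-1}$ gives $\sigma^k(f_{d-1})+d\,a'=f_{d-1}$, so $g:=-f_{d-1}/d$ satisfies $\sigma^k(g)-g=a'$. I then need to upgrade this to a solution of $\sigma(h)-h=a$ in $\bA$, contradicting Theorem~\ref{THM:testingRPiSigma-monomials}(i). Set $w:=\sigma(g)-g-a$. Then $\sigma^k(w)-w=\sigma(a')-a'-\sigma^k(a)+a=0$, so $w\in\const(\bA,\sigma^k)=C$. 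The element $h:=g-\frac{w}{1}\,t$ does not lie in $\bA$, so instead I will use $h:=g$ together with an averaging argument. Taking $h=\frac1k\sum_{j=0}^{k-1}\sigma^j(g)$, a telescoping computation gives $\sigma(h)-h=a+w$ with $w\in C$, and then $w$ has to be eliminated. An alternative is to apply the same norm/trace trick to $f$ itself: $F=\sum_{j=0}^{k-1}\sigma^j(f)$ satisfies $\sigma(F)=F$ and $F=kt^d+\dots$, so $F\in\const(\bA\langle t\rangle,\sigma)=C$. This contradicts $d\ge1$, and the same trace trick settles the $P$-case as well.

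I expect the only delicate point to be this final step of passing from $\sigma^k$-constants to $\sigma$-constants. The trace (for $S$) and norm (for $P$) argument resolves it cleanly, using that $\bA$ has characteristic $0$ and is a domain, so that leading terms do not cancel.
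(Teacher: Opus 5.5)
Your final argument is correct and is essentially the paper's proof. You use the norm of the $\sigma^k$-invariant monomials $f_mt^m$ in the $P$-case and the trace $\sum_{j=0}^{k-1}\sigma^j(f)$ in the $S$-case, and each yields a $\sigma$-constant outside $\bA$. The paper simply applies the norm $\prod_{i=0}^{k-1}\sigma^i(f)$ to the whole of $f$ in both cases at once, so it needs neither the coefficient comparison nor the case split.

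Two remarks in the detours you abandon are false and should be deleted, though your final proof does not use them:
\begin{itemize}
\item For $h=\frac1k\sum_{j}\sigma^j(g)$ one gets $\sigma(h)-h=a'/k$, not $a+w$. That route does work, however: summing $\sigma^j(w)=w$ over $j$ gives $kw=\sigma^k(g)-g-a'=0$, so $w=0$ and $g$ itself solves $\sigma(g)-g=a$, contradicting the $\Sigma^*$-property.
\item The trace alone does not settle the $P$-case. Its $t^m$-coefficient $\sum_{j}\sigma^j(f_m)(b^{\sigma,j})^m$ has no a priori reason to be nonzero, whereas the leading term of the norm cannot cancel in a domain.
\end{itemize}
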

\begin{proof} 
	We prove the lemma by induction on the number of $\Pi\Sigma^*$-monomials. If $\bE=\bA$, the statement clearly holds. Otherwise, suppose that the statement has been proven for a $\Pi\Sigma^*$-tower $(\bE,\,\sigma)$ over $(\bA,\,\sigma)$ and let $t$ be  $\Pi\Sigma^*$-monomial over $\bE$.	
	Clearly, $\bE\langle t\rangle$ is an integral domain. Hence
	it suffices to show that $\const(\bE\langle t\rangle, \sigma^k)=\const(\bE,\,\sigma)$ for all $k>1$. The conclusion that $\const(\bE,\,\sigma) \subset \const(\bE\langle t\rangle, \sigma^k)$ is immediate. Conversely, let $f \in \bE \langle t \rangle$ satisfy $\sigma^k(f)=f$ for some $k>1$. Suppose that $f \notin \bE$. Define $h:=f^{\sigma,k}=\prod_{i=0}^{k-1} \sigma^i(f)$. 
	Since $\deg_t(\sigma^i(f))=\deg_t(f)$ for every $i \in \bN$ by Lemma \ref{LM:RPiSigma+term}, it follows $h \notin\bE$. On the other hand,
	$f\sigma(h)=f^{\sigma, k+1}=h\sigma^{k}(f)=hf.$
	Since $\bE$ is an integral domain, we have $\sigma(h)=h$. Together with $\const(\bE\langle t \rangle, \sigma)=\const(\bE,\,\sigma)$, $h \in \const(\bE,\,\sigma)$. So $h \in \bE$, a contradiction. Thus, $f \in \bE$. Furthermore, $f \in \const(\bE,\,\sigma)$ because $\bE$ is constant-stable. Consequently,
	$\const(\bE\langle t\rangle, \sigma^k)=\const(\bE,\,\sigma)$, which completes the induction step.
\end{proof}

Assuming the ground difference ring is integral and constant-stable, we provide a simpler criterion for an $A$-extension to be an $R$-extension (cf. Theorem~\ref{THM:testingRPiSigma-monomials} (iii)).
 
 \begin{prop}\label{PROP:R-monomial+primitive}
 Let $(\bA, \, \sigma)$ be a difference ring and $t$ be an $A$-monomial over $\bA$ of order $\lambda$. Assume that $\bA$ is a constant-stable integral domain. Then $\alpha:=\sigma(t)/t \in \const(\bA, \, \sigma)$. In particular, $t$ is an $R$-monomial if and only if $\alpha$ is a $\lambda$th primitive root of unity in $\const(\bA, \, \sigma)$, i.e.,\ $\ord(\alpha)=\lambda$.
 \end{prop}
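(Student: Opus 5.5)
Let $\alpha:=\sigma(t)/t$. Since $\sigma(t)=\alpha t$ with $\alpha\in\bA$, and $t$ is a unit ($t^{\lambda}=1$), the element $\alpha$ is well-defined in $\bA$. First, $\alpha$ is a $\lambda$th root of unity. Applying $\sigma$ to $t^\lambda=1$ gives $\alpha^\lambda t^\lambda=1$, hence $\alpha^\lambda=1$.

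To show $\alpha\in\const(\bA,\sigma)$, I would use that $\sigma(\alpha)$ is also a $\lambda$th root of unity in $\bA$, since $\sigma(\alpha)^\lambda=\sigma(\alpha^\lambda)=1$. In an integral domain, the polynomial $x^\lambda-1$ has at most $\lambda$ roots, so the set $\mu_\lambda(\bA)$ of $\lambda$th roots of unity in $\bA$ is finite. The automorphism $\sigma$ restricts to a bijection of this finite set, so some power $\sigma^k$ with $k\in\bZ^{+}$ fixes every element of it; for instance, $k$ can be taken to be the order of this permutation. Then $\alpha\in\const(\bA,\sigma^k)=\const(\bA,\sigma)$ by constant-stability. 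This is the step where both hypotheses on $\bA$ enter, and I expect it to be the main point of the argument, although it is short.

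For the equivalence, I would use Theorem~\ref{THM:testingRPiSigma-monomials}(iii), whose constants hypothesis must be ensured: I will take $\const(\bA,\sigma)$ to be a field (as in the setting of the paper). If $t$ is an $R$-monomial, that theorem already gives $\ord(\alpha)=\lambda$, and $\alpha$ is constant by the above.

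Conversely, suppose $\ord(\alpha)=\lambda$. Assume there were $g\in\bA\setminus\{0\}$ and $m\in[\lambda-1]$ with $\sigma(g)=\alpha^m g$. Since $\alpha$ is constant, $\sigma^j(g)=\alpha^{jm}g$ for all $j$. Set $k:=\lambda/\gcd(m,\lambda)$, so that $\alpha^{km}=1$. Then $\sigma^k(g)=g$, and constant-stability gives $\sigma(g)=g$. Hence $(\alpha^m-1)g=0$. As $\bA$ is an integral domain and $g\neq 0$, this forces $\alpha^m=1$, contradicting $\ord(\alpha)=\lambda>m$. By Theorem~\ref{THM:testingRPiSigma-monomials}(iii), $t$ is therefore an $R$-monomial.
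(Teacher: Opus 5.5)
Your proof that $\alpha$ is constant is the same as the paper's. The paper applies pigeonhole to the orbit $\sigma^i(\alpha)$ among the finitely many roots of $x^n-1$ to get $\alpha\in\const(\bA,\sigma^{i-j})$, which is your permutation-order argument in another form, and then uses constant-stability.

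For the equivalence the routes differ. The paper only invokes a ``mild generalization'' of \cite[Proposition~2.20]{Schn2017} from fields to integral domains. You give an explicit argument instead, and your computation is correct: $\sigma^j(g)=\alpha^{jm}g$, then $\sigma^k(g)=g$, then constant-stability, then cancellation in the domain.

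The weak point is the detour through Theorem~\ref{THM:testingRPiSigma-monomials}(iii). It forces you to add the hypothesis that $\const(\bA,\sigma)$ is a field, which the proposition does not assume. For example, $\bZ$ with trivial $\sigma$ is a constant-stable domain whose constants do not form a field, yet $t^2=1$, $\sigma(t)=-t$ is an $A$-extension of it. The extra hypothesis is harmless for the paper's later uses, where $C$ is always a field, but it narrows the statement.

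You do not need the theorem at all:
\begin{itemize}
\item If $d:=\ord(\alpha)<\lambda$, then $t^d\notin\bA$ and $\sigma(t^d)=\alpha^d t^d=t^d$, so $t$ is not an $R$-monomial.
\item If $\ord(\alpha)=\lambda$ and $f=\sum_{i=0}^{\lambda-1}f_it^i$ is constant, comparing coefficients gives $\sigma(f_i)=\alpha^{\lambda-i}f_i$. Your own argument with $m=\lambda-i\in[\lambda-1]$ then forces $f_i=0$ for $i\ge 1$, so $f=f_0\in\const(\bA,\sigma)$.
\end{itemize}
With this replacement your proof is self-contained and proves the proposition in the generality stated.
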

 \begin{proof} Since $\alpha$ is a root of unity in $\bA$, there is $n \in \bN$ such that $\alpha^n=1$. For every $i \in \bZ$, applying $\sigma^i$ to the above equation yields $(\sigma^i(\alpha))^n=1$. Let $x$ be an indeterminate over $\bA$. Then $\alpha$, $\sigma(\alpha)$, $\ldots$, $\sigma^i(\alpha)$, $\ldots$ are roots of $x^n-1$. Note that $\bA$ is an integral domain. So $x^n-1$ has at most $n$ roots in $\bA$. Hence, there are $i,j \in \bZ$ with $i >j$ such that $\sigma^i(\alpha)=\sigma^j(\alpha)$, which implies  $\alpha \in \const(\bA,\, \sigma^{i-j})$. Furthermore, $\alpha \in \const(\bA,\, \sigma$) by $\const(\bA, \sigma^k)=\const(\bA, \, \sigma)$ for $k\in \bZ^{+}$. Note that \cite[Proposition~2.20]{Schn2017} admits a mild generalization from fields to integral domains. So $t$ is an $R$-monomial if and only if $\alpha$ is a $\lambda$th primitive root of unity in $\const(\bA, \, \sigma)$.
 \end{proof}

\section{Complete reductions for simple $R\Pi\Sigma^*$-towers}\label{SECT:simplecase}

In this section, we restrict to the following subclass of $R\Pi\Sigma^*$-towers introduced by \cite[Definition 2.19]{Schn2016}.

\begin{define}\label{DEF:SimpleRPiSigma}
Let $(\bA,\, \sigma)$ be a difference ring with constant field $C$ and $\gG$ be a multiplicative subgroup of $\bA^*$.  Let $(\bE,\, \sigma)$ be an $R\Pi\Sigma^*$-tower over $(\bA,\,\sigma)$ with $\bE=\bA\langle t_1 \rangle \ldots \langle t_n\rangle$, and define the multiplicative group
$${\gG}_{\bA}^{\bE}:=\{ft_1^{k_1}\ldots t_n^{k_n}~|~f\in \gG \text{ and }k_i\in \bZ~\text{with}~k_i=0~\text{if $t_i$ is a $\Sigma^*$-monomial}\}.$$
We say that $\bE$ is {\em $\gG$-simple} if for any $R\Pi$-monomial $t_i$, we have $\sigma(t_i)/t_i \in {\gG}_{\bA}^{\bE}$.
Any such extension is said to be {\em simple} if it is $\bA^*$-simple.
\end{define}

\begin{remark}\label{RE:SimplePiSigma=PiSigma}
 A $\Pi\Sigma^*$-tower over a difference integral domain $(\bA, \, \sigma)$ is simple by \cite[Proposition~4.14]{Schn2016}. Thus, it is precisely the presence of $R$-monomials, or more generally the situation where $\bA$ is not a field (or not an integral domain), that gives rise to broader classes of $R\Pi\Sigma^*$-towers which are not simple,
  for a concrete example, we refer to Example~\ref{EX:CheckRmonomials} below. Complete reductions for this general case will be treated in Section~\ref{SECT:IdemotentRepresentation} below.
\end{remark}

We now construct a complete reduction in a simple $R\Pi\Sigma^*$-tower. For the remainder of this section, let $(\bA,\, \sigma)$ be a difference ring with constant field $C$ and $t$ be a simple $R\Pi\Sigma^*$-monomial over $\bA$. A natural first attempt is to construct a complete reduction for $(\bA \langle t \rangle, \Delta)$ given one for $(\bA, \Delta)$. However, this approach falls short when handling $R\Pi$-monomials.

\begin{example}\label{EX:assumptionPi}
Let $(\bA[t,t^{-1}],\,\sigma)$ be a $\Pi$-extension of a difference ring $(\bA,\,\sigma)$ with $\sigma(t)=ft$ for some $f\in \bA^*$. We aim to determine whether $ut$ with $u\in \bA$ is summable in $\bA[t,t^{-1}]$. 
Let $q=\sum_nq_nt^n\in \bA[t,t^{-1}]$ be such that $ut=\Delta(q)$. A direct computation shows that $\Delta(q_nt^n)=(f^n\sigma(q_n)-q_n)t^n$
for every $n\in \bZ$. In particular, $f^n\sigma(q_n)-q_n=0$, or equivalently $\sigma(q_n)=f^{-n}q_n$ for all $n\neq1$. Since $t$ is a $\Pi$-monomial, {we have $q_n=0$ by Theorem~\ref{THM:testingRPiSigma-monomials}~(ii)}.  Thus, if there is a solution $ut=\Delta(q)$ with $q\in\bA[t,t^{-1}]$, then $q=yt$ for some $y\in\bA$ satisfying
\begin{equation}\label{Equ:FirstOrderEq}
u=f\sigma(y)-y.
\end{equation} 
Conversely, if $y\in\bA$ is a solution of~\eqref{Equ:FirstOrderEq},
then $\Delta(y\,t)=v\,t$. Note that~\eqref{Equ:FirstOrderEq} is a typical first-order difference equation that also appears in Karr's algorithm~\cite{Karr1981} or is related to the so-called Gosper equation~\cite{Gosp1978}. 
\end{example}

The above example motivates us to introduce a more general operator. Let $f\in \bA$. We define the $C$-linear operator $\Delta_f$ associated to $f$ by
\begin{equation} \label{EQ:karrequation}
 \begin{array}{cccc}
\Delta_f: & \bA  & \longrightarrow & \bA  \\
      &  y   & \mapsto         & f\sigma(y)-y. 
\end{array} 
\end{equation}
Then $f\sigma(y)-y=g$ has a solution in $\bA$ if and only if $g \in \im(\Delta_f)$. Note that $\Delta_1=\Delta$ introduced in~\eqref{EQ:differenceoperator}.

Dealing with $R$-monomials, it turns out that the so far introduced operator is still not general enough.

\begin{example}
Let $\bA[t]$ be an $R$-extension of a difference ring $(\bA,\,\sigma)$ of order $\lambda$ with $\sigma(t)=at$. For $f=st\in \bA[t]$ and $g\in \bA$, we determine whether there is a $q\in \bA[t]$ such that $g=f\sigma(q)-q$. If such a $q$ exists, write it as $\sum_{i=0}^{\lambda-1}q_it^i$ with $q_i \in \bA$. Then 
$$g=(sa^{\lambda-1}\sigma(q_{\lambda-1})-q_0)+
\sum_{j=1}^{\lambda-1}(s\sigma(q_{\lambda-j-1})a^{\lambda-j-1}-q_{\lambda-j})t^{\lambda-j}.$$
Since $g \in \bA$, we have  
$q_{\lambda-j}=s\sigma(q_{\lambda-j-1})a^{\lambda-j-1}$ for every $j \in [\lambda-1]$. By iterative substitution, it follows that 
$g=s^{\sigma, \lambda}a^{\sigma,\sigma, \lambda}\sigma^{\lambda}(q_0)-q_0.$
So $g \in \im(\Delta_f)$ if and only if $p\sigma^{\lambda}(y)-y=g$ with $p=s^{\sigma, \lambda}a^{\sigma,\sigma, \lambda}$ has a solution in $\bA$.
\end{example}

To this end, we generalize \eqref{EQ:karrequation} as follows.

\begin{define} \label{DEF:Generalizedkarrequation}
    Let $(\bA, \, \sigma)$ be a difference ring with constant field $C$. For $f \in \bA$ and $i \in \bZ^{+}$, define
    \begin{equation} \label{EQ:Generalizedkarrequation}
 \begin{array}{cccc}
\Delta_f^{(i)}: & \bA & \longrightarrow & \bA  \\
      &  y   & \mapsto         & f\sigma^i(y)-y. 
\end{array} 
\end{equation}
\end{define}
For simplicity, we henceforth write $\Delta_f$ for $\Delta_f^{(1)}$.
Note that one can solve certain classes of difference equations related to \eqref{EQ:Generalizedkarrequation} by Karr's algorithm~\cite{Karr1981} and the algorithms introduced in~\cite[Theorem~2.26]{Schn2016}, which can be summarized as follows.

\begin{thm} \cite{Karr1981, Schn2016}\label{Thm:PiSiOverSimpleRPiSiDESolver}
Let $(\bE,\,\sigma)$ be a simple $R\Pi\Sigma^*$-tower over a $\Pi\Sigma^*$-field $(\bF,\,\sigma)$ with orbit-computable\footnote{If the $\Pi\Sigma^*$-field $(\bF,\sigma)$ over $C$ is built only by $\Sigma^*$-field monomials, it suffices to require that $C$ is factorizable.} constant field $C$. Then, for $f\in {(\bF^*)}_{\bF}^{\bE}$, $i\in\bZ^+$ and $g\in\bE$, one can decide algorithmically if there is $q\in\bE$  such that $\Delta^{(i)}_f(q)=g$.
\end{thm}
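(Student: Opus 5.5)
The plan is to reduce the equation $\Delta^{(i)}_f(q)=g$ to the case $i=1$ and then invoke the known solvers. Since the statement is attributed to \cite{Karr1981,Schn2016}, the work lies in assembling them rather than proving anything genuinely new.

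\textbf{Step 1: reduction to $i=1$.} Consider the difference ring $(\bE,\sigma^i)$. If $t$ is an $S$-monomial with $\sigma(t)=t+a$, then by Lemma~\ref{LM:RPiSigma+term} we have $\sigma^i(t)=t+\sum_{j<i}\sigma^j(a)$. If $t$ is an $AP$-monomial with $\sigma(t)=bt$, then $\sigma^i(t)=b^{\sigma,i}t$. Hence $(\bE,\sigma^i)$ is again an $APS$-tower over $(\bF,\sigma^i)$, which is itself built by $\Pi$/$\Sigma$-type field monomials. Simplicity is preserved, because $b^{\sigma,i}$ lies in $(\bF^*)_{\bF}^{\bE}$ whenever $b$ does (the group is $\sigma$-stable). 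The element $f$ stays in the same group, and $\Delta^{(i)}_f$ for $\sigma$ is exactly $\Delta_f$ for $\sigma^i$.

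\textbf{Step 2: handling the constant field.} The catch is that $\const(\bE,\sigma^i)$ may be strictly larger than $C$. I would deal with this in one of two ways. The first is to use constant-stability of $\Pi\Sigma^*$-fields and their $\Pi\Sigma^*$-towers (Lemma~\ref{Lemma:ConstantStable}); this handles the field part, while the $R$-monomials may create new constants under $\sigma^i$. The second, which I prefer, is to avoid passing to $\sigma^i$ globally. Instead I would rely directly on \cite[Theorem~2.26]{Schn2016}, which already solves first-order equations of the form $f\sigma^i(y)-y=g$ in simple $R\Pi\Sigma^*$-towers by recursion over the tower.

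\textbf{Step 3: recursion over the tower.} Write $\bE=\bE'\langle t\rangle$. If $t$ is an $A$- or $P$-monomial, compare coefficients of $t^k$. This gives the equations $f b_k\,\sigma^i(q_k)-q_k=g_k$ in $\bE'$, with $b_k=(b^{\sigma,i})^k\in(\bF^*)_{\bF}^{\bE'}$ after absorbing the $t$-part of $f$. Only finitely many $k$ are relevant, by a degree bound; for an $R$-monomial the index shift cycles modulo $\lambda$, as in the example preceding Definition~\ref{DEF:Generalizedkarrequation}. If $t$ is an $S$-monomial, I would use a degree bound and the standard coefficient-by-coefficient descent, where the leading coefficients again satisfy equations of the same shape in $\bE'$.

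\textbf{Step 4: the base case and the main obstacle.} Continuing the recursion, one arrives at the base field $\bF$, where the equation is handled by Karr's algorithm \cite{Karr1981}. The main obstacle is this base case together with the degree bounds, since they require the orbit problem, i.e.\ finding $\sigma^i(g)/g$ relations. This is exactly where orbit-computability of $C$ enters, or merely factorizability when $\bF$ is built only from $\Sigma^*$-monomials. I would take these bounds as given from \cite{Karr1981,Schn2016}.
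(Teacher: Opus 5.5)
Your proposal ends up where the paper does: the paper gives no argument of its own for this statement and simply quotes Karr's algorithm \cite{Karr1981} and \cite[Theorem~2.26]{Schn2016}, and your Steps~2--4 likewise defer the recursion, the degree bounds and the base case to exactly these results. Your Step~1 reduction to $i=1$ is dispensable; as you note yourself, it breaks down for an $R$-monomial $y$ of order $\lambda$ with $\gcd(\lambda,i)>1$, since $y^{\lambda/\gcd(\lambda,i)}$ is then a new constant of $(\bE,\sigma^i)$. One correction to your Step~3 sketch: the coefficient comparison decouples into single equations for $q_k$ only when $f$ is free of $t$. For $f=s\,t^m$ with $m\neq 0$, the coefficient of $t^k$ is $s\,(b^{\sigma,i})^{k-m}\sigma^i(q_{k-m})-q_k$. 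For a $P$-monomial this gives a triangular system that needs no difference equation at all. For an $R$-monomial it gives a cyclic system that collapses to one equation with shift $\sigma^{\mu i}$, where $\mu=\lambda/\gcd(\lambda,m)$.
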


Prior to presenting the results of this section, we first establish the following lemma, which describes the relationship between complete reductions in two difference-isomorphic rings. It will be used in Sections \ref{SUBSECT:RationalCase} and~\ref{SUBSECT:crforgeneralcase}.

\begin{lemma}\label{LM:diffiso+CR}
 Let $(\bA, \, \sigma)$ and $(\tilde{\bA},\, \tilde{\sigma})$ be difference rings. Assume that the set of constants of $\bA$ and that of $\tilde{\bA}$ are fields, and there is a difference ring isomorphism $\tau$ from $(\tilde{\bA},\, \tilde{\sigma})$ onto $(\bA, \, \sigma)$. Let $h \in\tilde{\bA}^*$.  Assume further that $\phi_{\tau(h)}$ is a complete reduction  for $(\bA, \,\Delta_{\tau(h)})$. Then the following holds.
\begin{itemize}
\item[(i)] For $\tilde{\Delta}_h:=h\tilde{\sigma}-\bf{1}$, we have $\tilde{\Delta}_h=\tau^{-1} \circ \Delta_{\tau(h)} \circ \tau$.
  \item [(ii)]~$\psi_h:=\tau^{-1} \circ \phi_{\tau(h)} \circ \tau$ is a complete reduction for $(\tilde{\bA},\, \tilde{\Delta}_h)$, and for every $g\in \tilde{\bA}$, $(\tau^{-1}(u), \tau^{-1}(v))$ is a $\Sigma$-pair of $g$ with respect to $\psi_h$ if  $(u, v)$ is a $\Sigma$-pair of $\tau(g)$ with respect to $\phi_{\tau(h)}$.
  \end{itemize}
\end{lemma}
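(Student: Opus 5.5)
The plan is pure transport of structure along $\tau$; nothing beyond linearity and the intertwining relation $\tau\tilde{\sigma}=\sigma\tau$ is needed.

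\textbf{Constants and linearity.} First I will note that $\tau$ maps $\const(\tilde{\bA},\tilde{\sigma})$ bijectively onto $\const(\bA,\sigma)$. Indeed, $\tilde{\sigma}(c)=c$ if and only if $\sigma(\tau(c))=\tau(c)$. So $\tau$ restricts to a field isomorphism between the two constant fields. Regarding $\bA$ as a vector space over $\tilde C:=\const(\tilde{\bA},\tilde\sigma)$ via $c\cdot a:=\tau(c)a$, the map $\tau$ becomes $\tilde C$-linear, and so does $\tau^{-1}$. Thus all compositions below are linear over the appropriate constant field. If the two constant fields are literally the same field $C$ and $\tau$ fixes it, this step is trivial.

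\textbf{Part (i).} For $y\in\tilde{\bA}$ I will compute, using that $\tau$ is a ring homomorphism and $\tau\tilde{\sigma}=\sigma\tau$:
\[
\tau(\tilde{\Delta}_h(y))=\tau(h)\,\tau(\tilde{\sigma}(y))-\tau(y)=\tau(h)\,\sigma(\tau(y))-\tau(y)=\Delta_{\tau(h)}(\tau(y)).
\]
Applying $\tau^{-1}$ to both sides gives (i).

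\textbf{Part (ii).} I will use Remark~\ref{RE:cr}: it suffices to show that $\psi_h$ is an idempotent with $\ker\psi_h=\im\tilde{\Delta}_h$. Idempotence follows from
\[
\psi_h^2=\tau^{-1}\phi_{\tau(h)}^2\tau=\tau^{-1}\phi_{\tau(h)}\tau=\psi_h.
\]
Since $\tau$ is bijective, $\ker\psi_h=\tau^{-1}(\ker\phi_{\tau(h)})=\tau^{-1}(\im\Delta_{\tau(h)})$. By (i), $\im\tilde{\Delta}_h=\tau^{-1}\Delta_{\tau(h)}\tau(\tilde{\bA})=\tau^{-1}(\im\Delta_{\tau(h)})$, because $\tau(\tilde{\bA})=\bA$. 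Hence $\ker\psi_h=\im\tilde\Delta_h$, and $\psi_h$ is a complete reduction for $(\tilde{\bA},\tilde{\Delta}_h)$.

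\textbf{$\Sigma$-pairs.} Let $(u,v)$ be a $\Sigma$-pair of $\tau(g)$ with respect to $\phi_{\tau(h)}$, so $v=\phi_{\tau(h)}(\tau(g))$ and $\Delta_{\tau(h)}(u)=\tau(g)-v$. Then
\[
\tau^{-1}(v)=\tau^{-1}\phi_{\tau(h)}\tau(g)=\psi_h(g),
\]
and by (i),
\[
\tilde{\Delta}_h(\tau^{-1}(u))=\tau^{-1}\Delta_{\tau(h)}(u)=g-\tau^{-1}(v)=g-\psi_h(g).
\]
This is exactly the $\Sigma$-pair condition for $(\tau^{-1}(u),\tau^{-1}(v))$ with respect to $\psi_h$.

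The only step requiring care is the linearity bookkeeping over the constant fields. Everything else is direct substitution.
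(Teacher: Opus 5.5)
Your proof is correct and follows the paper's argument essentially step for step. Part (i) comes from the intertwining relation $\tau\tilde{\sigma}=\sigma\tau$, and part (ii) uses Remark~\ref{RE:cr}, transporting idempotence, kernel and image along $\tau$ and then applying $\tau^{-1}$ to the $\Sigma$-pair relation. Your explicit linearity bookkeeping over the constant fields and your direct identification $\tau^{-1}(v)=\psi_h(g)$ (where the paper only shows $\tau^{-1}(v)\in\im(\psi_h)$) are harmless refinements that the paper leaves implicit.
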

\begin{proof}~(i)  For all $s \in \tilde{\bA}$, $(\tau^{-1} \circ \Delta_{\tau(h)} \circ \tau)(s)=h \tilde{\sigma}(s)-s$ by $\sigma\circ\tau=\tau\circ\tilde{\sigma}$. So $\tilde{\Delta}_h=\tau^{-1} \circ \Delta_{\tau(h)} \circ \tau$.

(ii) First we show that $\psi_h$ is a complete reduction for $(\tilde{\bA},\, \tilde{\Delta}_h)$.
Since $\phi_{\tau(h)}$ is idempotent, so is $\psi_h$. 
By Remark \ref{RE:cr}, it suffices  to verify that $\ker(\psi_h)=\im(\tilde{\Delta}_h)$.  
Since $\psi_h = \tau^{-1} \circ \phi_{\tau(h)} \circ \tau$, we have $\ker(\psi_h) = \tau^{-1}(\ker(\phi_{\tau(h)})).$
By (i), 
$\im(\tilde{\Delta}_h) =\tau^{-1} \im(\Delta_{\tau(h)}) $.
Thus, $\ker(\psi_h)=\im(\tilde{\Delta}_h)$
by $\ker(\phi_{\tau(h)})=\im(\Delta_{\tau(h)})$.

 Note that $\tau(g)=\Delta_{\tau(h)}(u)+v$. Applying $\tau^{-1}$ to the above equation, we obtain $g=\tilde{\Delta}_h (\tau^{-1}(u))+\tau^{-1}(v)$ by (i).
Moreover, $\tau^{-1}(v) \in \im(\psi_h)$ by $v \in \im(\phi_{\tau(h)})$. So  $(\tau^{-1}(u), \tau^{-1}(v))$ is a $\Sigma$-pair of $g$ with respect to $\psi_h$. 
\end{proof}

We aim to develop tools that enhance Theorem~\ref{Thm:PiSiOverSimpleRPiSiDESolver} toward a constructive version of complete reductions (see Theorem~\ref{THM:SimpleRPiSigmaExt} below).
More precisely, we begin with a simple $R\Pi\Sigma^*$-tower $(\bE,\, \sigma)$ over a difference ring $(\bA,\,\sigma)$ with constant field $C$.
By \cite[Lemma 4.10]{Schn2016}, we can reorder it to the form
 \begin{equation}\label{Equ:ReorderedSimpleExt}
\bE=\bA \langle y_1\rangle \ldots \langle y_{\ell_1} \rangle \langle p_1\rangle \ldots \langle p_{\ell_2}\rangle  \langle s_1\rangle \ldots \langle s_{\ell_3} \rangle,
\end{equation}
where the $y_i$ are $R$-monomials, the $p_i$ are $\Pi$-monomials, and the $s_i$ are $\Sigma^*$-mono\-mials\footnote{Compared with Proposition~\ref{PROP:orderRPiSigma},  the condition on the ground ring is weaker; it is required merely to be a general ring, whereas the generators are simple.}.

Given such a tower, we develop a general strategy for computing a complete reduction in $(\bE, \Delta)$, provided it can be computed in $(\bA, \Delta_f^{(i)})$ for every $f \in \bA^*$ and $i \in \mathbb{Z}^+$. In addition, we require that $\bA$ admits an effective $C$-basis, as formalized later in Definition~\ref{Def:effectiveBasis}.

As shown in Section~\ref{SUBSECT:RationalCase}, this required reduction is explicitly constructed when $\bA$ is the rational difference field (see Proposition~\ref{PROP:Generalrational+highorder}). Building on this result, Section~\ref{SUBSECT:RCase} constructs a complete reduction for $(\bE_1, \Delta_f^{(i)})$, where $\bE_1 := \bA \langle y_1\rangle \dots \langle y_{\ell_1} \rangle$, $f \in (\bA^*)_{\bA}^{\bE_1}$, and $i \in \mathbb{Z}^+$ (see Proposition~\ref{Prop:NestRExt}). Section~\ref{SUBSECT:PiCase} then establishes a complete reduction for $(\bE_2, \Delta_f)$ with $\bE_2 := \bE_1 \langle p_1\rangle \dots \langle p_{\ell_2}\rangle$ and $f \in (\bA^*)_{\bA}^{\bE_2}$ (see Proposition~\ref{PROP:NestedRPiExt}). Finally, by combining these results with those from~\cite{CGHS2025}, Section~\ref{SUBSECT:RPiSigma-case} presents a complete reduction for $(\bE, \Delta)$ (see Theorem~\ref{THM:SimpleRPiSigmaExt}) and compares our algorithm against existing methods implemented in the summation package \texttt{Sigma}.

\subsection{The rational case}\label{SUBSECT:RationalCase}

For the base case of our construction we use the rational difference field $(C(x),\,\sigma)$ of a factorizable  field $C$ with characteristic 0. For every $f \in C(x)^*$ and $i\in \bZ^{+}$, we are going to construct a complete reduction for $(C(x), \Delta_f^{(i)})$ in this section.
To achieve this, we simplify the construction to the case $(C(x), \Delta_f)$ for arbitrary $f \in C(x)^*$ as follows.

\begin{lemma}\label{LM:GeneralRational}
 Let $(C(x),\,\sigma)$ be the rational difference field with $\sigma(x)=x+1$. Assume that there is an algorithm which,  for every $f\in C(x)^*$,  constructs a complete reduction $\phi_f$ for $(C(x), \Delta_f)$. Then there is an algorithm that, for every $f \in C(x)^*$ and $i \in \bZ^{+}$, constructs a complete reduction $\phi_f^{(i)}$ for $(C(x), \Delta_f^{(i)})$.
 \end{lemma}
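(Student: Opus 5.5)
The plan is to reduce $\Delta_f^{(i)}$ on $C(x)$ to an ordinary operator $\Delta_{\tilde f}$ on an isomorphic rational difference field, and then transport the complete reduction back with Lemma~\ref{LM:diffiso+CR}.

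\textbf{Step 1 (change of variable).} Consider the difference field $(C(x),\sigma^i)$, where $\sigma^i(x)=x+i$. Let $(C(z),\tilde\sigma)$ be the rational difference field with $\tilde\sigma(z)=z+1$, and define
\[
\tau: C(x)\to C(z),\qquad \tau(x)=iz, \quad \tau|_C=\mathrm{id}.
\]
This is a field isomorphism, since $i\neq 0$ and $C$ has characteristic $0$. It also intertwines the shifts: $\tau(\sigma^i(x))=iz+i=\tilde\sigma(\tau(x))$, so $\tau\circ\sigma^i=\tilde\sigma\circ\tau$. Hence $\tau$ is a difference ring isomorphism from $(C(x),\sigma^i)$ onto $(C(z),\tilde\sigma)$. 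The constants on both sides are $C$, because $\const(C(x),\sigma^i)=C$ by constant-stability (or directly).

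\textbf{Step 2 (apply the hypothesis and transport).} Write $\Delta_f^{(i)}=f\sigma^i-\mathbf 1$ on $C(x)$, which is exactly the operator $h\tilde\sigma-\mathbf 1$ of Lemma~\ref{LM:diffiso+CR} for the difference ring $(C(x),\sigma^i)$ with $h=f$. By hypothesis, applied over the rational difference field $C(z)$, we can construct a complete reduction $\phi_{\tau(f)}$ for $(C(z),\Delta_{\tau(f)})$; note $\tau(f)=f(iz)\in C(z)^*$. Lemma~\ref{LM:diffiso+CR}(i),(ii) is stated with $\tau$ going from $\tilde{\bA}$ onto $\bA$. Taking $\tilde{\bA}=(C(x),\sigma^i)$ and $\bA=(C(z),\tilde\sigma)$, it yields that
\[
\phi_f^{(i)}:=\tau^{-1}\circ\phi_{\tau(f)}\circ\tau
\]
is a complete reduction for $(C(x),\Delta_f^{(i)})$. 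Moreover, $\Sigma$-pairs transfer via $\tau^{-1}$. The $C$-linearity of everything is clear, since $\tau$ fixes $C$.

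\textbf{Step 3 (algorithm).} Given $g\in C(x)$, compute $\tau(g)=g(iz)$ and run the assumed algorithm to get a $\Sigma$-pair $(u,v)$ with $\tau(g)=\Delta_{\tau(f)}(u)+v$. Return $(u(x/i),\,v(x/i))$. The complement $W$ is fixed as $\tau^{-1}(\im\phi_{\tau(f)})$.

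The only point needing care is matching the roles of $\sigma$ and $\tilde\sigma$ in Lemma~\ref{LM:diffiso+CR}. There the "base" difference ring must be $(C(x),\sigma^i)$ rather than $(C(x),\sigma)$, and the lemma's hypotheses (constants form fields, $h$ a unit) must be checked in that setting. Both are immediate here, since $f\in C(x)^*$ and the constants of $(C(x),\sigma^i)$ are $C$. Everything else is routine substitution.
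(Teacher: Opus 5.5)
Your proposal is correct and is essentially the paper's proof. The paper uses the same substitution $w(x)\mapsto w(ix)$ as a difference ring isomorphism $(C(x),\sigma^i)\to(C(x),\sigma)$ and then invokes Lemma~\ref{LM:diffiso+CR}, exactly as you do; the only difference is that you rename the target variable to $z$, and your Step~3 matches steps (1) and (4) of Algorithm~\ref{ALG:CRForRationalFunctions}.
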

\begin{proof} Let $i\in \bZ^{+}$ and $f\in C(x)^*$. Define
\[
\begin{array}{cccc}
\tau: & (C(x), \sigma^i) & \rightarrow & (C(x), \sigma) \\
      & w(x) & \mapsto     & w(ix).
\end{array}
\]
Since $\sigma\circ\tau=\tau\circ\sigma^i$, $(C(x),\, \sigma)$ and $(C(x), \sigma^{i})$ are isomorphic. By the assumption, there is an algorithm for constructing a complete reduction for $(C(x),\, \Delta_{\tau(f)})$. This yields an algorithm for constructing a complete reduction for $(C(x),\, \Delta_{f}^{(i)})$ by replacing $(\tilde{\bA},\,\tilde{\sigma})$ and  $(\bA, \, \sigma)$ with $(C(x),\, \sigma^i)$ and $(C(x),\, \sigma)$ in Lemma \ref{LM:diffiso+CR}, respectively. 
\end{proof}

An element $g$ in $C(x)$ is said to be {\em $\sigma$-reduced} if $\gcd(\num(g), \sigma^{i}(\den(g)))=1$ for all $i\in \bZ.$
Let $f \in C(x)$. We call $(\xi, \eta) \in C(x) \times C(x)^*$ a {\em normal form of $f$} 
if $f = \xi\sigma(\eta)/\eta$ and $\xi$ is $\sigma$-reduced. Such a pair is computed by Algorithm {\tt RNF} in \cite{AbPe2002b}.  Note that normal forms are not unique. For example, both $(\frac{1}{x-1},x)$ and $(\frac{1}{x},x(x-1))$ are normal forms of $\frac{x+1}{x(x-1)}$.

The following lemma reduces the problem to constructing a complete reduction for $(C(x),\, \Delta_\xi)$, where $\xi\in C(x)$ is $\sigma$-reduced.

\begin{lemma} \label{LM:ks}
Let $f\in C(x)$ with a normal form $(\xi, \eta)$. Then the following assertions hold.
\begin{itemize}
\item[(i)] $\Delta_f = \eta^{-1} \circ \Delta_\xi \circ \eta$, 
where $\eta$ is understood as a $C$-linear operator on $C(x)$  with $g \mapsto \eta g$ for all $g \in C(x)$.
\item[(ii)] If $\phi_\xi$ is a complete reduction for $(C(x), \Delta_\xi)$, then $\phi_f:=\eta^{-1} \circ \phi_\xi  \circ \eta$
is a complete reduction for $(C(x), \Delta_f)$. Moreover, let $g \in C(x)$ and $(u, r)$ be a $\Sigma$-pair of $\eta g$ with respect to $\phi_{\xi}$, then
$(\eta^{-1}u, \eta^{-1}r)$ is a $\Sigma$-pair of $g$ with respect to $\phi_f$.
\end{itemize}
\end{lemma}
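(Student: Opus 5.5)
Part (i) is a direct computation, and part (ii) transports the complete reduction along the invertible multiplication operator $\eta$, as in Lemma~\ref{LM:diffiso+CR}.

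\textbf{Part (i).} Since $(\xi,\eta)$ is a normal form, $\eta\in C(x)^*$, so multiplication by $\eta$ is an invertible $C$-linear operator on $C(x)$ with inverse multiplication by $\eta^{-1}$. For $g\in C(x)$ I will compute
\[
(\eta^{-1}\circ\Delta_\xi\circ\eta)(g)=\eta^{-1}\bigl(\xi\,\sigma(\eta)\sigma(g)-\eta g\bigr)=\frac{\xi\sigma(\eta)}{\eta}\,\sigma(g)-g=f\sigma(g)-g=\Delta_f(g).
\]
The third equality uses $f=\xi\sigma(\eta)/\eta$.

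\textbf{Part (ii), complete reduction.} I will use the characterization in Remark~\ref{RE:cr}: a complete reduction is an idempotent whose kernel equals the image of the operator.
\begin{itemize}
\item Idempotency: $\phi_f^2=\eta^{-1}\phi_\xi\eta\eta^{-1}\phi_\xi\eta=\eta^{-1}\phi_\xi^2\eta=\phi_f$.
\item Kernel: since $\eta$ is bijective, $\ker(\phi_f)=\eta^{-1}\ker(\phi_\xi)=\eta^{-1}\im(\Delta_\xi)$.
\item Image: by (i), $\im(\Delta_f)=\eta^{-1}\Delta_\xi(\eta C(x))=\eta^{-1}\im(\Delta_\xi)$, because $\eta C(x)=C(x)$.
\end{itemize}
Hence $\ker(\phi_f)=\im(\Delta_f)$, and $\phi_f$ is a complete reduction for $(C(x),\Delta_f)$.

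\textbf{Part (ii), $\Sigma$-pairs.} Let $(u,r)$ be a $\Sigma$-pair of $\eta g$ with respect to $\phi_\xi$. Then $r=\phi_\xi(\eta g)$ and $\Delta_\xi(u)=\eta g-r$.
\begin{itemize}
\item First, $\phi_f(g)=\eta^{-1}\phi_\xi(\eta g)=\eta^{-1}r$.
\item Next, by (i), $\Delta_f(\eta^{-1}u)=\eta^{-1}\Delta_\xi(u)=\eta^{-1}(\eta g-r)=g-\eta^{-1}r=g-\phi_f(g)$.
\end{itemize}
So $(\eta^{-1}u,\eta^{-1}r)$ is a $\Sigma$-pair of $g$ with respect to $\phi_f$.

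There is no real obstacle. The only point needing care is that $\eta$ must be a unit, which holds because a normal form requires $\eta\in C(x)^*$; this is what makes conjugation by $\eta$ legitimate.
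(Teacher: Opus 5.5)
Your proof is correct and follows the paper's argument essentially step for step: the same direct computation for (i), and for (ii) the same conjugation argument (idempotency, $\ker(\phi_f)=\eta^{-1}\ker(\phi_\xi)$, $\im(\Delta_f)=\eta^{-1}\im(\Delta_\xi)$, then dividing $\eta g=\Delta_\xi(u)+r$ by $\eta$). The only cosmetic difference is that you verify $\phi_f(g)=\eta^{-1}r$ directly from $r=\phi_\xi(\eta g)$, whereas the paper instead checks $\eta^{-1}r\in\im(\phi_f)$ and relies on the direct-sum decomposition.
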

\begin{proof}
(i) For every $g\in C(x)$, we have $(\eta^{-1} \circ \Delta_\xi \circ \eta)(g) = (\xi\sigma(\eta)/\eta)\sigma(g) - g$
by a straightforward calculation. So $(\eta^{-1} \circ \Delta_\xi \circ \eta)(g)=\Delta_f(g)$ due to $f = \xi\sigma(\eta)/\eta$.
  
(ii) Since $\phi^2_{\xi}=\phi_{\xi}$, we have $\phi^2_f=\phi_f$.  To show $\im(\Delta_f) = \ker(\phi_f)$,
we note that $\im(\Delta_f) = \eta^{-1} \im(\Delta_\xi)$ by (i).
Furthermore, $\ker(\phi_f) = \eta^{-1} \ker(\phi_\xi)$  by the definitions of $\phi_\xi$ and $\phi_f$. Hence, $\im(\Delta_f) = \ker(\phi_f)$ because $\im(\Delta_\xi)=\ker(\phi_\xi)$.  Consequently, $\phi_f$ is a complete reduction for $(C(x), \Delta_f)$. 

Let $g\in C(x)$. Assume that $(u, r)$ is a $\Sigma$-pair of $\eta g$ with respect to $\phi_{\xi}$. Then $\eta g=\Delta_{\xi}(u)+r$. Dividing both sides of this equation by $\eta$ and using (i), we obtain
$$g=\eta^{-1}\Delta_{\xi}(u)+\eta^{-1}r=(\eta^{-1}\circ\Delta_\xi \circ \eta\circ\eta^{-1})(u)+\eta^{-1}r=\Delta_f(\eta^{-1}u)+\eta^{-1}r.$$ 
Moreover, $\eta^{-1}r \in \im(\phi_f)$ by $\eta^{-1}\im(\phi_{\xi})=\im(\phi_f)$. Thus, $(\eta^{-1}u, \eta^{-1}r)$ is a $\Sigma$-pair of $g$ with respect to $\phi_f$.
\end{proof}

In the rest of this section, we fix $(\xi, \eta)$ to be a normal form of $f$ and provide a detailed construction of a complete reduction for $(C(x), \, \Delta_{\xi})$ in Proposition \ref{PROP:CR+Rational}. We emphasize that this construction rephrased in the setting of hypergeometric terms is strongly related to the additive decomposition given in~\cite{AP2001, CHKL2015, CDGHL2025} and the refined representation of complete reductions elaborated in~\cite[Theorem 5.78]{Kaue2023}.

According to \cite[Definition~11]{Karr1981} and \cite[Definition~1]{AbPe2002b}, two nonzero polynomials $p, q$ in $C[x]$ are said to be {\em $\sigma$-equivalent} if $p=c\sigma^{\ell}(q)$ for some $\ell \in \bZ$ and $c\in C$. In this case, we write $p\stackrel{\sigma}{\sim} q$. If $p, q \in C[x]\backslash C$ are $\sigma$-equivalent, then $\ell$ is unique because there is no nonzero integer $k$ such that $\sigma^k(q)=q$. Obviously, $\stackrel{\sigma}{\sim}$ forms an equivalence relation. 
Moreover, the $\sigma$-equivalence of $p$ and $q$ can be recognized by comparing the coefficients of $p$ and $q$. 
We say that $p, q\in C[x]$ are {\em $\sigma$-coprime} if $\gcd(p, \sigma^{\ell}(q))=1$ for all nonzero integer~$\ell$.

Let $p_1^{n_1}p_2^{n_2}\dots p_r^{n_r}$ be a complete factorization of a monic polynomial $p$ in $C[x]$, where the $p_i \in C[x]$ are distinct monic irreducible polynomials, and the $n_i$ are positive integers. By using the above equivalence relation, one may pick among $p_1,\dots,p_r$ a subset $S=\{p_1,\dots,p_s\}$ which are pairwise $\sigma$-coprime and where all other irreducible factors in $p$ are $\sigma$-equivalent to $S$. Then  $p$ can be rewritten in the form 
\begin{equation}\label{Equ:SigmaFacFull}
p=\prod_{j=k_1}^{\ell_1}\sigma^{j}(p_1)^{m_{1,j}}\dots\prod_{j=k_s}^{\ell_s}\sigma^{j}(p_s)^{m_{s,j}}
\end{equation}
with $m_{i,j} \in \bN$. Note that such a representation can be computed if $C$ is factorizable.

A nonzero polynomial $p$ in  $C[x]$ is said to be {\em strongly $\sigma$-coprime} with $\xi$ if 
$$\gcd(\num(\xi),{\sigma}^\ell(p)) = \gcd(\den(\xi),{\sigma}^{-\ell}(p)) = 1$$ for all $\ell\in \bN$.
Given an irreducible polynomial $q$, it is straightforward to compute $r\in\bZ$ such that the irreducible polynomial $\sigma^r(q)$ is strongly coprime with $\xi$; compare by \cite[Lemma 5.3]{Huan2016}. So we can assume that the $p_i$ in \eqref{Equ:SigmaFacFull} are strongly coprime with $\xi$ (by replacing $p_i$ with $\sigma^{r_i}(p_i)$ for some $r_i\in\bZ$, and $k_i$, $\ell_i$ by $k_i-r_i$, $\ell_i-r_i$, respectively). In this case, we call \eqref{Equ:SigmaFacFull} a {\em strong $\sigma$-factorization of $p$ with respect to $\xi$}.
 
 Let $M_x$ be the set consisting of all monic irreducible polynomials with positive degrees in $C[x]$. And fix a subset $S \subset M_x$ of representatives of the equivalence classes induced by $\stackrel{\sigma}{\sim}$. In addition, we require that these representatives are strongly $\sigma$-coprime with respect to $\xi$. Then any finite product of elements in $S$ is also strongly $\sigma$-coprime with $\xi$. 

Set 
\begin{equation}\label{EQ:rationalcomplement}
U_{S,\xi} := \left\{\frac{a}{q_1^{m_1}\cdots q_{\ell}^{m_\ell}}~|~q_i \in S, \, m_i\in\bN \, \text{and}\,\deg_x(a)<\deg_x(q_1^{m_1}\cdots q_{\ell}^{m_\ell}) \right\},
\end{equation}
which is  a $C$-linear subspace.

Now we define the $C$-linear map
\begin{equation} \label{EQ:PR}
 \begin{array}{cccc}
\cP_{\xi}: & C[x] & \longrightarrow & C[x] \\
      &  p   & \mapsto         &\num(\xi)\sigma(p)-\den(\xi)p. 
\end{array} 
\end{equation}
Let 
\begin{equation}\label{EQ:polynomialComplement}
V_{\xi}:=\spa_C\{x^d~|~d\in \bN~\text{and}~d \neq \deg_x(p)~\text{for all}~p\in \im(\cP_{\xi})\}.
\end{equation}
Then $C[x]=\im(\cP_{\xi}) \oplus V_{\xi}$  and $V_{\xi}$ is of finite dimension by \cite[Section 4.1]{CHKL2015}. 
Furthermore, a $C$-basis of $V_{\xi}$ can be constructed in \cite[\S 4.2]{CHKL2015}.

The following proposition collects relevant results in \cite{AP2001, CHKL2015, CDGHL2025} and \cite[Theorem~5.78]{Kaue2023}.

\begin{prop}\label{PROP:CR+Rational}
Let $\xi\in C(x)$ be $\sigma$-reduced and the elements of $S$ are strongly $\sigma$-coprime with $\xi$. Then
$$C(x)=\im(\Delta_{\xi}) \oplus U_{S,\xi} \oplus \frac{V_{\xi}}{\den(\xi)},$$
where $U_{S,\xi}, V_{\xi}$ are given by \eqref{EQ:rationalcomplement} and \eqref{EQ:polynomialComplement}, respectively. Furthermore, the projection from $C(x)$ to $U_{S,\xi} \oplus \frac{V_{\xi}}{\den(\xi)}$ is a complete reduction $\phi_{\xi}$ for $(C(x), \Delta_{\xi})$. If $C$ is factorizable, $\Sigma$-pairs with respect to $\phi_{\xi}$ can be computed explicitly.
\end{prop}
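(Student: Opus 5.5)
The plan is to prove the decomposition in two stages: first split off polynomial parts, then treat the proper fractions with denominators in the $\sigma$-orbits of $S$. The first step is to reduce an arbitrary $g\in C(x)$ modulo $\im(\Delta_\xi)$ to the form $r + p/\den(\xi)$. Here $r$ is a proper fraction whose denominator is a product of elements of $S$, and $p\in C[x]$. Write $g$ as a polynomial plus a proper fraction. Factor the denominator by a strong $\sigma$-factorization with respect to $\xi$, as in \eqref{Equ:SigmaFacFull}, and use partial fractions to write it as a sum of terms $a/\sigma^j(q)^m$ with $q\in S$.

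For one such term with $j\neq 0$, I would use the identity
\[
\Delta_\xi\!\left(\frac{b}{\sigma^{j-1}(q)^m}\right)=\xi\frac{\sigma(b)}{\sigma^{j}(q)^m}-\frac{b}{\sigma^{j-1}(q)^m}.
\]
Since $\xi$ is $\sigma$-reduced and $q$ is strongly $\sigma$-coprime with $\xi$, the numerator and denominator of $\xi$ do not interfere with the shifted factors $\sigma^{j}(q)$ in the relevant direction. Hence one can choose $b$ (via an inverse of $\num(\xi)$ modulo $\sigma^j(q)^m$, which exists by strong coprimality) so as to move the pole from $\sigma^j(q)$ to $\sigma^{j-1}(q)$ when $j>0$, and symmetrically upward using $\den(\xi)$ when $j<0$. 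The leftover is a polynomial divided by $\den(\xi)$. Iterating on the multiplicity and on $|j|$ gives the claimed form. This is the Abramov–Petkovšek / CHKL shell reduction, and I would cite \cite{CHKL2015,CDGHL2025} for the bookkeeping.

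For the polynomial part, note that $\Delta_\xi(p')=\mathcal P_\xi(p')/\den(\xi)$ for $p'\in C[x]$. Hence modulo $\im(\Delta_\xi)$ the term $p/\den(\xi)$ reduces, using $C[x]=\im(\cP_\xi)\oplus V_\xi$, to an element of $V_\xi/\den(\xi)$. Note that $p/\den(\xi)$ need not itself be proper, but $V_\xi/\den(\xi)$ is the chosen complement, so this is fine. Altogether this shows $C(x)=\im(\Delta_\xi)+U_{S,\xi}+V_\xi/\den(\xi)$, constructively when $C$ is factorizable, which gives the $\Sigma$-pairs.

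The main obstacle is directness. Suppose $\Delta_\xi(y)=r+v/\den(\xi)$ with $r\in U_{S,\xi}$ and $v\in V_\xi$; we need $y=0$ and $r=v=0$. I would argue on the denominator of $y$. If $y$ has an irreducible factor in its denominator, choose an orbit $\{\sigma^k(q)\}$ and take the extreme shifts $\sigma^{k_{\max}}(q)$ and $\sigma^{k_{\min}}(q)$ occurring in $\den(y)$. Then $\xi\sigma(y)$ has a pole at $\sigma^{k_{\max}+1}(q)$ that $y$ lacks, unless it is cancelled by $\num(\xi)$. Similarly, $y$ has a pole at $\sigma^{k_{\min}}(q)$ that $\xi\sigma(y)$ lacks, unless it is cancelled by $\den(\xi)$. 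The $\sigma$-reducedness of $\xi$ prevents both cancellations from occurring, so some pole outside the allowed set survives. The right-hand side only has poles at elements of $S$ (shift $0$) and at factors of $\den(\xi)$. The strong coprimality of $S$ with $\xi$ then forces a contradiction unless the orbit's surviving pole is exactly at shift $0$ on one side. I expect to need a careful case analysis here, showing the two extreme poles cannot both be legitimate. Once $y$ is shown to be a polynomial, $\Delta_\xi(y)=\cP_\xi(y)/\den(\xi)$, so $r=0$ and $\cP_\xi(y)=v\in\im(\cP_\xi)\cap V_\xi=0$.

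Finally, the projection onto the complement is a complete reduction by Remark~\ref{RE:cr}.
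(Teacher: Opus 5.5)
Your proposal is correct, but it takes a different route from the paper for the directness part; the existence part is the same.

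\textbf{Existence.} Like the paper, you reduce $g$ by the Abramov--Petkov\v{s}ek/CHKL shell reduction to $u+v/\den(\xi)$ with $u\in U_{S,\xi}$, then apply $C[x]=\im(\cP_\xi)\oplus V_\xi$ via $\Delta_\xi(p)=\cP_\xi(p)/\den(\xi)$. One caveat: for $j>0$ the factor $\sigma^j(q)$ may divide $\den(\xi)$ (e.g.\ $\xi=1/(x+5)$, $q=x$, $j=5$), so only $\num(\xi)$ is guaranteed not to interfere. This is exactly where the leftover $p/\den(\xi)$ comes from, and the cited shell reduction handles it.

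\textbf{Directness.} The paper quotes Proposition~3.5 of \cite{CDGHL2025} for $\im(\Delta_\xi)\cap(U_{S,\xi}+V_\xi/\den(\xi))=\{0\}$. It then adds a separate denominator argument for $U_{S,\xi}\cap V_\xi/\den(\xi)=\{0\}$. You instead give a self-contained extremal-pole argument covering both, and it does close. What decides which end survives is the one-sided strong coprimality, not $\sigma$-reducedness.

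Take the orbit of $q\in S$, with extreme shifts $k_{\min}\le k_{\max}$ occurring in $\den(y)$.
\begin{itemize}
\item Suppose $k_{\max}\ge 0$. Then $\sigma^{k_{\max}+1}(q)\nmid\num(\xi)$. So $\Delta_\xi(y)$ has a pole at $\sigma^{k_{\max}+1}(q)\notin S$ of order exceeding that factor's multiplicity in $\den(\xi)$. The right-hand side $r+v/\den(\xi)$ cannot produce such a pole.
\item Hence $k_{\min}\le k_{\max}\le -1$, which gives $\sigma^{k_{\min}}(q)\nmid\den(\xi)$. By minimality $\xi\sigma(y)$ is regular at $\sigma^{k_{\min}}(q)$, so the pole of $y$ there survives in $\Delta_\xi(y)$. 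This is impossible, since $\sigma^{k_{\min}}(q)\notin S$ and $\sigma^{k_{\min}}(q)\nmid\den(\xi)$.
\end{itemize}
Thus $y\in C[x]$. Then $r=0$, because $\den(r)$ is coprime to $\den(\xi)$ and $r$ is proper.

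One wording fix: you need $\Delta_\xi(y)=0$, not $y=0$; for $\xi=1$ the choice $y=1$ shows $y=0$ can fail. The relation $v=\cP_\xi(y)\in\im(\cP_\xi)\cap V_\xi=\{0\}$ gives exactly what is needed.

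\textbf{Comparison.} Your route gives an elementary proof of directness. It also shows that this step only uses that $S$ is a set of strongly coprime orbit representatives; $\sigma$-reducedness enters only through the existence of such $S$ and through the cited reductions. The paper's version is shorter on the page but leaves the substantive step to the literature.
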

\begin{proof} First we prove $C(x)=\im(\Delta_{\xi}) + U_{S,\xi} + \frac{V_{\xi}}{\den(\xi)}$. Let $g\in C(x)$. 
Write it as $\tilde{g}+h$, where $\tilde{g} \in C[x]$ and $h \in C(x)$ with $\deg_x(\num(h))<\deg_x(\den(h))$.
 Let \eqref{Equ:SigmaFacFull} be a strong $\sigma$-factorization of $\den(g)$ with respect to $\xi$. Then by \cite[Lemma~3]{AP2001} or Algorithm {\tt ShellReduction} in \cite[\S 3]{CHKL2015}, one can compute $k_1,\ldots, k_s \in \bN$, $a, h\in C(x)$ with $\deg_x(h)<\deg_x(\prod_{i=1}^{s}p_i^{k_i})$, and $v\in C[x]$ such that
 $g=\Delta_{\xi}(a)+h/\prod_{i=1}^{s}p_i^{k_i}+v/\den(\xi)$; note that here one requires a strong $\sigma$-factorization, which one can compute if $C$ is factorizable. 
 Set $u=h/\prod_{i=1}^{s}p_i^{k_i}$. Then $u \in U_{S,\xi}$. 
 Applying \cite[Lemma 4.2]{CHKL2015} to $v$, we obtain $b\in C[x]$ and $\tilde{v} \in V_{\xi}$ such that $v=\cP_{\xi}(b)+\tilde{v}$. 
 It follows that $g=\Delta_{\xi}(a+b)+u+\tilde{v}/\den(\xi)$.  On the other hand, $\im(\Delta_{\xi}) \cap (U_{S,\xi}+V_{\xi}/\den(\xi))=\{0\}$ by \cite[Proposition~3.5]{CDGHL2025}. Hence, $C(x)=\im(\Delta_{\xi}) \oplus (U_{S,\xi}+V_{\xi}/\den(\xi))$.\\
 Let $w \in U_{S,\xi}\cap V_{\xi}/\den(\xi)$. Suppose that $w \neq 0$. Then $\deg_x(\den(w))>0$. Since $\den(w)$ is strongly coprime with $\den(\xi)$, $\gcd(\den(w), \den(\xi))=1$. So $\den(\xi)w \notin C[x]$, which contradicts to $\den(\xi)w \in V_{\xi}$. Consequently, $C(x)=\im(\Delta_{\xi}) \oplus U_{S,\xi} \oplus V_{\xi}/\den(\xi)$. In particular, the map $\phi_{\xi}:C(x)\to  U_{S,\xi} \oplus V_{\xi}/\den(\xi)$ defined by $\phi_{\xi}(g)=u+\tilde{v}/\den(\xi)$ forms a complete reduction for $(C(x), \Delta_{\xi})$ and $(a+b,u+\tilde{v}/\den(\xi))$ is a $\Sigma$-pair of $g$ with respect to  $\phi_{\xi}$.  By construction, such a $\Sigma$-pair is constructed if $C$ is factorizable.
\end{proof}

 \begin{prop}\label{PROP:Generalrational+highorder}
 Let $(C(x),\,\sigma)$ be the rational difference field with $\sigma(x)=x+1$, where $C$ is factorizable. Then, there is an algorithm that, for every $f\in C(x)^*$ and $i\in\bZ^{+}$, constructs a complete reduction $\phi^{(i)}_f$ for $(C(x), \Delta^{(i)}_f)$. 
 \end{prop}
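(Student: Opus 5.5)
The plan is to chain Lemma~\ref{LM:GeneralRational}, Lemma~\ref{LM:ks} and Proposition~\ref{PROP:CR+Rational}. By Lemma~\ref{LM:GeneralRational}, it suffices to give an algorithm that, for every $f\in C(x)^*$, constructs a complete reduction $\phi_f$ for $(C(x),\Delta_f)$ in the sense of Definition~\ref{DEF:cr}~(iii). That is, we must fix a complement $W_f$ of $\im(\Delta_f)$ and, for every $g\in C(x)$, compute $v\in C(x)$ and $w\in W_f$ with $g=\Delta_f(v)+w$. The higher-order operators $\Delta_f^{(i)}$ are then handled by the isomorphism $\tau\colon w(x)\mapsto w(ix)$ exactly as in the proof of that lemma. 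This transfer is effective, since $\tau$ and $\tau^{-1}$ (namely $w(x)\mapsto w(x/i)$) are computable substitutions.

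For the first-order case, fix $f\in C(x)^*$.
\begin{enumerate}
\item Compute a normal form $(\xi,\eta)$ of $f$ with Algorithm {\tt RNF} from~\cite{AbPe2002b}, so that $\xi$ is $\sigma$-reduced and $f=\xi\sigma(\eta)/\eta$. Note that $\xi\neq 0$ because $f\neq0$.
\item Fix a set $S$ of representatives of the $\stackrel{\sigma}{\sim}$-classes in $M_x$ that are strongly $\sigma$-coprime with $\xi$. This determines $U_{S,\xi}$ in~\eqref{EQ:rationalcomplement}.
\item Compute a $C$-basis of the finite-dimensional space $V_\xi$ via~\cite[\S 4.2]{CHKL2015}.
\end{enumerate}
Proposition~\ref{PROP:CR+Rational} then gives $C(x)=\im(\Delta_\xi)\oplus U_{S,\xi}\oplus V_\xi/\den(\xi)$ and computes $\Sigma$-pairs $(u,r)$ for $\phi_\xi$, because $C$ is factorizable.

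Next apply Lemma~\ref{LM:ks}~(ii). The map $\phi_f=\eta^{-1}\circ\phi_\xi\circ\eta$ is a complete reduction for $(C(x),\Delta_f)$, with complement $W_f=\eta^{-1}\bigl(U_{S,\xi}\oplus V_\xi/\den(\xi)\bigr)$. Given $g$, we compute a $\Sigma$-pair $(u,r)$ of $\eta g$ and return $(\eta^{-1}u,\eta^{-1}r)$.

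The main obstacle is effectiveness and canonicity of the choice of $S$. The set $S$ is infinite, so the complement cannot be listed explicitly. We only need it to be fixed in a well-defined way, and each concrete input must touch only finitely many classes. The first thing I would try is a canonical rule. In each $\stackrel{\sigma}{\sim}$-class, take the representative $q$ with the smallest shift $\sigma^{r}(q)$ that is strongly $\sigma$-coprime with $\xi$. Such an $r$ is computable by~\cite[Lemma~5.3]{Huan2016}, and since shifts are unique for nonconstant polynomials, the choice depends only on the class. With this rule, the strong $\sigma$-factorization of $\den(\eta g)$ used in Proposition~\ref{PROP:CR+Rational} is computed with exactly these representatives, so every output lies in the fixed space $U_{S,\xi}$. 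A smaller point is that $S$ depends on $\xi$, hence on the normal form chosen for $f$. This is harmless as long as the normal form is fixed once per $f$, which holds because {\tt RNF} is deterministic.

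For the transfer to $\Delta_f^{(i)}$, we apply the above with $\tau(f)$ in place of $f$ and set $\phi^{(i)}_f=\tau^{-1}\circ\phi_{\tau(f)}\circ\tau$. By Lemma~\ref{LM:diffiso+CR}, with $(C(x),\sigma^i)$ in the role of $(\tilde{\bA},\tilde\sigma)$, this is a complete reduction for $(C(x),\Delta^{(i)}_f)$, and its $\Sigma$-pairs are obtained by applying $\tau^{-1}$ to those computed for $\tau(g)$.
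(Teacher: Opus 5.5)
Your main chain is exactly the paper's proof and is correct: reduce to $i=1$ by Lemma~\ref{LM:GeneralRational}, compute a normal form $(\xi,\eta)$ with {\tt RNF}, use Proposition~\ref{PROP:CR+Rational} for $\Delta_\xi$, conjugate by $\eta$ via Lemma~\ref{LM:ks}, and transfer back through $\tau$ via Lemma~\ref{LM:diffiso+CR}. What fails is the step you single out as the main obstacle, your ``canonical'' choice of $S$.

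Write a class as $\{\sigma^j(p)\mid j\in\bZ\}$ with $p\in M_x$. Unwinding the definition, $\sigma^j(p)$ is strongly $\sigma$-coprime with $\xi$ if and only if two conditions hold:
every $k$ with $\sigma^k(p)\mid\num(\xi)$ satisfies $k<j$, and
every $k$ with $\sigma^k(p)\mid\den(\xi)$ satisfies $k>j$.
Since $\xi$ is $\sigma$-reduced, at most one of these two index sets is nonempty. So the admissible shifts have a smallest element only when the class meets $\num(\xi)$. If it meets $\den(\xi)$, they form $\{j<k_1\}$; if it meets neither, they form all of $\bZ$. In both cases there is no smallest admissible shift. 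For instance, for $f=\pm1$ (so $\xi\in C$, which covers the plain operator $\Delta$ and Example~\ref{EX:rational}), your rule selects nothing in any class. If instead you measure $r$ relative to the particular $q$ you start from (smallest $|r|$, or smallest $r\ge 0$), the outcome depends on $q$. That contradicts your claim that it depends only on the class. As written, this step does not fix the complement $U_{S,\xi}$ that Definition~\ref{DEF:cr}~(iii) requires.

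The paper needs no canonical rule. Its proof takes $S$ as fixed in Proposition~\ref{PROP:CR+Rational}, and Algorithm~\ref{ALG:CRForRationalFunctions} realizes a fixed $S$ lazily through the global table $\bT$, which stores $(\xi,\eta,S)$ for each $f$. A class receives a representative, already shifted to be strongly $\sigma$-coprime with $\xi$, the first time it occurs, and every later call reuses it. Thus all $\Sigma$-pairs refer to one and the same complete reduction, and you can repair your argument this way. A stateless alternative would take the smallest admissible shift for classes meeting $\num(\xi)$ and the largest for classes meeting $\den(\xi)$. It would still need a separate canonical base point for classes disjoint from $\xi$, and $\xi$ does not supply one.
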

\begin{proof} By Algorithm {\tt RNF} in \cite{AbPe2002b} (which is executable if $C$ is factorizable), we obtain a normal form $(\xi, \eta)$ of $f$. Note that $\xi$ is $\sigma$-reduced. So there is an algorithm for constructing a complete reduction $\phi_{\xi}$ for $(C(x), \Delta_{\xi})$ by Proposition~\ref{PROP:CR+Rational}. In particular, let
  $g\in C(x)$. One can compute a $\Sigma$-pair  $(u, r)$ of $\eta g$ with respect to $\phi_{\xi}$. 
 It follows from Lemma \ref{LM:ks} that $\phi_f:= \eta^{-1} \circ \phi_{\xi} \circ \eta$ is a complete reduction for $(C(x), \Delta_f)$ and $(\eta^{-1}u, \eta^{-1}r)$ is a $\Sigma$-pair of $g$ under $\phi_f$.  Finally, we can apply Lemma \ref{LM:GeneralRational} to obtain the desired result.
\end{proof}

\begin{alg}{\tt CRForRationalFunctions}$(C(x), g, f,i)$ \label{ALG:CRForRationalFunctions}

 \noindent
{\tt Input:} $g, f \in C(x)$ with $f\neq 0$, $i \in \bZ^{+}$. 

\noindent From outside accessible: a table $\bT$ indexed by every element $w$ in $C(x)$, the value of $w$ is $(\xi, \eta, S)$, where $(\xi, \eta)$ is a normal form of $w$ and $S\subseteq M_x$ whose elements are pairwise $\sigma$-coprime and strongly $\sigma$-coprime with $\xi$.

 \noindent
{\tt Output:} A $\Sigma$-pair of $g$ with respect to $\phi_f^{(i)}$ as given in Proposition~\ref{PROP:Generalrational+highorder}.

\smallskip \noindent
\begin{enumerate}
	\item[(1)] $(\tilde{g},\tilde{f}) \leftarrow g(ix), f(ix)$
\item[(2)] {\tt if} $\tilde{f}$ is an index of $\bT$,  {\tt then}
\begin{itemize}
  \item [] $(\xi,\eta, S) \leftarrow$ the value of $\tilde{f}$ 
\end{itemize} 
 {\tt else}
 \begin{itemize}
   \item [] $(\xi, \eta) \leftarrow$  {\tt RNF}$(\tilde{f})$ in \cite{AbPe2002b}, $S \leftarrow \{\}$, add $\tilde{f}=(\xi, \eta, S)$ to $\bT$
 \end{itemize}
 {\tt end if}
 
	\item[(3)] compute the $\Sigma$-pair $(u, r)$ of $\eta \tilde{g}$ with respect to\ $\phi_{\xi}$ by Proposition~\ref{PROP:CR+Rational}.\\
During the computation, update $S$ to  $\tilde{S}$, and update the value of $\tilde{f}$ to  $(\xi, \eta, \tilde{S})$ in $\bT$
	\item[(4)] {\tt return} $\left(\eta^{-1}\left(\frac{x}{i}\right)u\left(\frac{x}{i}\right), \eta^{-1}\left(\frac{x}{i}\right)r\left(\frac{x}{i}\right) \right)$
\end{enumerate}
\end{alg}

\begin{remark}
The detailed procedure for updating $S$ to $\tilde{S}$ in step 3 as follows:
to find a $\Sigma$-pair of $\eta\tilde g$, we start with computing a strong $\sigma$-factorization of $\den(\eta \tilde{g})$ with respect to $\xi$.
 Let the factorization be given by \eqref{Equ:SigmaFacFull}. 
Then one check if there is $q\in S$ and $m\in \bZ$ such that $\sigma^m(p_1)=q$. If this is the case, we replace $p_1$ with $q$, $k_1$, $\ell_1$ by $k_1-m$ and $\ell_1-m$, respectively. Otherwise, we add $p_1$ to $S$. Next we repeat this process for $p_2, \dots, p_s$, simultaneously completing the update of the set $S$.
Finally, we compute the desired $\Sigma$-pair by following the steps in the proof of Proposition~\ref{PROP:CR+Rational}.
\end{remark}

\begin{remark}
The reasons for taking $\bT$ as a global variable in the above algorithm are as follows. When using the algorithm to compute $\Sigma$-pairs with respect to the same complete reduction $\phi_f$ for $(C(x),\Delta_f)$ with $f \in C(x)^*$, but for different inputs $g_1, g_2 \in C(x)$, we must guarantee that the same $\phi_f^{(i)}$ is used. This requires choosing the same $\xi$, $\eta$, and the same set $S$ whenever $f$ appears in multiple computations. Since $S$ is not fixed in advance but constructed online, we reuse previously generated elements (e.g., those for $g_1$) and update $S$ when new elements are needed (e.g., for $g_2$). This is achieved by introducing a global table $\bT$, which stores, for every $w\in C(x)$, the triple $(\xi, \eta, S)$. In subsequent computations, we retrieve this information, thereby ensuring that the same complete reductions are always reused and, as a byproduct, avoiding redundant calculations.
\end{remark}

\begin{example}\label{EX:rational}
Let $f=-1$ and $g=1/(x+1)$. We are going to use Algorithm~\ref{ALG:CRForRationalFunctions} to compute a $\Sigma$-pair of $g$ with respect to $\phi_f^{(2)}$. Initially, we set the table $\bT$ to be empty. In step~$1$, $\tilde{f}=-1$ and $\tilde{g}=1/(2x+1)$. Then $\xi=-1$, $\eta=1$ and $S=\{\}$.
We add the entry $-1 \rightarrow (\xi, \eta, S)$ to $\bT$ in step 2. Let $\phi_{\xi}$ be the complete reduction for $(C(x), \Delta_{\xi})$ as
given in Proposition~\ref{PROP:CR+Rational}. In addition, the proposition delivers  a $\Sigma$-pair $(0, \tilde{g})$ of $\eta \tilde{g}$ with respect to $\phi_{\xi}$ using $S=\{\}$. The entry corresponding to $-1$ in $\bT$ is then updated to $(-1,1, [x+1/2])$. Finally, we get that $(0, 1/(x+1))$ is a  $\Sigma$-pair of $g$ with respect to $\phi_f^{(2)}$.
\end{example}

\subsection{The $R$-case} \label{SUBSECT:RCase}

\begin{convention}\label{CON:RCase}
In this section,  let $(\bA, \, \sigma)$ be a difference ring with constant field $C$, and $\gG$ be a multiplicative subgroup of $\bA^*$ that is closed under $\sigma$. Let $t$ be an $R$-monomial over $\bA$ with $\alpha:=\sigma(t)/t \in \gG$ of order $\lambda$. Fix $f=st^m$, where $s\in G$ and $m\in [\lambda-1]_0$. For $\ell \in \bZ^{+}$, set $a:=\alpha^{\sigma,\ell}$ and $\tilde \sigma:=\sigma^\ell$.
\end{convention}

\begin{hyp} \label{HYP:indR}
We assume  that there is a map that assigns to each element $z\in \gG$ and $i\in \bZ^+$ a 
complete reduction $\phi_z^{(i)}$ for $(\bA, \Delta_z^{(i)})$ as a $C$-linear map.
\end{hyp}

The goal of this section is to construct a complete reduction for $(\bA\langle t \rangle,\, \Delta_{f}^{(\ell)})$ with the assumption that Hypothesis~\ref{HYP:indR} is constructive. Recall that $\Delta_{f}^{(\ell)}=f\sigma^{\ell}-\mathbf{1}$.  In particular, we assume that for any element $g \in \bA$, a $\Sigma$-pair of $g$ with respect to $\phi_z^{(i)}$ can be computed explicitly by the function call \texttt{CRForGroundRing}($\bA,g,z,i$).

\begin{lemma}\label{LM:FormularForKshift}
    Let $u_0:=u\in \bA$ and $i\in [\lambda-1]_0$. Set 
    $ u_j= s\tilde\sigma(u_{j-1}) a^{i+(j-1)m}$ 
    for every $j\in\bZ^+$. Then, for every $k \in \bN$,
    \begin{itemize}
    \item [(i)]  $u_kt^{(i+km) \! \bmod \!\lambda}=u_0t^i+\Delta_f^{(\ell)}\left(\sum_{j=0}^{k-1}u_jt^{(i+jm) \! \bmod \!\lambda}\right);$
    \item [(ii)] $u_k=p_{i,k}\tilde\sigma^k(u_0)$ with $p_{i,k}=(sa^i)^{\tilde\sigma,k}(a^m)^{\tilde\sigma,\tilde\sigma,k}$.
    \end{itemize}
\end{lemma}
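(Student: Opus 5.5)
Both parts will be proved by induction on $k$. The only structural input is how $\tilde\sigma=\sigma^\ell$ acts on powers of $t$. By Lemma~\ref{LM:RPiSigma+term}(ii), $\tilde\sigma(t)=\alpha^{\sigma,\ell}t=at$, and hence $\tilde\sigma(t^r)=a^rt^r$ for every $r\in\bN$. Since $t^\lambda=1$, only the exponent modulo $\lambda$ matters for $t$. The factor $a^r$, however, has to be kept with the actual exponent $r$ unless we know $a^\lambda=1$. Here $a$ is a product of $\sigma$-shifts of $\alpha$, each a $\lambda$th root of unity, so $a^\lambda=1$ does hold. Still, I would avoid relying on this and track the unreduced exponent $i+jm$ throughout, because that is exactly the form appearing in the recursion $u_j=s\tilde\sigma(u_{j-1})a^{i+(j-1)m}$.

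For (i), the key identity is a one-step computation. For $w\in\bA$ and $j\in\bN$,
\[
\Delta_f^{(\ell)}\bigl(w\,t^{(i+jm)\bmod\lambda}\bigr)=s t^m\,\tilde\sigma(w)\,a^{i+jm}t^{i+jm}-w\,t^{i+jm}
= s\tilde\sigma(w)a^{i+jm}\,t^{(i+(j+1)m)\bmod\lambda}-w\,t^{(i+jm)\bmod\lambda}.
\]
Writing $t^{(i+jm)\bmod\lambda}=t^{i+jm}$ is legitimate because $t^\lambda=1$. The only subtle point is the factor $a^{(i+jm)\bmod\lambda}$ versus $a^{i+jm}$, and $a^\lambda=1$ settles it either way.

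Applying this with $w=u_j$ gives $\Delta_f^{(\ell)}(u_jt^{(i+jm)\bmod\lambda})=u_{j+1}t^{(i+(j+1)m)\bmod\lambda}-u_jt^{(i+jm)\bmod\lambda}$. Summing over $j=0,\dots,k-1$ telescopes to $u_kt^{(i+km)\bmod\lambda}-u_0t^i$, which is (i). For $k=0$ the sum is empty and the statement is trivial. Note that $i<\lambda$ gives $t^{i\bmod\lambda}=t^i$.

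For (ii), I argue by induction on $k$. For $k=0$, $p_{i,0}=1$ and the claim is $u_0=u_0$. For the step, assume $u_k=p_{i,k}\tilde\sigma^k(u_0)$. Then
\[
u_{k+1}=s\tilde\sigma(p_{i,k})\,\tilde\sigma^{k+1}(u_0)\,a^{i+km}.
\]
It therefore suffices to show $p_{i,k+1}=s\,a^{i}\,a^{km}\,\tilde\sigma(p_{i,k})$. I would use Remark~\ref{RE:sigma+notation} applied to the difference ring $(\bA,\tilde\sigma)$, which gives two identities:
\[
(sa^i)\,\tilde\sigma\bigl((sa^i)^{\tilde\sigma,k}\bigr)=(sa^i)^{\tilde\sigma,k+1},\qquad (a^m)^k\,\tilde\sigma\bigl((a^m)^{\tilde\sigma,\tilde\sigma,k}\bigr)=(a^m)^{\tilde\sigma,\tilde\sigma,k+1}.
\]
Multiplying these yields exactly the required relation. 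This last bookkeeping, matching the factor $a^{i+km}$ with the two product symbols, is the only step needing care; the rest is direct.
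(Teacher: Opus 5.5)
Your proposal is correct and follows essentially the same argument as the paper. For (i), you telescope the one-step identity $\Delta_f^{(\ell)}(wt^{r})=s\tilde\sigma(w)a^{r}t^{(r+m)\bmod\lambda}-wt^{r}$, using $a^\lambda=1$ to reconcile reduced and unreduced exponents; the paper writes the same computation as an induction on $k$. For (ii), you run the same induction as the paper, via Remark~\ref{RE:sigma+notation} applied to $(\bA,\tilde\sigma)$.
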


\begin{proof} 
	Recall that $a:=\alpha^{\sigma,\ell}$. By Lemma~\ref{LM:RPiSigma+term} (ii), 
	\begin{equation}\label{EQ:Rimage}
		\Delta_f^{(\ell)}(ut^i)=s\tilde\sigma(u)a^it^{(m+i) \! \bmod \!\lambda}-ut^i.
	\end{equation}
	We proceed the lemma by induction on $k$.
		
	(i) The conclusion is evident for the case $k=0$.
	Assume that $k\ge 0$  and  the conclusion holds for $k$. Set $w:=\sum_{j=0}^{k}u_jt^{(i+jm) \! \bmod \!\lambda}$. By the induction hypothesis, \eqref{EQ:Rimage} and the fact that $a^{\lambda}=1$,  we have
	\begin{align*}
		\Delta_f^{(\ell)}\left(w\right)&=\Delta_f^{(\ell)}(u_kt^{(i+km) \! \bmod \!\lambda}) + \Delta_f^{(\ell)}\biggl(\sum_{j=0}^{k-1}u_jt^{(i+jm)\! \bmod \!\lambda}\biggr) \\
		& = \Delta_f^{(\ell)}(u_kt^{(i+km) \! \bmod \!\lambda}) + u_kt^{(i+km) \! \bmod \!\lambda}-u_0t^i \\
		&=s\tilde{\sigma}(u_k)a^{i+km}t^{(i+(k+1)m)\! \bmod \!\lambda}-u_kt^{(i+km)\! \bmod \!\lambda}+ u_kt^{(i+km) \! \bmod \!\lambda}-u_0t^i\\
		&=u_{k+1}t^{(i+(k+1)m)\! \bmod \!\lambda}-u_0t^i.  
	\end{align*}

	(ii) The case $k=0$ holds trivially. Assume that $k>0$ and the conclusion holds for $k$. Then by the induction hypothesis and Remark \ref{RE:sigma+notation}, 
	\begin{align*}
		u_{k+1}&=s\tilde{\sigma}(u_k)a^{i+km}=s \tilde{\sigma}((sa^i)^{\tilde\sigma,k}(a^m)^{\tilde\sigma,\tilde\sigma,k}\tilde{\sigma}^k(u_0))a^{i+km}\\
		&=sa^i\tilde{\sigma}((sa^i)^{\tilde\sigma,k})a^{km}\tilde{\sigma}((a^m)^{\tilde\sigma,\tilde\sigma,k})\tilde\sigma^{k+1}(u_0)\\
		&=(sa^i)^{\tilde\sigma,k+1}(a^m)^{\tilde\sigma,\tilde\sigma,k+1}\tilde\sigma^{k+1}(u_0)=p_{i,k+1}\tilde\sigma^{k+1}(u_0).
	\end{align*}
\end{proof}

\begin{lemma}\label{LM:ReductionForRCase}
	Let $u\in \bA$ and $i\in [\lambda-1]_0$. Set $d:=\gcd(\lambda, m)$ and $\mu:=\lambda/d$. Then there are $q \in \bA \langle t \rangle$, $ i'\in [d-1]_0$ and $r \in \im(\phi_{p_{i',\mu}}^{(\mu\ell)})$ with $p_{i',\mu}=(sa^{i'})^{\tilde\sigma,\mu}(a^m)^{\tilde\sigma,\tilde\sigma,\mu}$ such that
	$ut^i=\Delta_f^{(\ell)}(q)+rt^{i'}$. If $(\bA,\,\sigma)$ is computable and Hypothesis~\ref{HYP:indR} is constructive, such $q$ and $r$ can be computed. 
\end{lemma}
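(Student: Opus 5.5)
We use Lemma~\ref{LM:FormularForKshift} twice: first to move the exponent of $t$ into $[d-1]_0$, then to go once around the cycle of exponents and land in a $\Delta^{(\mu\ell)}_{p_{i',\mu}}$-problem over $\bA$. The exponent bookkeeping rests on one fact: the residues $(i+jm)\bmod\lambda$ for $j\in\bN$ run through exactly the coset $i+d\bZ$ modulo $\lambda$, with period $\mu=\lambda/d$. So there is a unique $i'\in[d-1]_0$ with $i'\equiv i \pmod d$. Since $m/d$ is invertible modulo $\mu$, we can compute $k_0\in[\mu-1]_0$ with $(i+k_0m)\bmod\lambda=i'$.

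First we move $ut^i$ to the exponent $i'$. Apply Lemma~\ref{LM:FormularForKshift}(i) with starting exponent $i$ and $k=k_0$. This gives
\[ut^i=u_{k_0}t^{i'}-\Delta_f^{(\ell)}(q_1),\qquad q_1=\sum_{j=0}^{k_0-1}u_jt^{(i+jm)\bmod\lambda}.\]
By part~(ii), $u_{k_0}=p_{i,k_0}\tilde\sigma^{k_0}(u)$, which is explicit and computable. It remains to reduce $vt^{i'}$ with $v:=u_{k_0}\in\bA$.

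Next we go once around the cycle. Apply Lemma~\ref{LM:FormularForKshift} again, now with starting exponent $i'$ and $k=\mu$. Since $(i'+\mu m)\bmod\lambda=i'$, we get for every $y\in\bA$
\[\Delta_f^{(\ell)}\Bigl(\sum_{j=0}^{\mu-1}y_jt^{(i'+jm)\bmod\lambda}\Bigr)=(p_{i',\mu}\tilde\sigma^{\mu}(y)-y)\,t^{i'}=\Delta^{(\mu\ell)}_{p_{i',\mu}}(y)\,t^{i'},\]
where $y_0=y$ and $y_j$ are the iterates from the lemma. The identification uses $\tilde\sigma^{\mu}=\sigma^{\mu\ell}$.

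Now we invoke the ground-ring reduction. For this we must check that $p_{i',\mu}\in\gG$. Here $s\in\gG$, and $a=\alpha^{\sigma,\ell}$ is a product of $\sigma$-shifts of $\alpha\in\gG$, hence lies in $\gG$ because $\gG$ is a group closed under $\sigma$. The same closure then gives $p_{i',\mu}\in\gG$. So Hypothesis~\ref{HYP:indR} supplies the complete reduction $\phi^{(\mu\ell)}_{p_{i',\mu}}$. Calling \texttt{CRForGroundRing}$(\bA,v,p_{i',\mu},\mu\ell)$ yields $y\in\bA$ and $r\in\im(\phi^{(\mu\ell)}_{p_{i',\mu}})$ with
\[v=\Delta^{(\mu\ell)}_{p_{i',\mu}}(y)+r.\]
Setting
\[q:=-q_1+\sum_{j=0}^{\mu-1}y_jt^{(i'+jm)\bmod\lambda}\]
gives $ut^i=\Delta_f^{(\ell)}(q)+rt^{i'}$, as required.

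Every step is constructive when $(\bA,\sigma)$ is computable: computing $k_0$ and the $u_j,y_j$, the call to the ground reduction, and assembling $q$. The main point to get right is the modular bookkeeping: the exponent returns to $i'$ after exactly $\mu$ steps, and the accumulated multiplier at that moment is precisely $p_{i',\mu}$ as stated. Both follow directly from Lemma~\ref{LM:FormularForKshift}(ii), so the only genuine checks are the group-membership argument above and this arithmetic.
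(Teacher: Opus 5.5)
Your proof is correct and follows the paper's argument essentially step for step. You shift $ut^i$ to the exponent $i'\equiv i \pmod d$ via Lemma~\ref{LM:FormularForKshift}, reduce the resulting coefficient with $\phi^{(\mu\ell)}_{p_{i',\mu}}$ after checking $p_{i',\mu}\in\gG$, and lift the ground-ring difference back by going once around the $\mu$-cycle of exponents; your explicit justification that $a=\alpha^{\sigma,\ell}\in\gG$ and your solvability argument for $k_0$ are slightly more careful than the paper's write-up, but the route is the same.
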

\begin{proof}
Let  $i= i' \bmod d$. Then using the extended Euclidean algorithm, there is $k\in \bN$ such that $i'=(i+km) \bmod \lambda$.
For every $j\in \bN$, set $p_{i,j}=(sa^i)^{\tilde\sigma,j}(a^m)^{\tilde\sigma,\tilde\sigma,j}$. By Lemma~\ref{LM:FormularForKshift}, there is $q_1=-\sum_{j=0}^{k-1} p_{i,j}\tilde\sigma^j(u) t^{(i+jm) \! \bmod \!\lambda}\in \bA\langle t \rangle$ and $r'= p_{i,k}\tilde\sigma^k(u)\in \bA$ such that $ut^i=\Delta_f^{(\ell)}(q_1)+r't^{i'}$.
Since $s,a\in\gG$ and $\gG$ is a multiplicative group that is closed under $\sigma$, we have $p_{i',\mu}\in \gG$. Further, by Hypothesis~\ref{HYP:indR}, there is $v\in \bA$ such that $r'=\Delta^{(\mu\ell)}_{p_{i',\mu}}(v)+\phi_{p_{i',\mu}}^{(\mu\ell)}(r')$.
Using Lemma~\ref{LM:FormularForKshift} again, we get
$$\Delta^{(\mu\ell)}_{p_{i',\mu}}(v)t^{i'}=p_{i',\mu} \tilde\sigma^\mu(v)t^{(i'+\mu m)\! \bmod \!\lambda}-vt^{i'}=\Delta_f^{(\ell)}(q_2)$$
with $q_2=\sum_{j=0}^{\mu-1} p_{i',j}\tilde\sigma^j(v) t^{(i'+jm) \! \bmod \!\lambda} \in \bA\langle t \rangle$, 
which implies that $r't^{i'}=\Delta_f^{(\ell)}(q_2)+\phi_{p_{i',\mu}}^{(\mu\ell)} (r')t^{i'}$.
So $ut^i =\Delta_f^{(\ell)}(q_1+q_2)+\phi_{p_{i',\mu}}^{(\mu\ell)}(r')t^{i'}$. Therefore, the proof is completed by letting $q=q_1+q_2$ and $r=\phi_{p_{i',\mu}}^{(\mu\ell)}(r')$.
\end{proof}

Next, we construct a complementary space of the $C$-subspace $\im (\Delta_f^{(\ell)})$.  

\begin{lemma}\label{LM:ComplementforRExt}
Let $d:=\gcd(\lambda,m)$, $\mu:= \lambda/ d$, and $p_{i,\mu}:=(sa^i)^{\tilde\sigma,\mu}(a^m)^{\tilde\sigma,\tilde\sigma,\mu}$. Then
    $$\bA\langle t\rangle=\im (\Delta_f^{(\ell)})\oplus \biggl(\directsum_{i=0}^{d-1}\im(\phi_{p_{i,\mu}}^{(\mu\ell)})t^i\biggr).$$ 
    In particular, the projection from $\bA\langle t\rangle$ to $ \directsum_{i=0}^{d-1}\im(\phi_{p_{i,\mu}}^{(\mu\ell)})t^i$ is a complete reduction $\psi_f^{(\ell)}$ for $(\bA\langle t\rangle,\, \Delta_f^{(\ell)})$. 
    Moreover, if $(\bA,\,\sigma)$ is computable and Hypothesis~\ref{HYP:indR} is constructive, one can compute a $\Sigma$-pair of any element of $\bA\langle t\rangle$  with respect to $\psi_f^{(\ell)}$.    
\end{lemma}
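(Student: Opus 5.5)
The plan is to prove the two halves of the direct sum separately: first that the sum is all of $\bA\langle t\rangle$, then that it is direct. The spanning part is essentially Lemma~\ref{LM:ReductionForRCase}. Every element of $\bA\langle t\rangle$ is a finite sum $\sum_{i=0}^{\lambda-1}u_it^i$ with $u_i\in\bA$. Applying Lemma~\ref{LM:ReductionForRCase} to each $u_it^i$ gives
\[
u_it^i=\Delta_f^{(\ell)}(q_i)+r_it^{i'}
\]
with $i'\in[d-1]_0$ and $r_i\in\im(\phi^{(\mu\ell)}_{p_{i',\mu}})$. Summing these and collecting terms by the exponent $i'$ places the element in $\im(\Delta_f^{(\ell)})+\sum_{i=0}^{d-1}\im(\phi^{(\mu\ell)}_{p_{i,\mu}})t^i$. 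The same computation is constructive, so a $\Sigma$-pair can be computed whenever Lemma~\ref{LM:ReductionForRCase} is constructive. The inner sum over $i\in[d-1]_0$ is automatically direct, since the powers $t^0,\dots,t^{d-1}$ are distinct basis monomials.

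For directness, I would take $q=\sum_j q_jt^j\in\bA\langle t\rangle$ and suppose $\Delta_f^{(\ell)}(q)=\sum_{i=0}^{d-1}r_it^i$ with $r_i\in\im(\phi^{(\mu\ell)}_{p_{i,\mu}})$. The goal is to show that every $r_i$ vanishes. By \eqref{EQ:Rimage}, $\Delta_f^{(\ell)}$ sends $\bA t^j$ into $\bA t^{j}+\bA t^{(j+m)\bmod\lambda}$. It therefore preserves each coset class $\{t^{(i+km)\bmod\lambda}: k\in\bZ\}$, and these classes are indexed by $i\bmod d$, each having exactly $\mu$ elements. So the equation splits into independent equations, one per class $i\in[d-1]_0$.

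Within the class of $i$, I would write $q_{(k)}$ for the coefficient of $t^{(i+km)\bmod\lambda}$, $k=0,\dots,\mu-1$. Comparing coefficients of $t^{(i+km)\bmod\lambda}$ for $k=1,\dots,\mu-1$ gives the recursion
\[
q_{(k)}=s\tilde\sigma(q_{(k-1)})a^{i+(k-1)m},
\]
which is exactly the recursion of Lemma~\ref{LM:FormularForKshift}. Hence Lemma~\ref{LM:FormularForKshift}(ii) yields $q_{(k)}=p_{i,k}\tilde\sigma^k(q_{(0)})$. Comparing the coefficient of $t^i$ then gives
\[
r_i=s\tilde\sigma(q_{(\mu-1)})a^{i+(\mu-1)m}-q_{(0)}=p_{i,\mu}\tilde\sigma^{\mu}(q_{(0)})-q_{(0)}=\Delta^{(\mu\ell)}_{p_{i,\mu}}(q_{(0)}).
\]
This step uses $a^\lambda=1$ and $(i+\mu m)\bmod\lambda=i$, which holds because $\lambda\mid\mu m$.

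It follows that $r_i\in\im(\Delta^{(\mu\ell)}_{p_{i,\mu}})\cap\im(\phi^{(\mu\ell)}_{p_{i,\mu}})=\{0\}$, since $\phi^{(\mu\ell)}_{p_{i,\mu}}$ is a complete reduction by Hypothesis~\ref{HYP:indR} (note $p_{i,\mu}\in\gG$). The projection statement then follows from Definition~\ref{DEF:cr}.

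The main obstacle I expect is the index bookkeeping in this coefficient comparison. One has to check that, for fixed $i$, the exponents $(i+km)\bmod\lambda$ for $k=0,\dots,\mu-1$ are pairwise distinct and return to $i$ only at $k=\mu$. This holds because the order of $m$ in $\bZ/\lambda\bZ$ is $\mu$. One must also make sure the products $a^{i+(k-1)m}$ line up with the definition of $p_{i,\mu}$ without needing to reduce the exponents modulo $\lambda$; this is fine because $a$ is a $\lambda$th root of unity.
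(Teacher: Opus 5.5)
Your proof is correct and takes the same route as the paper. Spanning comes from applying Lemma~\ref{LM:ReductionForRCase} to each $u_it^i$. Directness comes from following the orbit $(i+km)\bmod\lambda$, $k=0,\dots,\mu-1$, and using Lemma~\ref{LM:FormularForKshift}(ii) to identify the $t^i$-coefficient as $\Delta^{(\mu\ell)}_{p_{i,\mu}}(q_{(0)})\in\im(\phi^{(\mu\ell)}_{p_{i,\mu}})$, which forces it to be $0$. Your coset phrasing is just a tidier way of stating the paper's observation that $i_1,\dots,i_{\mu-1}$ avoid $[d-1]_0$.
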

\begin{proof} 
Recall that $C=\const(\bA,\,\sigma)=\const(\bA\langle t \rangle,\sigma)$ because $t$ is an $R$-monomial over $\bA$. Then the $\im(\phi_{p_{i,\mu}}^{(\mu\ell)})t^i$ are $C$-subspaces of $\bA \langle t \rangle$, which implies that	
$W:=\sum_{i=0}^{d-1}\im(\phi_{p_{i,\mu}}^{(\mu\ell)})t^i$ forms a $C$-subspace of $\bA\langle t \rangle $. 
 Every element in $ \bA\langle t \rangle$ can be uniquely written as $\sum_{i=0}^{\lambda-1}g_it^i$, where $g_i \in \bA$. 
 By Lemma~\ref{LM:ReductionForRCase}, we obtain $g_it^i=\Delta^{(\ell)}_f(q_i)+r_{i}$, where
 $q_i\in\bA\langle t\rangle$ and $r_{i}\in \im(\phi_{p_{i',\mu}}^{(\mu\ell)})t^{i'}$ for some $i'\in [d-1]_0$. Consequently, by setting $q=\sum_{i=0}^{\lambda-1}q_i\in\bA\langle t\rangle$ and $r=\sum_{i=0}^{\lambda-1}r_{i}\in W$, we conclude that $g=\Delta^{(\ell)}_f(q)+r$. 
 Hence, $\bA\langle t\rangle=\im (\Delta_f^{(\ell)})+W$. If $(\bA,\,\sigma)$ is computable and Hypothesis~\ref{HYP:indR} is constructive, such $q$ and $r$ can be computed explicitly.
 
    Assume that $g\in \im (\Delta_f^{(\ell)})\cap W$. Then $g=\Delta_f^{(\ell)}(q)$ for some $q\in \bA\langle t \rangle$. Write $q$ as $\sum_{i=0}^{\lambda-1}q_i t^i$, where $q_i\in \bA$.
    Let $i_0\in [d-1]_0$. For every $j\in\bZ^+$, set $i_j$ to be $(i_{j-1}+m) \! \bmod \!\lambda$. Note that $i_j=i_0$ if and only if $\lambda$ divides $jm$.
    So the minimal element in $\{j \in \bZ^{+} \mid i_j=i_0\}$ is equal to $\mu$.
   In other words, $i_0,\ldots,i_{\mu-1}$ are pairwise distinct.
   Moreover, $i_1,\ldots,i_{\mu-1}$ strictly lie between $d-1$ and $\lambda$. By the construction of $i_j$, for every $j \in [\mu-1]$, the coefficient  of $t^{i_j}$ in $\Delta_f^{(\ell)}(q)$ is  $s\tilde\sigma(q_{i_{j-1}}) a^{i_{j-1}} - q_{i_j}$. 
   And $s\tilde\sigma(q_{i_{\mu-1}}) a^{i_{\mu-1}} - q_{i_0}$ is the coefficient  of $t^{i_0}$ in $\Delta_{f}^{(\ell)}(q)$, denoted by $w$. Clearly, $w\in \im(\phi_{p_{i,\mu}}^{(\mu\ell)})$.
   Since $g$ is spanned by $1,t,\ldots,t^{d-1}$, $q_{i_j}=s\tilde\sigma(q_{i_{j-1}}) a^{i_{j-1}}$ for every $j \in [\mu-1]$. Replacing $i$, $u_0$ with $i_0$ and $q_{i_{0}}$  in Lemma~\ref{LM:FormularForKshift}~(ii), respectively, we get $s\tilde\sigma(q_{i_{\mu-1}}) a^{i_{\mu-1}}=p_{i_0,\mu}\tilde\sigma^\mu(q_{i_0})$. 
   It follows that 
   $w=p_{i_0,\mu}\tilde\sigma^\mu(q_{i_0}) - q_{i_0}=\Delta^{(\mu\ell)}_{p_{i_0,\mu}}(q_{i_0})$. Hence, $w=0$
   by $w \in \im(\phi_{p_{i,\mu}}^{(\mu\ell)}) \cap \im (\Delta^{(\mu\ell)}_{p_{i_0,\mu}})=\{0\}$. That means, for every $i_0\in[d-1]_0$, the coefficient of $t^{i_0}$ in $g$ must be equal to zero. Consequently, $g=0$. 
\end{proof}

 The constructive version of Lemma~\ref{LM:ComplementforRExt} can be summarized by the following algorithm.
 
 \begin{alg}{\tt CRForSimpleR-Extensions}$(\bA\langle t\rangle,g,f,\ell)$\label{ALG:Rcase}
	
	\noindent
	{\tt Input:} The data in Convention \ref{CON:RCase}, $g\in \bA\langle t\rangle$; we assume that $(\bA,\,\sigma)$ is computable and Hypothesis~\ref{HYP:indR} is constructive.
	
	\noindent
	{\tt Output:} A $\Sigma$-pair of $g$ with respect to $\psi_{f}^{(\ell)}$ as given in Lemma~\ref{LM:ComplementforRExt}.
    
    \noindent
   {\tt Remark:} Set $p_{i, j}:=(sa^i)^{\sigma,j}(a^m)^{\sigma,\sigma,j}$ for every $i\in \bZ$ and $j\in \bN$ in the following pseudo-code.
	
	\smallskip \noindent
	\begin{enumerate}
		\item[(1)] $d \leftarrow \gcd(\lambda,m)$, $\mu \leftarrow\lambda/ d$, $(q, r) \leftarrow 0,0$
		\item[(2)] {\tt for} $i$ {\tt from} $0$ {\tt to} $\lambda-1$ {\tt do} 
		\begin{itemize}
			\item []  $i'\leftarrow i \bmod  d$, \, $u \leftarrow \coeff(g,t,i)$, \, 
			\item [] compute $k\in \bN$ s.t.\ $i'= (i+km) \bmod \lambda$ 
            \item [] $(q, r') \leftarrow q-\sum_{j=0}^{k-1}p_{i,j}\tilde{\sigma}^j(u)t^{(i+jm) \! \bmod \!\lambda},\, p_{i,k}\tilde{\sigma}^k(u)$
			\item [] $(v, w) \leftarrow$ \texttt{CRForGroundRing}($\bA,r',p_{i', \mu},\ell$),
            \item [] $(q, r) \leftarrow q+\sum_{j=0}^{\mu-1}p_{i',j}\tilde{\sigma}^j(v)t^{(i'+jm) \! \bmod \!\lambda},\,  r+wt^{i'}$
		\end{itemize} 
		{\tt end do}
		\item[(3)] {\tt return} $(q, r)$	
	\end{enumerate}
\end{alg}   

 \begin{example}\label{EX:Rmonomials}
 Let $\bA:=C(x)$ with $\sigma(x)=x+1$, $\gG=C(x)^*$ and $t$ be an $R$-monomial over $\bA$ with $\sigma(t)=-t$. For $f=-t$ and $g=t/x$, we compute a $\Sigma$-pair of $g$ with respect to $\psi_f$. Note that $\lambda=2$,  $\ell=m=1$, $s=a=-1$, $\tilde{\sigma}=\sigma$ as in Convention \ref{CON:RCase}. It follows that $d=1$, $\mu=2$ and $p_{0, \mu}=-1$ as given in Lemma~\ref{LM:ComplementforRExt}. Moreover, $\bA \langle t\rangle=\im(\Delta_f)\oplus \im(\phi_{p_{0,2}}^{(2)})$. By Lemma \ref{LM:FormularForKshift}, $g=\Delta_{f}(-t/x)+r$, where $r=1/(x+1)$. Moreover, $r \in \im(\phi_{p_{0,2}}^{(2)})$ by Example~\ref{EX:rational}. So $(-t/x, 1/(x+1))$ is a $\Sigma$-pair of $g$ with respect to $\psi_f$.
 \end{example}

Lemma~\ref{LM:ComplementforRExt} leads directly to the following proposition, which improves upon~\cite[Proposition~7.12]{Schn2016} by extending the construction from telescoping to complete reductions without requiring the solution of difference equations in the ground ring.

\begin{prop}\label{Prop:NestRExt}
Let $(\bA, \, \sigma)$ be a computable difference ring with constant field $C$
and $(\bE_n,\,\sigma)$ with $\bE_n:=\bA\langle t_1\rangle \ldots \langle t_n\rangle$ be a simple $R$-tower over $(\bA, \, \sigma)$. 
Assume that there is an algorithm which, for every $z\in \bA^*$ and $i\in \bZ^{+}$, constructs a complete reduction $\phi_z^{(i)}$ for $(\bA,\,\Delta_{z}^{(i)})$ (viewed as a $C$-linear map). Then there is an algorithm that,
for every $f \in {(\bA^*)}_{\bA}^{\bE_n}$ and $\ell \in \bZ^{+}$, constructs a complete reduction $\psi_f^{(\ell)}$ as a $C$-linear map for $(\bE_n,\, \Delta_f^{(\ell)})$. 
   \end{prop}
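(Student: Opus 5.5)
We argue by induction on $n$, applying Lemma~\ref{LM:ComplementforRExt} once per $R$-monomial. The induction hypothesis we carry is stronger than the statement: for every $f\in(\bA^*)_{\bA}^{\bE_{k}}$ and every $i\in\bZ^{+}$ there is an algorithm constructing a complete reduction for $(\bE_k,\Delta_f^{(i)})$, together with $\Sigma$-pairs. For $k=0$ we have $(\bA^*)_{\bA}^{\bA}=\bA^*$, so this is exactly the hypothesis of the proposition.

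For the step from $\bE_{k-1}$ to $\bE_k=\bE_{k-1}\langle t_k\rangle$, we apply Convention~\ref{CON:RCase} with ground ring $\bE_{k-1}$ and group $\gG:=(\bA^*)_{\bA}^{\bE_{k-1}}$. Several conditions have to be checked.

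First, $\const(\bE_{k-1},\sigma)=C$ holds because we have an $R$-tower.

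Second, $\gG$ is a multiplicative subgroup of $\bE_{k-1}^*$. This is clear, since each $t_j$ is a unit because $t_j^{\lambda_j}=1$.

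Third, $\gG$ is closed under $\sigma$. We have $\sigma(\bA^*)\subseteq\bA^*$. By simplicity, $\sigma(t_j)=(\sigma(t_j)/t_j)\,t_j$ with $\sigma(t_j)/t_j\in(\bA^*)_{\bA}^{\bE_{j-1}}\subseteq\gG$. Hence $\sigma$ maps each generator of $\gG$ into $\gG$.

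Fourth, $\alpha_k:=\sigma(t_k)/t_k\in\gG$, which is again the simplicity assumption.

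Fifth, every $f\in(\bA^*)_{\bA}^{\bE_k}$ has the form $f=s\,t_k^{m}$ with $s\in\gG$ and $m\in[\lambda_k-1]_0$, where $\lambda_k=\ord(t_k)$. To see this, reduce the exponent of $t_k$ modulo $\lambda_k$ using $t_k^{\lambda_k}=1$.

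Hypothesis~\ref{HYP:indR} for $(\bE_{k-1},\gG)$ is precisely the strengthened induction hypothesis. Lemma~\ref{LM:ComplementforRExt} therefore yields a complete reduction $\psi_f^{(\ell)}$ for $(\bE_k,\Delta_f^{(\ell)})$, for every $\ell\in\bZ^+$. Algorithm~\ref{ALG:Rcase} computes $\Sigma$-pairs for it, and it only calls the level-$(k-1)$ procedure as \texttt{CRForGroundRing}. Computability of $(\bE_{k-1},\sigma)$ follows from computability of $(\bA,\sigma)$, as noted after the definition of $APS$-extensions.

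The main obstacle is the one just flagged. Lemma~\ref{LM:ComplementforRExt} calls the ground-level reductions with the arguments $p_{i,\mu}=(sa^{i})^{\tilde\sigma,\mu}(a^m)^{\tilde\sigma,\tilde\sigma,\mu}$ and with the shift $\mu\ell$ in place of $\ell$. So the induction only closes if the hypothesis is stated for all $z\in\gG$ and all shift orders $i$ simultaneously, rather than for a single $f$ or for $\Delta$ alone. We must then check that $p_{i,\mu}\in\gG$, which follows since $\gG$ is a $\sigma$-closed group and $s,a\in\gG$.

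A second point needs care: the maps must be genuinely well defined as $C$-linear operators. The complete reduction used at level $k-1$ for a given pair $(z,i)$ has to be fixed once and for all, independently of the input $g$; this is the role played by the global table in Algorithm~\ref{ALG:CRForRationalFunctions}. Once this is ensured, the projection in Lemma~\ref{LM:ComplementforRExt} is determined by the direct sum, and taking $k=n$ gives the proposition.
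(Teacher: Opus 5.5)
Your proposal is correct and follows the paper's proof essentially step for step. You induct on the number of $R$-monomials and take $\gG=(\bA^*)_{\bA}^{\bE_{n-1}}$ in Convention~\ref{CON:RCase}. Simplicity gives that $\gG$ is $\sigma$-closed and that $\sigma(t_n)/t_n\in\gG$, and you write $f=st_n^m$ with $m\in[\lambda-1]_0$. You then apply Lemma~\ref{LM:ComplementforRExt}, with Hypothesis~\ref{HYP:indR} realized by the fixed assignment $(z,i)\mapsto\phi^{(i)}_{n-1,z}$.

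One small correction: the induction hypothesis you call ``stronger than the statement'' is just the proposition itself applied to $\bE_{k}$. The statement already quantifies over all $f$ and all $\ell$, and this is exactly how the paper sets up the induction.
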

  \begin{proof} 
We proceed by induction on $n$. It is evident for the case $n=0$.
Assume that, for every $z \in {(\bA^*)}_{\bA}^{\bE_{n-1}}$ and $i \in \bZ^{+}$, one can construct a complete reduction $\phi_{n-1,z}^{(i)}$ as a $C$-linear map for $(\bE_{n-1},\, \Delta_z^{(i)})$ and assume that the map in Hypothesis~\ref{HYP:indR} is defined by $(z,i)\rightarrow \phi_{n-1,z}^{(i)}$. Since $(\bE_n,\sigma)$ is a simple $R$-tower over $(\bA,\,\sigma)$, ${(\bA^*)}_{\bA}^{\bE_{n-1}}$ is closed under $\sigma$ and $\sigma(t_n)/t_n \in {(\bA^*)}_{\bA}^{\bE_{n-1}}$. Moreover, ${(\bA^*)}_{\bA}^{\bE_{n}}=\{st_n^m~|~s\in {(\bA^*)}_{\bA}^{\bE_{n-1}}~\text{and}~m\in [\lambda-1]_0\}$ with $\ord(t_n)=\lambda$. For every $f\in {(\bA^*)}_{\bA}^{\bE_{n}}$ and $\ell \in \bZ^{+}$,  a complete reduction $\psi_{f}^{(\ell)}$ for $(\bE_n, \Delta_f^{(\ell)})$ is  obtained from Lemma~\ref{LM:ComplementforRExt}. The map in Hypothesis~\ref{HYP:indR} is then defined by $(f, \ell) \mapsto \psi_{f}^{(\ell)}$. 
\end{proof}

For instance, by specializing the ground difference ring $\bA$ in Proposition~\ref{Prop:NestRExt} to the rational difference field, one can use Algorithm~\ref{ALG:CRForRationalFunctions} in the base case to obtain a complete algorithm for simple $R$-towers over $(C(x),\sigma^k)$ for every $k \in \bZ^{+}$. For simplicity, we only state the case $k=1$. The same simplification is also made in Algorithms \ref{ALG:CRForSimpleRPi-Towers} and \ref{ALG:CRForSimpleRPiSigma-Towers} in the sequel.

\begin{alg}\label{ALG:CRForSimpleR-Towers}
{\tt CRForSimpleR-Towers}$(\bE_n, g, f,\ell)$

 \noindent
{\tt Input:} A simple R-tower $(\bE_n,\sigma)$ with $\bE_n:=C(x)\langle t_1\rangle \ldots \langle t_n\rangle$ over the rational difference field $(C(x),\,\sigma)$, $f\in (C(x)^*)_{C(x)}^{\bE_n}$, $g \in\bE_n$ and $\ell \in \bZ^{+}$; we assume that $C$ is factorizable.

\noindent
{\tt Output:} A $\Sigma$-pair of $g$ with respect to $\psi_{f}^{(\ell)}$ as given in Proposition~\ref{Prop:NestRExt}.
\begin{enumerate}
	\item[]  {\tt if} $n=0$ {\tt then} 
\begin{itemize}
  \item [] {\tt return} {\tt CRForRationalFunctions}$(C(x),g, f, \ell)$ (*Algorithm~\ref{ALG:CRForRationalFunctions}*)
\end{itemize}
\item[] {\tt else}
 \begin{itemize}
   \item[] {\tt return}  {\tt CRForSimpleR-Extensions}$(\bE_{n-1}\langle t_n \rangle, g, f,\ell)$
   \item[]\hspace*{1.2cm}\begin{minipage}{9cm} where  the command \texttt{CRForGroundRing} is replaced by \texttt{CRForSimpleR-Towers} (*Algorithm~\ref{ALG:Rcase}*)
   \end{minipage}  
   
 \end{itemize}
\item[]  {\tt end if}
\end{enumerate}
\end{alg}

\subsection{The $\Pi$-case}\label{SUBSECT:PiCase}

\begin{convention}\label{CON:PiCase}
In this section, let $(\bA, \, \sigma)$ be a difference ring with constant field $C$, and $\gG$ be a multiplicative subgroup of $\bA^*$. Let $t$ be a $\Pi$-monomial over $\bA$ with $a:=\sigma(t)/t \in \gG$. Set $f=st^m$, where $s\in G$ and $m\in \bZ$.
\end{convention}

\begin{hyp} \label{HYP:PiInd}
We assume that there is a map that assigns to each element $z$ of $\gG$ 
a complete reduction $\phi_z$ for $(\bA,\, \Delta_z)$. 
\end{hyp}

We are going to construct a complete reduction for $(\bA\langle t \rangle, \,\Delta_{f})$ under the assumption that Hypothesis~\ref{HYP:PiInd} is constructive. In particular, we assume that, for any element $g$ in $\bA$, a $\Sigma$-pair of $g$ with respect to $\phi_z$ can be computed explicitly  by the function call \texttt{CRForGroundRing}($\bA,g,z$).

To this end,  we need some preparation.

 \begin{lemma}\label{LM:PiFormularForKshift}
    Let $i\in \bZ$ and $u_0:=u\in \bA$. Set $p_{i,j}:=(sa^i)^{\sigma,j}(a^m)^{\sigma,\sigma,j}$
    for every $j\in\bZ^+$. Then, for every $k \in \bZ^{+}$,  there is  $v_i=\sum_{j=0}^{k-1}p_{i,j}\sigma^j(u)t^{i+jm} \in \bA[t, t^{-1}]$ such that
    \begin{equation*}\label{EQ:image}
    p_{i,k}\sigma^k(u) t^{i+km}= ut^i+\Delta_f(v_i).
    \end{equation*}
\end{lemma}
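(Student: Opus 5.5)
The plan is to mirror the proof of Lemma~\ref{LM:FormularForKshift}, now in the Laurent polynomial ring $\bA[t,t^{-1}]$. Here exponents are not reduced modulo $\lambda$, and $\sigma$ plays the role of $\tilde\sigma$. We also set $p_{i,0}:=1$ and $u_j:=p_{i,j}\sigma^j(u)$, consistent with $a^{\sigma,0}=a^{\sigma,\sigma,0}=1$.

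The first step is the elementary shift formula. Since $\sigma(t)=at$, Lemma~\ref{LM:RPiSigma+term}~(ii) with $\ell=1$ gives $\sigma(t^n)=a^nt^n$ for all $n\in\bZ$; for negative $n$ this uses $\sigma(t^{-1})=a^{-1}t^{-1}$. Hence for $w\in\bA$ and $n\in\bZ$,
\[
\Delta_f(wt^n)=st^m\sigma(w)a^nt^n-wt^n=sa^n\sigma(w)\,t^{n+m}-wt^n .
\]

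The second step is the recursion for the coefficients $p_{i,j}$:
\[
p_{i,j+1}\sigma^{j+1}(u)=sa^{i+jm}\,\sigma\bigl(p_{i,j}\sigma^j(u)\bigr).
\]
I would prove this by direct expansion using Remark~\ref{RE:sigma+notation}. Writing $sa^{i+jm}=(sa^i)(a^m)^j$, the identity $(sa^i)\sigma((sa^i)^{\sigma,j})=(sa^i)^{\sigma,j+1}$ handles the first factor. The identity $(a^m)^j\sigma((a^m)^{\sigma,\sigma,j})=(a^m)^{\sigma,\sigma,j+1}$ handles the second. This is exactly the computation in part~(ii) of Lemma~\ref{LM:FormularForKshift}.

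The third step is a telescoping sum, which avoids a separate induction. By the first step, each term satisfies
\[
\Delta_f\bigl(u_jt^{i+jm}\bigr)=sa^{i+jm}\sigma(u_j)t^{i+(j+1)m}-u_jt^{i+jm},
\]
and by the second step this equals $u_{j+1}t^{i+(j+1)m}-u_jt^{i+jm}$. Summing over $j=0,\dots,k-1$ telescopes to $u_kt^{i+km}-u_0t^i$. With $v_i:=\sum_{j=0}^{k-1}u_jt^{i+jm}$, this rearranges to the claimed identity $p_{i,k}\sigma^k(u)t^{i+km}=ut^i+\Delta_f(v_i)$.

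The only delicate point is the bookkeeping in the second step: the exponent $i+jm$ of $a$ must be split correctly into the contributions of the two product symbols. Everything else is routine.
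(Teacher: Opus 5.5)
Your proposal is correct and is essentially the paper's argument. The paper simply defers to the proof of Lemma~\ref{LM:FormularForKshift}: part~(ii) there is your coefficient recursion via Remark~\ref{RE:sigma+notation}, and its induction on $k$ in part~(i) is your telescoping sum unrolled (your explicit convention $p_{i,0}:=1$ also usefully fills a gap, since the statement defines $p_{i,j}$ only for $j\in\bZ^+$ but uses $p_{i,0}$ in $v_i$).
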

\begin{proof}
The proof is analogous to that of Lemma~\ref{LM:FormularForKshift}.
\end{proof}

Recall that $C=\const(\bA,\,\sigma)=\const(\bA[t,t^{-1}],\,\sigma)$, since $t$ is a $\Pi$-monomial over $\bA$.
Note further that $\bA[t,t^{-1}]=\bA[t] \oplus t^{-1}\bA[t^{-1}]$, where $\bA[t]$ and $t^{-1}\bA[t^{-1}]$ are understood as $C$-subspaces of $\bA[t,t^{-1}]$. For $k \in \bZ^{+}$, set $$\bA[t]_{<k}:=\{p \in \bA[t] \mid \deg_t(p)<k\}.$$ 

Next, we construct the complementary space of the $C$-subspace $\im(\Delta_f)$ based on the different values of $m$.  In other words, we construct a $C$-subspace $W$ of $\bA[t,t^{-1}]$ such that
$\bA[t,t^{-1}]=\im(\Delta_f)\oplus W$.

\begin{lemma}\label{LM:m>0}
If $m>0$, then $\bA[t,t^{-1}]=\im(\Delta_f) \oplus \bA[t]_{<m}.$ In particular, for $g \in \bA[t, t^{-1}]$, one can explicitly compute $q \in \bA[t, t^{-1}]$ such that $g-\Delta_{f}(q)\in \bA[t]_{<m}$.
\end{lemma}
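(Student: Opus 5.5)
The plan is to split the argument into existence of the decomposition (a degree reduction, done monomial by monomial with Lemma~\ref{LM:PiFormularForKshift}) and directness of the sum (a head/tail degree argument). Note that $W:=\bA[t]_{<m}$ is a $C$-subspace because $C=\const(\bA[t,t^{-1}],\sigma)\subseteq\bA$.

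\emph{Existence.} By linearity it suffices to treat a monomial $ut^i$ with $u\in\bA$ and $i\in\bZ$. Write $i=i'+km$ with $i'\in[m-1]_0$ and $k\in\bZ$.
\begin{itemize}
\item If $k>0$, I apply Lemma~\ref{LM:PiFormularForKshift} \emph{backwards}. Since $f=st^m$ and $\sigma(t)=at$, we have $\Delta_f(yt^{j})=s\,a^{j}\sigma(y)t^{j+m}-yt^{j}$. Choose $y:=\sigma^{-1}\bigl(u/(sa^{i-m})\bigr)$, which is possible because $s,a$ are units. Then $ut^i=\Delta_f(yt^{i-m})+yt^{i-m}$, which lowers the exponent by $m$. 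Iterating $k$ times lands at exponent $i'\in[m-1]_0$.
\item If $k<0$, Lemma~\ref{LM:PiFormularForKshift} directly gives $ut^i=p_{i,|k|}\sigma^{|k|}(u)t^{i+|k|m}-\Delta_f(v_i)$. This raises the exponent to $i'$.
\end{itemize}
Summing over all monomials of $g$ yields an explicit $q$ with $g-\Delta_f(q)\in\bA[t]_{<m}$. Every step uses only ring operations, $\sigma^{\pm1}$, and inversion of the units $s,a$, so it is computable.

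\emph{Directness.} Let $q=\sum_{j=\ell}^{h}q_jt^j\neq0$ with $q_h,q_\ell\neq0$. The coefficient of $t^{h+m}$ in $\Delta_f(q)$ is $sa^h\sigma(q_h)$. Since $h+m>h$, no other term contributes to this exponent, and $sa^h\sigma(q_h)\neq0$ because $sa^h$ is a unit and $\sigma$ is injective. Hence $\hdeg_t(\Delta_f(q))=h+m$. Similarly, the coefficient of $t^{\ell}$ is $-q_\ell\neq0$, and no shifted term reaches exponent $\ell$ since $m>0$, so $\tdeg_t(\Delta_f(q))=\ell$.

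Now suppose $\Delta_f(q)\in\bA[t]_{<m}$. Then $\ell\ge0$ and $h+m\le m-1$, i.e.\ $h\le -1<0\le\ell$. This is impossible, so $q=0$. Thus $\im(\Delta_f)\cap\bA[t]_{<m}=\{0\}$.

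This argument does not even need the $\Pi$-monomial criterion, only that $s$ and $a$ are units and that $m>0$. The main point requiring care is the backward reduction step, where invertibility of $s$ and $a$ and bijectivity of $\sigma$ must be used; I expect no further obstacle.
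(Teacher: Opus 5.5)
Your proposal is correct and follows essentially the same route as the paper: a monomial-by-monomial reduction into $\bA[t]_{<m}$, using Lemma~\ref{LM:PiFormularForKshift} for negative exponents and a backward version (relying on the units $s,a$ and on $\sigma^{-1}$) for exponents $\ge m$, followed by a head/tail-degree argument for the trivial intersection. The only cosmetic difference is that you iterate the single backward step $k$ times, whereas the paper applies Lemma~\ref{LM:PiFormularForKshift} once to $i'$ and $\sigma^{-k}(u/p_{i',k})$.
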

\begin{proof} 
    Let $i\in\bZ$ and $u\in \bA$. First we show that $ut^i \in \im (\Delta_f)+ \bA[t]_{< m}$. If $i \in [m-1]_0$, $ut^i \in \bA[t]_{<m}$. Set $p_{i,j}:=(sa^i)^{\sigma,j}(a^m)^{\sigma,\sigma,j}$
    for every $j\in\bZ^+$. If $i<0$, then there is $k \in \bZ^{+}$ and $i'\in [m-1]_0$  such that $i'=i+km$.
    By Lemma~\ref{LM:PiFormularForKshift}, we can take  $v=-\sum_{j=0}^{k-1}p_{i,j}\sigma^{j}(u)t^{i+jm}$ such that
    \[ut^i-\Delta_f(v)=p_{i,k}\sigma^k(u) t^{i'}\in \bA[t]_{< m}.\]
    Assume that $i\ge m$.
    Then there are $i'\in [m-1]_0$ and $k\in \bZ^{+}$ such that $i'+km=i$. 
    Note that $p_{i',k} \in \bA^*$ by $s, a\in G$. 
   Replacing $i$ and $u$ in Lemma~\ref{LM:PiFormularForKshift} with $i'$ and $\sigma^{-k}\left(u/p_{i',k}\right)$, respectively, we can take  $v=\sum_{j=0}^{k-1}p_{i',j}\sigma^{j-k}\left(\frac{u}{p_{i', k}}\right)t^{i'+jm}$ 
   such that
    \[ u t^{i}-\Delta_f(v)=\sigma^{-k}\left(\frac{u}{p_{i',k}}\right)t^{i'}\in \bA[t]_{< m}.\]

Let $g\in\bA[t,t^{-1}]$. We decompose $g$ as $g = g^{+}+g^{-}$, where $g^{+}\in\bA[t]$ and $g^{-}\in t^{-1}\bA[t^{-1}]$. 
Based on the above analysis, we can construct $q^{-} \in t^{-1}\bA[t^{-1}]$ and $r^{-} \in \bA[t]_{<m}$ such that  $g^{-}=\Delta_f(q^{-})+r^{-}$. 
Similarly, we can construct $q^{+} \in \bA[t]$ and $r^{+} \in \bA[t]_{<m}$ such that $g^{+}=\Delta_f(q^{+})+r^{+}$. 
Set $q:=q^{+}+q^{-}\in\bA[t,t^{-1}]$ and $r:=r^{+}+r^{-}\in\bA[t]_{<m}$. 
Then $g=\Delta_f(q)+r$, which implies that $\bA[ t,t^{-1}]=\im (\Delta_f)+ \bA[t]_{< m}$. 
 
It remains to show that $\im(\Delta_f) \cap \bA[t]_{<m}=\{0\}$. Suppose that $p \in \im(\Delta_f) \cap \bA[t]_{<m}$ is nonzero. Then there is $q \in \bA[t,t^{-1}]$ such that $p=\Delta_f(q)$.  
We arrange $q$ in descending degree, and write it as $q_{k}t^{k}+\cdots+q_{\ell}t^{\ell}$, where $q_{\ell}, q_k \in \bA$ with $q_{\ell}q_k \neq 0$.
Since $m>0$, $\tdeg_t(\Delta_f(q))=\ell$. So $\tdeg_t(q)=\tdeg_t(\Delta_f(q))$, which implies that $q\in \bA[t]$ by $\Delta_f(q)=p\in \bA[t]$.
Moreover, the leading term of $\Delta_f(q)$ is equal to $sa^k\sigma(q_k)t^{m+k}$ and $k\ge 0$. 
 Then $sa^k\sigma(q_k)=0$ by $\deg_t(p)<m$. 
 Note that $s, a \in \bA^*$ and $\sigma$ is an automorphism. 
 So $q_k=0$, which implies that $q=0$, a contradiction. 
 Consequently, $\bA[t,t^{-1}]=\im(\Delta_f) \oplus \bA[t]_{<m}.$ 
\end{proof}

The proof and the underlying construction for the case $m<0$, summarized in the next lemma, proceed analogously and are thus omitted.

\begin{lemma}\label{LM:m<0}
If $m<0$, then $\bA[t,t^{-1}]=\im(\Delta_f) \oplus t^m\bA[t]_{<-m}.$ In particular, for $g \in \bA[t, t^{-1}]$, one can explicitly compute $q \in \bA[t, t^{-1}]$  
such that $g-\Delta_{f}(q)\in t^m\bA[t]_{<-m}$.
\end{lemma}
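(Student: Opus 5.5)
The plan is to mirror the proof of Lemma~\ref{LM:m>0}, with the roles of head and tail degrees interchanged. Write $n:=-m>0$; the claimed complement is $W:=t^m\bA[t]_{<n}=\spa\{ut^{i}\mid u\in\bA,\ i\in[-n,-1]\}$. I first show $\bA[t,t^{-1}]=\im(\Delta_f)+W$ by reducing each monomial $ut^i$ with $u\in\bA$, $i\in\bZ$. If $i\in[m,-1]$ there is nothing to do.

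If $i\ge 0$, choose $k\in\bZ^{+}$ and $i'\in[m,-1]$ with $i'=i+km$. Lemma~\ref{LM:PiFormularForKshift} then gives $v=-\sum_{j=0}^{k-1}p_{i,j}\sigma^j(u)t^{i+jm}$ with
\[ut^i-\Delta_f(v)=p_{i,k}\sigma^k(u)t^{i'}\in W.\]
This moves the exponent downward by steps of $|m|$.

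If $i<m$, write $i=i'+km$ with $i'\in[m,-1]$ and $k\in\bZ^{+}$. Here $p_{i',k}\in\bA^*$, since $s,a\in\gG\subseteq\bA^*$ and products of shifts of units are units. Applying Lemma~\ref{LM:PiFormularForKshift} with $i'$ and $\sigma^{-k}(u/p_{i',k})$ in place of $i$ and $u$, we take
\[v=\sum_{j=0}^{k-1}p_{i',j}\sigma^{j-k}\!\left(\frac{u}{p_{i',k}}\right)t^{i'+jm},\]
which gives $ut^i-\Delta_f(v)=\sigma^{-k}(u/p_{i',k})\,t^{i'}\in W$. Summing over the finitely many monomials of $g$ yields an explicit $q$ with $g-\Delta_f(q)\in W$. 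This also settles the computability claim, since only ring operations, $\sigma^{\pm1}$ and inversion of the known units $s,a$ are used.

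For directness, suppose $0\neq p=\Delta_f(q)\in W$ with $q=q_kt^k+\dots+q_\ell t^\ell$ and $q_kq_\ell\neq0$. The main point is a degree comparison. Because $m<0$, the term $f\sigma(q)$ has head degree $k+m<k$ and tail degree $\ell+m<\ell$. Therefore $\hdeg_t(\Delta_f(q))=k$, with coefficient $-q_k$, and $\tdeg_t(\Delta_f(q))=\ell+m$, with coefficient $sa^\ell\sigma(q_\ell)\neq0$; this coefficient is nonzero because $s,a$ are units and $\sigma$ is injective. Since $p\in W$, we need $k\le -1$ and $\ell+m\ge m$, i.e.\ $\ell\ge0$. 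This contradicts $\ell\le k$, so $q=0$ and then $p=0$. Hence the sum is direct.

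I expect no real obstacle. The only care needed is to check that $p_{i',k}$ is indeed a unit, and that the index bookkeeping ($i'\in[m,-1]$ and the sign of $k$) matches Lemma~\ref{LM:PiFormularForKshift}, which is stated for forward steps $t^{i}\to t^{i+km}$. The upward reduction for $i<m$ therefore has to be obtained by solving backward via $\sigma^{-k}$, exactly as in the case $i\ge m$ of Lemma~\ref{LM:m>0}.
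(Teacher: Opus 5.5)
Your proof is correct and is the mirror image of the proof of Lemma~\ref{LM:m>0}, which is what the paper intends when it omits this case as analogous: the same downward and backward reductions via Lemma~\ref{LM:PiFormularForKshift} (using $p_{i',k}\in\bA^*$ for the backward step), and the same head/tail degree comparison for directness. As a side remark, step~(3) of Algorithm~\ref{ALG:Picase} reduces into the exponent window $m<i'\le 0$ rather than $[m,-1]$; the same argument shows that window is also a valid complement, but it gives a different projection from the one described in the lemma.
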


\begin{lemma}\label{LM:m=0}
If $m=0$, then $\bA[t,t^{-1}]=\im(\Delta_f) \oplus (\directsum_{i \in \bZ}{\im(\phi_{sa^i})t^i}).$ In particular, for every $g \in \bA[t, t^{-1}]$, one can construct $q \in \bA[t, t^{-1}]$ 
such that $g-\Delta_{f}(q)\in \directsum_{i \in \bZ}{\im(\phi_{sa^i})t^i}$.
\end{lemma}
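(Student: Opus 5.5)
When $m=0$ we have $f=s\in\gG\subseteq\bA^*$, and $\Delta_f$ acts degree-wise on $\bA[t,t^{-1}]$. By Lemma~\ref{LM:RPiSigma+term}(ii), for $u\in\bA$ and $i\in\bZ$,
\[
\Delta_f(ut^i)=s\sigma(u)a^it^i-ut^i=\Delta_{sa^i}(u)\,t^i .
\]
This is the case $k=1$ of Lemma~\ref{LM:PiFormularForKshift}, since $p_{i,1}=sa^i$ when $m=0$. Hence $\Delta_f$ is the direct sum over $i\in\bZ$ of the operators $\Delta_{sa^i}$ acting on the homogeneous components $\bA t^i$. The plan is to reduce everything to Hypothesis~\ref{HYP:PiInd}, applied coefficient by coefficient.

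For the sum decomposition, write $g=\sum_i g_it^i$ with only finitely many $g_i\neq 0$. Since $s,a\in\gG$ and $\gG$ is a group, each $sa^i\in\gG$. Hypothesis~\ref{HYP:PiInd} (via \texttt{CRForGroundRing}$(\bA,g_i,sa^i)$) then gives a $\Sigma$-pair $(v_i,\phi_{sa^i}(g_i))$ with
\[
g_i=\Delta_{sa^i}(v_i)+\phi_{sa^i}(g_i).
\]
Set $q=\sum_i v_it^i$ and $r=\sum_i\phi_{sa^i}(g_i)t^i$. Both are finite sums, and $g=\Delta_f(q)+r$. This proves $\bA[t,t^{-1}]=\im(\Delta_f)+\sum_{i}\im(\phi_{sa^i})t^i$ and gives the constructive part of the statement.

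For directness, suppose $\Delta_f(q)=r$ with $r=\sum_i r_it^i$ and each $r_i\in\im(\phi_{sa^i})$. Comparing coefficients of $t^i$, using the degree-preserving formula above and the uniqueness of Laurent coefficients, gives $\Delta_{sa^i}(q_i)=r_i$. Thus $r_i\in\im(\Delta_{sa^i})\cap\im(\phi_{sa^i})=\{0\}$ by Definition~\ref{DEF:cr}, so $r=0$. The sum $\sum_i\im(\phi_{sa^i})t^i$ is itself direct because distinct powers of $t$ are linearly independent over $\bA$. Each summand is a $C$-subspace, since $C=\const(\bA,\sigma)$ and $C$ is also the constant field of $\bA[t,t^{-1}]$ because $t$ is a $\Pi$-monomial.

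No step is a real obstacle; the only point needing care is that each coefficient uses a different operator $\Delta_{sa^i}$. This is why Hypothesis~\ref{HYP:PiInd} must supply a complete reduction for every $z\in\gG$, rather than for $f$ alone, and why we need $sa^i\in\gG$ for all $i$.
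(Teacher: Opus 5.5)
Your proposal is correct and follows the paper's proof essentially step for step. The paper uses the same coefficient-wise identity $\Delta_f(ut^i)=\Delta_{sa^i}(u)\,t^i$, applies Hypothesis~\ref{HYP:PiInd} to each $g_i$ using $sa^i\in\gG$, and compares coefficients to get $r_i\in\im(\Delta_{sa^i})\cap\im(\phi_{sa^i})=\{0\}$. One minor point on citation: the identity $\sigma(t^i)=a^it^i$ follows simply from $\sigma$ being a ring automorphism with $\sigma(t)=at$, whereas Lemma~\ref{LM:RPiSigma+term}(ii) is about iterates $\sigma^\ell(t)$; this is cosmetic and does not affect the argument.
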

\begin{proof} 
Since $m=0$,  
\begin{equation}\label{EQ:m=0}
\Delta_f(ut^i)=(sa^i\sigma(u)-u)t^i 
\end{equation}
for every $u \in \bA$ and $i \in \bZ$.
 Let $g=\sum_{i\in \bZ}g_it^i$, where $g_i \in \bA$, finitely many nonzero. Note that $s, a \in G$ and $G$ is a multiplicative group. So $sa^i\in G$. By Hypothesis~\ref{HYP:PiInd}, $g_i=\Delta_ {sa^i}(u_i)+\phi_{sa^i}(g_i)$ for some $u_i \in \bA$, which, together with \eqref{EQ:m=0}, implies that $g_it^i=\Delta_f(u_i t^i)+\phi_{sa^i}(g_i)t^i$. Set $q:=\sum_{i \in \bZ} u_it^i$ and $r=\sum_{i \in \bZ}\phi_{sa^i}(g_i)t^i$. Then $g=\Delta_f(q)+r$.
 So $\bA[t,t^{-1}] = \im(\Delta_f) +(\sum_{i \in \bZ}{\im(\phi_{sa^i})t^i})$. 
  Finally, we show that $\im(\Delta_f) \cap (\directsum_{i \in \bZ}\im(\phi_{sa^i})t^i)=\{0\}$. Let $p\in \im(\Delta_f) \cap (\directsum_{i \in \bZ}\im(\phi_{sa^i})t^i)$. Then there is $v \in \bA[t,t^{-1}]$ such that $p=\Delta_f(v)$. Write $p, v$ as $\sum_{i\in \bZ}p_it^i, \sum_{i\in \bZ} v_it^i$, where $p_i, v_i \in \bA$, finitely many nonzero, respectively. 
 Then $p=\sum_{i\in \bZ}(\Delta_{sa^i}(v_i)t^i)$, which, together with $p \in \sum_{i \in \bZ}\im(\phi_{sa^i})t^i$, 
 implies that $p_i \in \im(\Delta_{sa^i})\cap \im(\phi_{sa^i})$ for all $i\in \bZ$. Hence, $p_i=0$. 
\end{proof}

Combining Lemmas \ref{LM:m>0}, \ref{LM:m<0} and \ref{LM:m=0}, we obtain

\begin{lemma}\label{LM:PiExt}
One can construct a $C$-subspace $W$ such that $\bA[t, t^{-1}]=\im(\Delta_f)\oplus W$. Depending on whether $m>0$, $m<0$ or $m=0$,  $W$ is given respectively by $\bA[t]_{<m}$, $t^m\bA[t]_{<-m}$ or $\directsum_{i \in \bZ}{\im(\phi_{sa^i})t^i} $. Moreover, the projection from $\bA[t, t^{-1}]$ onto $W$ is a complete reduction $\psi_f$ for $(\bA[t,t^{-1}],\, \Delta_f)$. If $(\bA,\,\sigma)$ is computable and Hypothesis~\ref{HYP:PiInd} is constructive, one can compute $\Sigma$-pairs for all elements in $\bA[t,t^{-1}]$ with respect to $\psi_f$. 
\end{lemma}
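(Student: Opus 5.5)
The statement is essentially a bookkeeping combination of Lemmas~\ref{LM:m>0}, \ref{LM:m<0} and~\ref{LM:m=0}, so the plan is a case split on the sign of $m$ followed by the general observation of Remark~\ref{RE:cr}. First I fix $W$ according to $m$: $W=\bA[t]_{<m}$ if $m>0$, $W=t^m\bA[t]_{<-m}$ if $m<0$, and $W=\sum_{i\in\bZ}\im(\phi_{sa^i})t^i$ if $m=0$. In each case the corresponding lemma gives directly $\bA[t,t^{-1}]=\im(\Delta_f)\oplus W$. Before invoking them I will check that $W$ is a $C$-subspace. Since $t$ is a $\Pi$-monomial, $C=\const(\bA,\sigma)=\const(\bA[t,t^{-1}],\sigma)$. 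Hence $\bA[t]_{<m}$ and $t^m\bA[t]_{<-m}$ are $C$-subspaces. Each $\im(\phi_{sa^i})$ is a $C$-subspace of $\bA$ because $\phi_{sa^i}$ is $C$-linear, and $sa^i\in\gG$ since $\gG$ is a group.

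Next I define $\psi_f$ as the projection onto $W$ along $\im(\Delta_f)$. Explicitly, each $g$ is written uniquely as $\Delta_f(q)+r$ with $r\in W$, and I set $\psi_f(g)=r$. Uniqueness of the decomposition makes $\psi_f$ well defined and $C$-linear, and it satisfies $\psi_f^2=\psi_f$ with $\ker\psi_f=\im(\Delta_f)$. By Remark~\ref{RE:cr}, or directly by Definition~\ref{DEF:cr}(i), $\psi_f$ is a complete reduction for $(\bA[t,t^{-1}],\Delta_f)$.

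For the algorithmic part I will go back to the constructive content of the three lemmas. When $m\neq0$, the reduction only uses Lemma~\ref{LM:PiFormularForKshift} term by term. This requires computing $p_{i,j}$, $\sigma^{\pm j}$, and inverses of $p_{i',k}\in\gG\subseteq\bA^*$, and all of these are available because $(\bA,\sigma)$ is computable. When $m=0$, I apply \texttt{CRForGroundRing}$(\bA,g_i,sa^i)$ to each of the finitely many nonzero coefficients $g_i$; this is possible because Hypothesis~\ref{HYP:PiInd} is constructive. Summing the resulting pieces gives $q$ and $r=\psi_f(g)$, so $(q,\psi_f(g))$ is a $\Sigma$-pair.

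The only point needing care is the case $m<0$, whose proof was omitted. I would handle it by mirroring Lemma~\ref{LM:m>0} with head degree replaced by tail degree. Alternatively, one could substitute $t\mapsto t^{-1}$, which is again a $\Pi$-monomial with $\sigma(t^{-1})=a^{-1}t^{-1}$ and $a^{-1}\in\gG$. Under this substitution $f=s(t^{-1})^{-m}$ with $-m>0$, so Lemma~\ref{LM:m>0} applies and gives the complement $\bA[t^{-1}]_{<-m}$. The ring $\bA[t,t^{-1}]$ is unchanged by this substitution, and a complement is only determined up to the choice of representative. So this alternative would yield the complement $\bA[t^{-1}]_{<-m}$ instead of $t^m\bA[t]_{<-m}$. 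I would therefore simply appeal to Lemma~\ref{LM:m<0} as stated.
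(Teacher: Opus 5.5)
Your proposal is correct and matches the paper. The paper obtains this lemma directly by combining Lemmas~\ref{LM:m>0}, \ref{LM:m<0} and~\ref{LM:m=0}, and then reads off the projection as a complete reduction as in Remark~\ref{RE:cr}. Your side remark on $m<0$ is also right: any window of $|m|$ consecutive exponents gives a complement, so both $t^m\bA[t]_{<-m}$ and $\bA[t^{-1}]_{<-m}$ work. The latter is in fact the window $m<i\le 0$ used in step~(3) of Algorithm~\ref{ALG:Picase}.
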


Its algorithmic version can be summarized as follows.

\begin{alg}{\tt CRForSimple$\Pi$-Extensions}$(\bA\langle t\rangle,g,f)$\label{ALG:Picase}

 \noindent
{\tt Input:} The data in Convention \ref{CON:PiCase}, $g\in \bA\langle t\rangle$; we assume that $(\bA,\,\sigma)$ is computable and Hypothesis~\ref{HYP:PiInd} is constructive.

\noindent
{\tt Output:} A $\Sigma$-pair of $g$ with respect to $\psi_{f}$ as given in Lemma~\ref{LM:PiExt}.

\noindent
{\tt Remark:} Set $p_{i, j}:=(sa^i)^{\sigma,j}(a^m)^{\sigma,\sigma,j}$ for every $i\in \bZ$ and $j\in \bN$ in the following pseudo-code.

\smallskip \noindent
\begin{enumerate}
	\item[(1)] $(d, \ell) \leftarrow \hdeg_t(g), \tdeg_t(g)$, $(q, r) \leftarrow 0, 0$

\item[(2)] {\tt if} $m>0$ {\tt then}
\begin{itemize}
  \item [] {\tt for} $i$ {\tt from} $\ell$ {\tt to} $d$ {\tt do}
  \begin{itemize}
    \item [] $u \leftarrow \coeff(g, t, i)$
    \item [] {\tt if} $0 \le i <m$ {\tt then} $r \leftarrow r+ut^i$ {\tt end if}
    \item [] {\tt if} $i<0$ {\tt then}
    \begin{itemize}
      \item []  compute $k \in \bZ^{+}$ and $i' \in [m-1]_0$ s.t. $i'=i+km$
      \item [] $(q, r) \leftarrow q-\sum_{j=0}^{k-1}p_{i,j}\sigma^{j}(u)t^{i+jm}, r+p_{i,k}\sigma^{k}(u)t^{i'}$
    \end{itemize}
     \item []{\tt end if}
    \item [] {\tt if} $i\ge m$ {\tt then}
    \begin{itemize}
      \item []  compute $k \in \bZ^{+}$ and $i' \in [m-1]_0$ s.t. $i'+km=i$
      \item [] $(q, r) \leftarrow q+\sum_{j=0}^{k-1}p_{i',j}\sigma^{j-k}\left(\frac{u}{p_{i', k}}\right)t^{i'+jm}, r+\sigma^{-k}\left(\frac{u}{p_{i', k}}\right)t^{i'}$
   \end{itemize}
    {\tt end if}
  \end{itemize}
   \item[] {\tt end do} 
\end{itemize}
{\tt end if}
\item[(3)] {\tt if} $m<0$ {\tt then}
\begin{itemize}
  \item [] {\tt for} $i$ {\tt from} $\ell$ {\tt to} $d$ {\tt do}
  \begin{itemize}
    \item [] $u \leftarrow \coeff(g, t, i)$
    \item [] {\tt if} $m < i \le 0$ {\tt then} $r \leftarrow r+ut^i$ {\tt end if}
    \item [] {\tt if} $i>0$ {\tt then}
    \begin{itemize}
      \item []  compute $k \in \bZ^{+}$ and $m <i'\le 0 $ s.t. $i'=i+km$
      \item [] $(q, r) \leftarrow q-\sum_{j=0}^{k-1}p_{i,j}\sigma^{j}(u)t^{i+jm}, r+p_{i,k}\sigma^{k}(u)t^{i'}$
    \end{itemize}
     \item []{\tt end if}
   \item [] {\tt if} $i\le m$ {\tt then}
    \begin{itemize}
      \item []  compute $k \in \bZ^{+}$ and $m <i'\le 0$ s.t. $i'+km=i$
      \item [] $(q, r) \leftarrow q+\sum_{j=0}^{k-1}p_{i',j}\sigma^{j-k}\left(\frac{u}{p_{i', k}}\right)t^{i'+jm}, r+\sigma^{-k}\left(\frac{u}{p_{i', k}}\right)t^{i'}$
    \end{itemize}
    {\tt end if}
  \end{itemize}
   \item[] {\tt end do} 
\end{itemize}
{\tt end if}
\item[(4)] {\tt if} $m=0$ {\tt then}
\begin{itemize}
  \item [] {\tt for} $i$ {\tt from}  $\ell$ {\tt to} $d$ {\tt do}
  \begin{itemize}
    \item [] $u \leftarrow \coeff(g, t, i)$
	\item [] $(v, w) \leftarrow$ \texttt{CRForGroundRing}($\bA,u,sa^i$)
    \item[] $(q, r) \leftarrow q+vt^i, r+wt^i$
    \end{itemize}
   \item[] {\tt end do} 
\end{itemize}
{\tt end if}
	\item[(5)] {\tt return} $(q, r)$
	
\end{enumerate}
 \end{alg}   
 
 \begin{example}\label{EX:RPiCase}
We will simplify the sum
$$\sum_{k=1}^{n}\frac{(1+k+2(-1)^kk)(-1)^{\lfloor \frac{k}{2} \rfloor}2^k}{k(1+k)}$$
as follows.
Let $(\bA,\,\sigma)$ with $\bA:=C(x)\langle y_1\rangle \langle y_2\rangle$ be the $R$-tower over $(C(x),\,\sigma)$, where  $\sigma(y_1)=-y_1$ and $\sigma(y_2)=-y_1y_2$. Take a $\Pi$-monomial $t$ over $\bA$ such that $\sigma(t)=2t$. Then $(\bE,\,\sigma)$ with $\bE:=\bA\langle t\rangle$ forms a simple $R\Pi$-tower over $(C(x),\,\sigma)$. Moreover,
$y_1$, $y_2$ and $t$ model $(-1)^k$, $(-1)^{\lfloor \frac{k}{2} \rfloor}$ and $2^k$, respectively. 
 The given summand is represented by $g:=\tilde{g}t\in\bE$  with $$\tilde{g}:=\frac{(1+x+2xy_1)y_2}{x(1+x)}.$$

We apply Algorithm~\ref{ALG:Picase} to compute a $\Sigma$-pair of $g$ with respect to $\psi_1$ as given in Lemma~\ref{LM:PiExt}. Note that the values in Convention \ref{CON:PiCase} are $a=2$,  $s=1$ and $m=0$.
So we enter step 4 of Algorithm~\ref{ALG:Picase} and 
it suffices to compute a $\Sigma$-pair of $\tilde{g}$ with respect to $\phi_2$ in $\bA$, where $\phi_2$ is a complete reduction for $(\bA,\, \Delta_2)$. Applying Algorithm~\ref{ALG:CRForSimpleR-Towers} to $\tilde{g}$ yields the $\Sigma$-pair $(-y_2/x, 0)$ of $\tilde{g}$  with respect to $\phi_2$. Hence, $(-y_2t/x, 0)$ is a $\Sigma$-pair of $g$ with respect to $\psi_1$. Rephrasing the arising variables to their corresponding summation objects, telescoping leads to the identity
 $$\sum_{k=1}^{n}\frac{(1+k+2(-1)^kk)(-1)^{\lfloor \frac{k}{2} \rfloor}2^k}{k(1+k)}=-\frac{(-1)^{\lfloor \frac{n+1}{2} \rfloor} 2^{n+1}}{n+1}-2.$$

\end{example}

Iterative application of Lemma~\ref{LM:PiExt} (similar to the proof of Proposition~\ref{Prop:NestRExt}) yields the following proposition, extending~\cite[Theorem~7.1~(ii)]{Schn2016} from telescoping to complete reductions.

\begin{prop}\label{PROP:NestedPiCase}
Let $(\bA, \, \sigma)$ be a computable difference ring with constant field $C$, and $G$ be a multiplicative subgroup of $\bA^*$.  Let 
 $(\bE_n,\,\sigma)$ with $\bE_n:=\bA\langle p_1 \rangle \ldots \langle p_n \rangle$ be a $G$-simple $\Pi$-tower over $(\bA,\,\sigma)$.  Assume that there is an algorithm which, for every $z\in G$, constructs a complete reduction for $(\bA,\, \Delta_z)$. Then there is an algorithm that, for every $f\in {G}_{\bA}^{\bE_n}$, constructs a complete reduction for $(\bE_n,\, \Delta_f)$. 
\end{prop}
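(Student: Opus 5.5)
We argue by induction on $n$, in the same way as in the proof of Proposition~\ref{Prop:NestRExt}, with Lemma~\ref{LM:PiExt} now playing the role of Lemma~\ref{LM:ComplementforRExt}. For $n=0$ we have $\bE_0=\bA$ and ${G}_{\bA}^{\bE_0}=G$, so the assumed algorithm for $(\bA,\Delta_z)$, $z\in G$, is exactly what is required.

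For the induction step, assume that for every $z\in {G}_{\bA}^{\bE_{n-1}}$ we can construct a complete reduction $\phi_{n-1,z}$ for $(\bE_{n-1},\Delta_z)$, together with $\Sigma$-pairs for all elements. We then apply the setting of Convention~\ref{CON:PiCase} with the following data:
\begin{itemize}
\item ground ring $\bE_{n-1}$;
\item group $\tilde G:={G}_{\bA}^{\bE_{n-1}}$;
\item monomial $t=p_n$.
\end{itemize}
Three facts need checking.

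First, $\tilde G$ is a multiplicative subgroup of $\bE_{n-1}^*$. Indeed, $G\subseteq\bA^*$, and each $p_i$ is a unit in the Laurent polynomial ring, so products $gp_1^{k_1}\cdots p_{n-1}^{k_{n-1}}$ are units, and the set is closed under products and inverses.

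Second, $a:=\sigma(p_n)/p_n\in\tilde G$. This is precisely the $G$-simplicity of $\bE_n$, because $\sigma(p_n)/p_n$ lies in $\bE_{n-1}$ and has the form $gp_1^{k_1}\cdots p_n^{k_n}$ only with $k_n=0$.

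Third, every $f\in {G}_{\bA}^{\bE_n}$ can be written as $f=sp_n^m$ with $s\in\tilde G$ and $m\in\bZ$. Hypothesis~\ref{HYP:PiInd} then holds constructively via $z\mapsto\phi_{n-1,z}$. We also note that $\const(\bE_{n-1},\sigma)=C$, since $\bE_n$ is a $\Pi$-tower, and that $(\bE_{n-1},\sigma)$ is computable.

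Lemma~\ref{LM:PiExt} now gives a complete reduction $\psi_f$ for $(\bE_n,\Delta_f)$ with computable $\Sigma$-pairs, which completes the induction.

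The main point to check is that the recursive calls stay inside the allowed groups. In case $m=0$ of Lemma~\ref{LM:PiExt} we need $\phi_{sa^i}$, and $sa^i\in\tilde G$ because both $s$ and $a$ lie in the group $\tilde G$. In the cases $m\neq0$ no recursive call occurs; we only need $p_{i,k}\in\bE_{n-1}^*$, which holds because $s,a$ are units and $\sigma$ preserves units. Unlike the $R$-case, no closure of $\tilde G$ under $\sigma$ is required, since only $\Delta_z$ with shift order $1$ appears.

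Finally, the resulting map is a single well-defined $C$-linear complete reduction for each $f$. This holds because the inductive assignment $f\mapsto\psi_f$ is fixed once and for all, with the same choice of ground reductions $\phi_{n-1,z}$ used throughout.
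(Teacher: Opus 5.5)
Your proposal is correct and follows essentially the same route as the paper: induction on $n$, applying Lemma~\ref{LM:PiExt} over the ground ring $\bE_{n-1}$ with the group ${G}_{\bA}^{\bE_{n-1}}$, exactly as in the proof of Proposition~\ref{Prop:NestRExt}. The checks you list are the right ones, and the paper only sketches them: the units property, $\sigma(p_n)/p_n\in {G}_{\bA}^{\bE_{n-1}}$, $sa^i$ staying in the group, and no $\sigma$-closure being needed in the $\Pi$-case.
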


In particular, combining Propositions~\ref{Prop:NestRExt} and~\ref{PROP:NestedPiCase}  (after reordering it accordingly to~\eqref{Equ:ReorderedSimpleExt}), we obtain the following proposition.

\begin{prop}\label{PROP:NestedRPiExt}
Let $(\bA, \, \sigma)$ be a computable difference ring with constant field $C$ and
 $(\bE,\,\sigma)$ be a simple $R\Pi$-tower over $(\bA, \, \sigma)$.
Assume that there is an algorithm which, for every $z\in \bA^*$ and $i\in \bZ^{+}$, constructs a complete reduction $\phi_z^{(i)}$  for $(\bA,\,\Delta_{z}^{(i)})$ as a $C$-linear map. Then there is an algorithm that, for every $f\in {(\bA^*)}_{\bA}^{\bE}$, constructs a complete reduction $\rho_f$ for $(\bE,\, \Delta_f)$. 
\end{prop}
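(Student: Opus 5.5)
We reorder the simple $R\Pi$-tower as in \eqref{Equ:ReorderedSimpleExt}, so that
\[
\bE=\bE_1\langle p_1\rangle\ldots\langle p_{\ell_2}\rangle,\qquad \bE_1:=\bA\langle y_1\rangle\ldots\langle y_{\ell_1}\rangle,
\]
where the $y_i$ are $R$-monomials and the $p_j$ are $\Pi$-monomials. Reordering is a difference ring isomorphism fixing $\bA$. A complete reduction transported along such an isomorphism is again a complete reduction, by the same argument as in Lemma~\ref{LM:diffiso+CR}. Hence we may assume $\bE$ is already in this form.

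\textbf{Step 1.} Apply Proposition~\ref{Prop:NestRExt} to the simple $R$-tower $\bE_1$ over $\bA$. This uses the assumed algorithm for $\phi_z^{(i)}$ on $(\bA,\Delta_z^{(i)})$ with $z\in\bA^*$. It yields an algorithm that, for every $z\in(\bA^*)_{\bA}^{\bE_1}$ and $i\in\bZ^+$, constructs a complete reduction for $(\bE_1,\Delta_z^{(i)})$. In particular, taking $i=1$, we obtain complete reductions for $(\bE_1,\Delta_z)$ for every $z\in G:=(\bA^*)_{\bA}^{\bE_1}$.

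\textbf{Step 2.} Apply Proposition~\ref{PROP:NestedPiCase} with ground ring $\bE_1$ and the group $G$. First we check the hypotheses.
\begin{itemize}
\item[(a)] $G$ is a multiplicative subgroup of $\bE_1^*$. Indeed, each $y_i$ is a unit since $y_i^{\lambda_i}=1$, and $\bA^*\subseteq\bE_1^*$.
\item[(b)] $(\bE_1,\sigma)$ is computable and has constant field $C$, because $\bE_1$ is an $R$-tower.
\item[(c)] The $\Pi$-tower $\bE$ over $\bE_1$ is $G$-simple. By simplicity of $\bE$ over $\bA$, each $\sigma(p_j)/p_j$ lies in $(\bA^*)_{\bA}^{\bE}$, i.e.\ it has the form $u\,y^{\mathbf{k}}p_1^{e_1}\cdots p_{j-1}^{e_{j-1}}$ with $u\in\bA^*$. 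This is precisely an element of $G_{\bE_1}^{\bE_1\langle p_1\rangle\ldots\langle p_{j-1}\rangle}$. We also need to know that $\sigma(p_j)/p_j$ can only involve $p_1,\dots,p_{j-1}$, which holds after the reordering of \cite[Lemma~4.10]{Schn2016}.
\end{itemize}
Proposition~\ref{PROP:NestedPiCase} then yields, for every $f\in G_{\bE_1}^{\bE}$, a complete reduction $\rho_f$ for $(\bE,\Delta_f)$.

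\textbf{Step 3.} It remains to observe that
\[
G_{\bE_1}^{\bE}=\{g\,p_1^{k_1}\cdots p_{\ell_2}^{k_{\ell_2}}\mid g\in G\}=(\bA^*)_{\bA}^{\bE},
\]
since both sets consist of the products $u\,y_1^{k_1}\cdots y_{\ell_1}^{k_{\ell_1}}p_1^{e_1}\cdots p_{\ell_2}^{e_{\ell_2}}$ with $u\in\bA^*$.

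The main obstacle I expect is the bookkeeping in (c) together with the compatibility of the reordering with the groups $(\bA^*)_{\bA}^{\bE}$. A permutation of the generators maps this group onto the corresponding group of the reordered tower, since the group is defined by monomials in the $R\Pi$-generators. This fact has to be stated so that the induced complete reduction in the original ordering is legitimate; the remaining steps are direct applications of the two propositions.
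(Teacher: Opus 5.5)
Your proposal is correct and is essentially the paper's argument. After reordering to the form \eqref{Equ:ReorderedSimpleExt}, Proposition~\ref{Prop:NestRExt} (with $i=1$) supplies complete reductions on the $R$-tower $\bE_1$ for every $z\in(\bA^*)_{\bA}^{\bE_1}$, and Proposition~\ref{PROP:NestedPiCase} with $G=(\bA^*)_{\bA}^{\bE_1}$ finishes, using $G_{\bE_1}^{\bE}=(\bA^*)_{\bA}^{\bE}$; your hypothesis checks are more explicit than the paper's one-line justification, and the fact that $\sigma(p_j)/p_j$ involves only $p_1,\dots,p_{j-1}$ follows from the tower definition itself, not from the reordering.
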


Specializing the ground difference ring $(\bA,\,\sigma)$ to the rational difference field $(C(x),\,\sigma)$, we obtain a complete algorithm for simple $R\Pi$-towers over $(C(x),\,\sigma)$.

\begin{alg}\label{ALG:CRForSimpleRPi-Towers}
{\tt CRForSimpleR$\Pi$-Towers}$(\bE, g, f)$

 \noindent
{\tt Input:} A simple $R\Pi$-tower $(\bE,\,\sigma)$ with $\bE:=C(x)\langle y_1 \rangle \langle y_2 \rangle \ldots \langle y_{\ell_1} \rangle \langle p_1 \rangle \ldots \langle p_{\ell_2} \rangle$ over $(C(x),\,\sigma)$,  where the $y_i$ are R-monomials and $p_i$ are $\Pi$-monomials, $f\in {(C(x)^*)}_{C(x)}^{\bE}$ and $g \in\bE$; $C$ is factorizable.

\noindent
{\tt Output:} A $\Sigma$-pair of $g$ with respect to $\rho_{f}$ in Proposition~\ref{PROP:NestedRPiExt}.
\begin{enumerate}
	\item[]  {\tt if} $\ell_2=0$ {\tt then} 
\begin{itemize}
  \item [] {\tt return} {\tt CRForSimple$R$-Towers}$(\bE, g, f, 1)$ (*Algorithm~\ref{ALG:CRForSimpleR-Towers}*)
\end{itemize}
\item[] {\tt else}
 \begin{itemize}
   \item[] {\tt return}  {\tt CRForSimple$\Pi$-Extensions}$(\widetilde{\bE}\langle p_{\ell_2} \rangle, g, f)$(*Algorithm~\ref{ALG:Picase}*)\\
   \hspace*{1.2cm}\begin{minipage}{9cm} with $\widetilde{\bE}:=C(x)\langle y_1 \rangle \langle y_2 \rangle \ldots \langle y_{\ell_1} \rangle \langle p_1 \rangle \ldots \langle p_{\ell_2-1}\rangle$,
   	where the command \texttt{CRForGroundRing} is replaced \\ by \texttt{CRForSimple$R\Pi$-Towers} (*Algorithm~\ref{ALG:CRForSimpleRPi-Towers}*)
   \end{minipage}
 \end{itemize}
\item[]  {\tt end if}
\end{enumerate}
\end{alg}

\subsection{The $R\Pi\Sigma^*$-case}\label{SUBSECT:RPiSigma-case}

First we review how to construct a complete reduction in a $\Sigma^*$-tower as given in \cite{CGHS2025}. To this end, we need the following definition. 

\begin{define}\label{Def:effectiveBasis}
    Let $A$ be a ring with a subfield $C$, $\Theta$ be a $C$-basis of $A$ and $\theta  \in \Theta  $. Define $\theta^*$ to be the $C$-linear function on $A$ that maps $\theta$ to $1$ and all other elements to $0$. For a nonzero element $a$ of $A$, we say that $\theta$ is {\em effective} for $a$ if $\theta^*(a) \neq 0$. Furthermore,  $\Theta$ is called an {\em effective $C$-basis} if there are two algorithms available:
\begin{itemize}
\item [(i)]~given a nonzero element $a\in A$, find $\theta \in \Theta$ with $\theta^*(a)\neq0$; and 
\item [(ii)]~given $\theta \in \Theta  $ and $a \in A$, compute $\theta^*(a)$.
\end{itemize} 
\end{define}

With this additional notion (see \cite[Section~2.2]{DGLL2025} or \cite[Section~2]{CGHS2025} for further details), we can now present the algorithmic construction of complete reductions for a $\Sigma^*$-extension from \cite{CGHS2025}. While the results in \cite[Section~3.2]{CGHS2025} were originally stated for difference fields, the underlying proofs rely only on ring-closure properties and the constant set being a subfield. Consequently, these constructions extend to difference rings with a constant subfield, and the lemma follows from \cite[Proposition~4]{CGHS2025}.

\begin{lemma}\label{LM:SigmaExt}
Let $(\bA,\,\sigma)$ be a difference ring with constant field $C$ and  $(\bA[t],\, \sigma)$ be a $\Sigma^*$-extension of $\bA$. Assume that $\phi$ is a complete reduction  for $(\bA, \, \Delta)$.
 Let $\Theta$ be a $C$-basis of $\bA$ and $\theta \in \Theta$ be effective for $\phi(\Delta(t))$. Define
$$V_{\theta}=\directsum_{i\in \bN} (\im(\phi) \cap \ker(\theta^*)) t^i.$$
Then $\bA[t]=\im(\Delta)\oplus V_{\theta}$. In particular, the projection from $\bA[t]$ onto $V_{\theta}$ is a complete reduction $\psi_{\theta}$ for $(\bA[t],\, \Delta)$.
If $(\bA,\,\sigma)$ is computable, $\Theta$ is effective and one can compute a $\Sigma$-pair for any element in $\bA$ with respect to $\phi$, then one can compute $\Sigma$-pairs for all elements in $\bA[t]$ with respect to  $\psi_{\theta}$.
\end{lemma}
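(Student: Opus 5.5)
Write $\sigma(t)=t+b$ with $b\in\bA$, and let $\gamma:=\phi(b)$. Since $t$ is a $\Sigma^*$-monomial, Theorem~\ref{THM:testingRPiSigma-monomials}~(i) says $b\notin\im(\Delta)=\ker(\phi)$. Hence $\gamma\neq 0$, and an effective $\theta$ with $c:=\theta^*(\gamma)\neq0$ exists. (Note that $\phi(\Delta(t))=\phi(b)=\gamma$.)

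The proof proceeds by induction on the degree in $t$, using the leading-coefficient formula
\[
\Delta(u t^{k})=\Delta(u)\,t^k+\bigl(k\,\sigma(u)b\bigr)t^{k-1}+\text{lower terms},\qquad u\in\bA .
\]
Take $g=\sum_{i=0}^{n} g_i t^i$. We reduce the top coefficient first. Write $g_n=\Delta(v_n)+\phi(g_n)$ using $\phi$, and subtract $\Delta(v_n t^n)$. The coefficient of $t^n$ becomes $r_n:=\phi(g_n)\in\im(\phi)$, and only lower coefficients change. Next, to force $\theta^*(r_n)=0$, subtract $\frac{\theta^*(r_n)}{(n+1)c}\,\Delta(t^{n+1})$. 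This uses that $(n+1)c\in C$ is invertible in characteristic $0$. The coefficient of $t^{n+1}$ in $\Delta(t^{n+1})$ is zero, and its coefficient of $t^n$ is $(n+1)b$. Then $\phi$ applied to the new $t^n$-coefficient has $\theta^*$ equal to $\theta^*(r_n)-\theta^*(r_n)=0$. To keep the coefficient literally inside $\im(\phi)$, we again split $(n+1)b=\Delta(w)+(n+1)\gamma$ and absorb $\Delta(w t^{n})$. This again only alters lower-degree terms. We then recurse downward to degree $0$, obtaining $g=\Delta(q)+r$ with $r\in V_\theta$. Every step is computable under the stated effectivity and $\Sigma$-pair assumptions, which gives the algorithmic claim.

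For directness, suppose $0\neq h=\Delta(q)\in V_\theta$ with $q=\sum_{i\le n}q_it^i$ and $q_n\notin C$ (a constant top coefficient can be shifted down). The degree of $\Delta(q)$ in $t$ is $n$ if $\Delta(q_n)\neq0$. In that case the top coefficient $\Delta(q_n)$ lies in $\im(\Delta)\cap\im(\phi)=0$, a contradiction. So $q_n=c_n\in C$, and the top coefficient of $\Delta(q)$ is
\[
n c_n b+\Delta(q_{n-1}).
\]
This must lie in $\im(\phi)\cap\ker\theta^*$. Applying $\phi$ gives $n c_n\gamma$. Its $\theta^*$-value is $n c_n c$, so $n c_n=0$, i.e.\ $c_n=0$ when $n\ge1$. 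Then $q$ drops in degree, and induction on $\deg_t q$ finishes the argument. The base case $\deg_t q=0$ gives $h=\Delta(q_0)\in\im(\Delta)\cap\im(\phi)=0$.

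The main obstacle is bookkeeping in this directness argument. One must correctly treat the case where $q_n$ is a constant, so that $\deg\Delta(q)$ drops, and the interplay of $\phi$ with $\Delta(q_{n-1})$. Since $\phi$ kills $\im(\Delta)$, the top coefficient being in $\im(\phi)$ forces it to equal $\phi$ of itself, namely $n c_n\gamma$. This is the key identity to use. Finally, $\psi_\theta$ being the projection follows from Remark~\ref{RE:cr}.
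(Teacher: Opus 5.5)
Your proof is correct: it gives the natural self-contained argument (top-down reduction of each coefficient into $\im(\phi)$, correction of the $\theta^*$-component by the constant multiple $\frac{\theta^*(r_k)}{(k+1)c}\Delta(t^{k+1})$, and directness by comparing the $t^n$- and $t^{n-1}$-coefficients of $\Delta(q)$), and it uses only that $\const(\bA,\sigma)=C$ is a field of characteristic $0$ — which is exactly the paper's reason for saying the field-case result of CGHS2025 transfers to rings, since the paper itself gives no proof beyond that citation. The blemishes are cosmetic: drop the opening assumption ``$q_n\notin C$ (a constant top coefficient can be shifted down)'', because a constant leading coefficient cannot simply be shifted away and is precisely the case your $nc_nc=0$ argument rules out (which you then handle correctly); and the absorbed term in the existence step should carry the scalar, i.e.\ it is $\frac{\theta^*(r_n)}{(n+1)c}\Delta(wt^n)$.
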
 

In particular, combining the Algorithms \texttt{AuxiliaryReduction}, \texttt{EchelonBasis}, and \texttt{Redu-}\\\texttt{ctionForPolynomials} from \cite[Section~3.2]{CGHS2025}, which establish the algorithmic proof of Lemma~\ref{LM:SigmaExt}, yields an algorithm for computing $\Sigma^*$-pairs with respect to $\psi_{\theta}$. Below, we specify the input and output specifications of this procedure.

\begin{alg} {\tt CRFor$\Sigma^*$-Extensions}$(\bA[ t],g)$\label{ALG:Sigmacase}
	
	\noindent
	{\tt Input:} A $\Sigma^*$-monomial $t$ over $(\bA, \, \sigma)$ with constant field $C$ and $g\in \bA[ t]$; we assume that computing $\Sigma$-pairs with respect to a complete reduction $\phi$ for $(\bA, \, \Delta)$ is constructive for $(\bA, \, \Delta)$ and $\bA$ has an effective $C$-basis.
    
	\noindent
	{\tt Output:} a $\Sigma$-pair of $g$ with respect to $\psi_{\theta}$ given in Lemma~\ref{LM:SigmaExt}.
\end{alg}

\noindent By repeated applications of Lemma~\ref{LM:SigmaExt}, we get the following proposition.

\begin{prop}\label{PROP:SigmaExt}
Let $(\bA, \, \sigma)$ be a computable difference ring with constant field $C$ and $(\bE_n,\, \sigma)$ with $\bE_n:=\bA[t_1, \ldots,t_n]$ be a $\Sigma^*$-tower over $(\bA,\,\sigma)$. Assume that $(\bA,\,\sigma)$ has an effective $C$-basis and there exists an algorithm for constructing a complete reduction $\phi$ for $(\bA, \, \Delta)$. Then there is an algorithm for constructing a complete reduction $\psi$ for $(\bE_n, \, \Delta)$.
\end{prop}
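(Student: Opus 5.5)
We argue by induction on $n$, applying Lemma~\ref{LM:SigmaExt} once per $\Sigma^*$-monomial. The base case $n=0$ is the hypothesis: $\bE_0=\bA$, and by assumption there is an algorithm for a complete reduction $\phi$ for $(\bA,\Delta)$.

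For the induction step, assume we have an algorithm producing a complete reduction $\psi_{n-1}$ for $(\bE_{n-1},\Delta)$. Since $\bE_n=\bE_{n-1}[t_n]$ is a $\Sigma^*$-extension, Lemma~\ref{LM:SigmaExt} applies with ground ring $\bE_{n-1}$. The constant field remains $C$ because $\bE_n$ is a $\Sigma^*$-tower. To make the lemma effective we must verify three things about $(\bE_{n-1},\sigma)$. First, it is computable; this follows inductively from the computability of $(\bA,\sigma)$ and the remark after the definition of $APS$-extensions. Second, $\Sigma$-pairs with respect to $\psi_{n-1}$ are computable; this is the induction hypothesis. Third, $\bE_{n-1}$ has an effective $C$-basis.

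For the third point we take the basis
\[
\Theta_{n-1}=\{\theta\, t_1^{e_1}\cdots t_{n-1}^{e_{n-1}} \mid \theta\in\Theta,\ e_i\in\bN\},
\]
where $\Theta$ is the given effective $C$-basis of $\bA$. It is a $C$-basis because $\bE_{n-1}$ is a free $\bA$-module on the monomials. Both required algorithms reduce to those for $\Theta$:
\begin{itemize}
\item Given a nonzero element, pick any monomial $t_1^{e_1}\cdots t_{n-1}^{e_{n-1}}$ with nonzero coefficient $c\in\bA$, and choose $\theta\in\Theta$ with $\theta^*(c)\neq0$.
\item The coordinate functional of $\theta\,t^{e}$ is evaluated by extracting the coefficient of $t^e$ and then applying $\theta^*$.
\end{itemize}

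Next, we compute $\psi_{n-1}(\Delta(t_n))$ from a $\Sigma$-pair of $\Delta(t_n)=\sigma(t_n)-t_n\in\bE_{n-1}$. This value is nonzero: otherwise $\sigma(t_n)-t_n\in\im(\Delta)$, contradicting Theorem~\ref{THM:testingRPiSigma-monomials}~(i). Hence an effective $\theta$ for it can be found, and Lemma~\ref{LM:SigmaExt} yields the complete reduction $\psi_n:=\psi_\theta$ together with an algorithm (Algorithm~\ref{ALG:Sigmacase}) computing $\Sigma$-pairs. This completes the induction.

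The main point of care is keeping the basis effective at every level so that the induction hypotheses match those of the lemma at the next step. A further subtlety is that $\theta$ must be fixed once per level, so that the same $\psi_n$ is used for all inputs, mirroring the table $\bT$ in Algorithm~\ref{ALG:CRForRationalFunctions}.
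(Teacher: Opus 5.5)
Your proposal is correct and takes the same route as the paper, which obtains this proposition by applying Lemma~\ref{LM:SigmaExt} repeatedly, one $\Sigma^*$-monomial at a time. You also spell out what the paper leaves implicit: the effective $C$-basis $\{\theta\,t_1^{e_1}\cdots t_{n-1}^{e_{n-1}}\}$ of each intermediate ring, and the nonvanishing of $\psi_{n-1}(\Delta(t_n))$ via Theorem~\ref{THM:testingRPiSigma-monomials}~(i), which guarantees that an effective $\theta$ exists at each level.
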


We are ready to present the main result of this section which refines and improves~\cite[Theorem~2.26]{Schn2016}.

\begin{thm}\label{THM:SimpleRPiSigmaExt}
Let $(\bA, \, \sigma)$ be a computable difference ring with constant field $C$ and $(\bE,\,\sigma)$ be a simple $R\Pi\Sigma^*$-tower over $(\bA,\,\sigma)$. Assume that $\bA$ has an effective $C$-basis $\Theta_0$, and there is an algorithm that, for every $f \in \bA^*$ and $i \in \bZ^{+}$, constructs a complete reduction $\phi_f^{(i)}$ for $(\bA,\, \Delta_f^{(i)})$. Then there is an algorithm for constructing a complete reduction $\psi$ for $(\bE, \, \Delta)$.
\end{thm}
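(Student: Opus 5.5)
We will reorder the simple $R\Pi\Sigma^*$-tower as in \eqref{Equ:ReorderedSimpleExt}, writing $\bE=\bE_2[s_1,\ldots,s_{\ell_3}]$ with $\bE_2:=\bA\langle y_1\rangle\ldots\langle y_{\ell_1}\rangle\langle p_1\rangle\ldots\langle p_{\ell_2}\rangle$. The reordering is a difference ring isomorphism, so by Lemma~\ref{LM:diffiso+CR} (with $h=1$) a complete reduction for the reordered ring transfers back to the original presentation, and $\Sigma$-pairs transfer as well. The plan then has two stages. In the first stage we handle the $R\Pi$-part via Proposition~\ref{PROP:NestedRPiExt} with $f=1\in(\bA^*)_{\bA}^{\bE_2}$. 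In the second stage we lift through the $\Sigma^*$-monomials via Proposition~\ref{PROP:SigmaExt}.

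For the first stage, the hypothesis of Theorem~\ref{THM:SimpleRPiSigmaExt} supplies exactly what Proposition~\ref{PROP:NestedRPiExt} needs: complete reductions $\phi_z^{(i)}$ for $(\bA,\Delta_z^{(i)})$ for all $z\in\bA^*$ and $i\in\bZ^+$. Hence we obtain an algorithm computing $\Sigma$-pairs for a complete reduction $\rho_1$ of $(\bE_2,\Delta)$. The constant field of $\bE_2$ is $C$, since the full tower is an $R\Pi\Sigma^*$-tower.

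For the second stage, Proposition~\ref{PROP:SigmaExt} requires $\bE_2$ to be computable (it is, since $\bA$ is) and to carry an effective $C$-basis. This is the one genuinely new point. We take
\[
\Theta_2:=\{\theta\, y_1^{e_1}\cdots y_{\ell_1}^{e_{\ell_1}}p_1^{k_1}\cdots p_{\ell_2}^{k_{\ell_2}}\mid \theta\in\Theta_0,\ 0\le e_j<\ord(y_j),\ k_j\in\bZ\}.
\]
This is a $C$-basis because every element of $\bE_2$ has a unique representation $\sum_{e,k} g_{e,k}\,y^e p^k$ with finitely many nonzero $g_{e,k}\in\bA$. The $A$-monomials have the basis $1,\ldots,y^{\lambda-1}$, and the Laurent monomials are free.

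The two effectivity algorithms reduce to those for $\Theta_0$. Given $\theta y^ep^k$ and $g$, we extract the coefficient $g_{e,k}$ and apply $\theta^*$ to it. Given $g\neq 0$, we pick any nonzero coefficient $g_{e,k}$ and use algorithm (i) for $\Theta_0$ on it. Both operations only need coefficient extraction in the normal form, which is available because $(\bA,\sigma)$ is computable.

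With $\rho_1$ and $\Theta_2$ in hand, we apply Proposition~\ref{PROP:SigmaExt}, that is, Lemma~\ref{LM:SigmaExt} iterated over $s_1,\ldots,s_{\ell_3}$. Each step needs an effective basis of the current ring $\bE_2[s_1,\ldots,s_{j}]$. We use $\{b\,s_1^{n_1}\cdots s_j^{n_j}\mid b\in\Theta_2,\ n_i\in\bN\}$, which is effective by the same coefficient argument. Each step also needs the constant field to remain $C$, which holds since the $s_i$ are $\Sigma^*$-monomials. The iteration yields a complete reduction $\psi$ for $(\bE,\Delta)$ together with an algorithm for $\Sigma$-pairs.

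The main obstacle is bookkeeping rather than mathematics. We must verify that the reordering of \cite[Lemma 4.10]{Schn2016} keeps simplicity, so that $1\in(\bA^*)_{\bA}^{\bE_2}$ and Proposition~\ref{PROP:NestedRPiExt} applies. We must also check that, at each $\Sigma^*$-step, an effective $\theta$ for $\phi(\Delta(s_j))$ can be found; this is guaranteed since $\phi(\Delta(s_j))\neq0$ for a $\Sigma^*$-monomial, by Theorem~\ref{THM:testingRPiSigma-monomials}(i).
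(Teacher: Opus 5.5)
Your proposal is correct and follows the paper's proof essentially step for step. You reorder via \cite[Lemma~4.10]{Schn2016}, obtain a complete reduction for the $R\Pi$-part $\bE_2$ from Proposition~\ref{PROP:NestedRPiExt}, equip $\bE_2$ with the same effective basis the paper uses, and lift through the $\Sigma^*$-monomials with Proposition~\ref{PROP:SigmaExt}; your extra checks (transfer through the reordering, effectivity of $\Theta_2$, and $\phi(\Delta(s_j))\neq 0$) are sound details the paper leaves implicit.
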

\begin{proof} 
By \cite[Lemma~4.10]{Schn2016}, $\bE$ can be reordered as~\eqref{Equ:ReorderedSimpleExt},
where the $y_i$ are $R$-monomials, the $p_i$ are $\Pi$-monomials, and the $s_i$ are $\Sigma^*$-monomials.
Set $\bE_1$ to be $\bA \langle y_1\rangle \ldots \langle y_{\ell_1} \rangle \langle  p_1\rangle \ldots \langle p_{\ell_2} \rangle$.
  Then there is an algorithm for constructing a complete reduction $\rho$ for $(\bE_1, \Delta)$ by Proposition~\ref{PROP:NestedRPiExt}.
Note that $\bE_1$ has an effective $C$-basis
$$\Theta=\{ \theta_0 y_1^{k_1}\cdots y_{\ell_1}^{k_{\ell_1}} p_1^{i_1}\cdots p_{\ell_2}^{i_{\ell_2}}~|~\theta_0 \in \Theta_0,~k_j \in[\ord(y_j)-1]_0~~\text{and}~i_j \in \bZ\}$$
and $(\bE,\,\sigma)$ is a $\Sigma^*$-tower over $(\bE_1,\sigma)$. So there is an algorithm for constructing a complete reduction $\psi$ for $(\bE, \, \Delta)$ by Proposition~\ref{PROP:SigmaExt}.
\end{proof}

Specializing the ground ring to the rational difference field, we obtain the following complete algorithmic result.

\begin{thm}\label{COR:RPiSigmaExt+Rational}
Let $(C(x), \, \sigma)$ be the rational difference field with $\sigma(x)=x+1$, where $C$ is factorizable. And let $(\bE,\, \sigma)$ be a simple $R\Pi\Sigma^*$-tower over $(C(x), \, \sigma^k)$ for some $k \in \bZ^{+}$.
Then there is an algorithm for constructing a complete reduction $\psi$ for $(\bE,\, \Delta)$.
\end{thm}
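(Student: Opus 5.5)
The plan is to apply Theorem~\ref{THM:SimpleRPiSigmaExt} with ground ring $(\bA,\sigma)=(C(x),\sigma^k)$, so we must check its three hypotheses. First, $(C(x),\sigma^k)$ is a computable difference ring: $C$ is factorizable, hence computable, and $\sigma^k$ is the substitution $x\mapsto x+k$. Its constant field is $C$, because the rational difference field is constant-stable (as noted after Definition~\ref{DEF:stable}), so $\const(C(x),\sigma^k)=\const(C(x),\sigma)=C$. Second, $C(x)$ has an effective $C$-basis. Using partial fractions over the factorizable field $C$, take $\Theta_0$ to consist of the monomials $x^d$ ($d\in\bN$) together with the elements $x^j/q^e$, where $q$ is monic irreducible, $e\ge1$ and $j<\deg_x q$. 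Given a nonzero element, we compute its full partial fraction decomposition, which exhibits a nonzero coordinate and evaluates every coordinate functional $\theta^*$. Both algorithms required in Definition~\ref{Def:effectiveBasis} therefore exist.

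The third hypothesis asks for an algorithm which, for every $f\in C(x)^*$ and $i\in\bZ^+$, constructs a complete reduction for $(C(x),\,f(\sigma^k)^i-\mathbf 1)=(C(x),\Delta_f^{(ki)})$, where $\Delta_f^{(ki)}$ is taken with respect to the original $\sigma$. This is exactly Proposition~\ref{PROP:Generalrational+highorder} applied with the exponent $ki\in\bZ^+$. Concretely, we call Algorithm~\ref{ALG:CRForRationalFunctions} with input $(C(x),g,f,ki)$.

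Alternatively, and more cleanly, we can transport the whole problem through the difference isomorphism $\tau:(C(x),\sigma^k)\to(C(x),\sigma)$, $w(x)\mapsto w(kx)$, as in Lemma~\ref{LM:GeneralRational}. We extend $\tau$ monomial by monomial to an isomorphism of $\bE$ onto a simple $R\Pi\Sigma^*$-tower over $(C(x),\sigma)$, sending $\sigma(t_j)=b_jt_j$ to $\tau(b_j)t_j$ and $\sigma(t_j)=t_j+a_j$ to $t_j+\tau(a_j)$. An isomorphism preserves constants and simplicity, and Lemma~\ref{LM:diffiso+CR} with $h=1$ transports complete reductions and $\Sigma$-pairs back.

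With the hypotheses in place, Theorem~\ref{THM:SimpleRPiSigmaExt} yields the algorithm for $\psi$. The point I expect to need the most care is the bookkeeping: a complete reduction on $(C(x),\sigma^k)$ must be a single fixed $C$-linear map, used consistently across all the recursive calls in the $R$- and $\Pi$-steps. This consistency is exactly what the global table $\bT$ in Algorithm~\ref{ALG:CRForRationalFunctions} provides. I also want to confirm that the reordering \eqref{Equ:ReorderedSimpleExt} used in the proof of Theorem~\ref{THM:SimpleRPiSigmaExt} needs no extra assumption on the ground ring, which \cite[Lemma~4.10]{Schn2016} guarantees.
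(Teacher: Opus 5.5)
Your proposal is correct and follows essentially the paper's own proof. It applies Theorem~\ref{THM:SimpleRPiSigmaExt} over $(C(x),\sigma^k)$, obtains constant field $C$ by constant-stability, uses the partial-fraction effective basis $\Theta_0$, and invokes Proposition~\ref{PROP:Generalrational+highorder} with exponent $ki$ to get complete reductions for $f\sigma^{ki}-\mathbf{1}$. Your alternative route, transporting through $w(x)\mapsto w(kx)$ and Lemma~\ref{LM:diffiso+CR} with $h=1$, is also valid but not needed.
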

\begin{proof} 
 Note that $C(x)$ is constant-stable. So $\const(C(x), \sigma^{k})=C$. Further, there is an algorithm for constructing a complete reduction $\phi^{(ki)}_f$ for $(C(x), \Delta^{(ki)}_f)$ for every $f\in C(x)^*$ and 
 $i\in \bZ^{+}$ by Proposition~\ref{PROP:Generalrational+highorder}. Set $X = \left\{ x^i \mid i \in \bN \right\}$ and
$M_{x}$ to be the set consisting of all monic and irreducible polynomials in $C[x]$ with positive degrees. Using irreducible partial fraction decomposition, we obtain that
\begin{equation} \label{EQ:basis0}
  \Theta_0 =  X \cup \left\{ \frac{x^i}{q^j} \mid q \in M_x, 0 \le i < \deg_x(q), j \in \bZ^+ \right\}
\end{equation}
is an effective $C$-basis of $C(x)$.
Then the theorem follows from Theorem~\ref{THM:SimpleRPiSigmaExt}.
\end{proof}

The resulting algorithm of Theorem~\ref{COR:RPiSigmaExt+Rational} for $k=1$ can be summarized as follows.

\begin{alg}\label{ALG:CRForSimpleRPiSigma-Towers}
	{\tt CRForSimple$R\Pi\Sigma^*$-Towers}$(\bE, g)$
	
	\noindent
	{\tt Input:} A simple $R\Pi\Sigma^*$-tower $(\bE,\,\sigma)$ with $\bE:=\bA[s_1, \ldots, s_n]$ over $(\bA,\,\sigma)$,  where $(\bA,\,\sigma)$ is an R$\Pi$-tower over $(C(x),\,\sigma)$ and $C$ is factorizable (place the R-monomials first, followed by the $\Pi$-monomials), and the $s_i$ are $\Sigma^*$-monomials, and $g \in\bE$.
	
	\noindent
	{\tt Output:} A $\Sigma$-pair of $g$ with respect to $\psi$ as given in Theorem \ref{COR:RPiSigmaExt+Rational}.
	\begin{enumerate}
		\item[]  {\tt if} $n=0$ {\tt then} 
		\begin{itemize}
			\item [] {\tt return} {\tt CRForSimple$R\Pi$-Towers}$(\bE, g, 1)$ (*Algorithm~\ref{ALG:CRForSimpleRPi-Towers}*)
		\end{itemize}
		\item[] {\tt else}
		\begin{itemize}
			\item[] {\tt return}  {\tt CRFor$\Sigma^*$-Extensions}$(\bA[s_1,\ldots, s_{n-1}][s_n], g)$(*Algorithm~\ref{ALG:Sigmacase}*)\\  
			\hspace*{1.2cm}\begin{minipage}{9.5cm} where $\Sigma$-pairs with respect to a complete reduction $\phi$ for $(\bA[s_1,\ldots, s_{n-1}], \, \Delta)$ are computed by Algorithm {\tt CRForSimple$R\Pi\Sigma^*$-Towers} (*Algorithm~\ref{ALG:CRForSimpleRPiSigma-Towers}*)
			\end{minipage} 
		\end{itemize}
		\item[]  {\tt end if}
	\end{enumerate}
\end{alg}

\begin{example}\label{EX:RPiSigmaTowers}
We simplify the sum 
$$\sum_{k=0}^{n} (-1)^k \sum_{i=0}^{k} \binom{\nu}{i}$$
as follows. We start with $(C(x),\,\sigma)$, where $\sigma(x)=x+1$ and
$C=\bQ(\nu)$ is the rational function field in the variable $\nu$. Then we define the simple $R\Pi\Sigma^*$-tower $(\bE,\,\sigma)$ with 
$\bE:=\bQ(\nu)(x)\langle y \rangle \langle p \rangle \langle s \rangle$, and
$$\sigma(y)=-y,\quad \sigma(p)=\frac{\nu-x}{x+1}p \quad \text{and} \quad \sigma(s)=s+\frac{(\nu-x)p}{x+1}.$$
Here $y$ is an $R$-monomial representing $(-1)^k$, $p$ is a $\Pi$-monomial  denoting the binomial coefficient $\binom{\nu}{k}$, and $s$ is a $\Sigma^*$-monomial representing $\sum_{i=0}^{k} \binom{\nu}{i}$. 
So the given summand can be expressed as $y\,t$ in $\bE$. Let $\psi$ be the complete reduction for $(\bE, \, \Delta)$ as given in Theorem~\ref{COR:RPiSigmaExt+Rational}. Employing Algorithm~\ref{ALG:CRForSimpleRPiSigma-Towers} to $y\,t$ yields a $\Sigma$-pair $\left(-\frac{ys}{2}+\frac{y(\nu-x)p}{2\nu}, 0\right)$
of $y\,t$ with respect to $\psi$. Reinterpreting the result in terms of summation objects, we obtain by telescoping the simplification
$$\sum_{k=0}^{n} (-1)^k \sum_{i=0}^{k} \binom{\nu}{i}=\frac{1}{2}(-1)^n\sum_{i=0}^{n+1} \binom{\nu}{i}+\frac{(-1)^n(n-\nu+1)\binom{\nu}{n+1}}{2\nu}.$$
\end{example}

\subsection*{Implementation aspects}
In the following, we will compare it with the function {\tt RefinedTelescoping}~({\tt RT}) of the summation package {\tt Sigma} that implements the algorithms given in~\cite{Schn2007,Schn2008,Schneider:2015,Schn2016}. The function used to generate the experimental data is available at \cite{corrps}. The experiments were carried out in Mathematica~15.0.1 on a computer with Linux, CPU 2.40 GHz, Intel Core Ultra 7 265, 32GB memory. 

Each summand in the experimental data was the difference of an element in a simple $R\Pi\Sigma^*$-tower $(\bE,\,\sigma)$ over $(\bQ(x),\,\sigma)$. Consequently, both {\tt CR} and {\tt RT} are applicable and yield the same output, which is an indefinite sum of the input in the same tower. We generated three summands in the form $\Delta(f_i)$ for each $i$, where $f_i$ was a polynomial in some selected generators. Below are summaries of the average timings measured in seconds.

In the first suite, we let $\bE=\bQ(x)\langle p \rangle  \langle s \rangle$, where
$$\sigma(x)=x+1, \quad \sigma(p)=\frac{2(2x+1)}{x+1}p,\quad \text{and} \quad \sigma(s)=s+\frac{1}{x+1}.$$
Note that $p$ and $s$ represent $\binom{2n}{n}$ and $H_n$, respectively. 

Then we randomly generated a sparse polynomial $f_i \in \bQ(x)[p,s]$ of total degree $i$, with coefficients given by quotients of random polynomials in $\bQ[x]$ of degrees~$5$. 
Applying \texttt{RT} and \texttt{CR} to $\Delta(f_i)$ yields the timings listed below, highlighting the superior performance of \texttt{CR}.

\begin{center}
\begin{tabular}{|c|c|c|c|c|c|c|c|c|c|c|} \hline
		$i$ & 10 & 20 & 30 & 40 & 50 & 60 & 70 & 80 & 90 &100    \\ \hline
		{\tt RT} & 1.85 & 7.58 & 23.77 & 53.99 & 126.0 & 219.0 & 382.0 & 693.2 & 1038.7 & 1522.2 \\ \hline
		{\tt CR} & 0.13 & 0.42 & 1.06 & 1.84 & 3.4 & 5.3 & 8.0 & 12.3 & 17.0 & 24.2   \\ \hline
	\end{tabular}
\captionof{table}{Sparse polynomials in $\binom{2n}{n}$ and $H_n$} 
         \label{tab:Tele1}
\end{center}

In the second suite, we let $\bE=\bQ(x)\langle y \rangle \langle p \rangle \langle s \rangle$ with
$$\sigma(y)=-y,\quad \sigma(p)=\frac{2(2x+1)}{x+1}p \quad \text{and} \quad \sigma(s)=s-\frac{y}{x+1}.$$
Then $y$, $p$ and $s$ model $(-1)^n$, $\binom{2n}{n}$, $\sum_{i=1}^{n}\frac{(-1)^i}{i}$, respectively.
Here we generated two dense polynomials $u_i, v_i \in \bQ(x)[p,s]$ with degrees $i$ in $p$ and $s$. Their coefficients are quotients of random polynomials in $\bQ[x]$ with degrees $5$. The execution times for applying \texttt{RT} and \texttt{CR} to the difference of $f_i = u_i + y v_i$ are presented below. Again, \texttt{CR} demonstrates significantly faster runtimes across all test cases.

\begin{center}
	\begin{tabular}{|c|c|c|c|c|c|c|c|c|c|} \hline
		$i$ & 5 & 10 & 15 & 20 & 25 & 30 & 35 & 40    \\ \hline
		{\tt RT} & 3.66 & 16.19 & 49.61 & 128.20 & 301.60 & 667.21 & 1396.83 & 2821.66  \\ \hline
		{\tt CR} & 0.25 & 1.25 & 3.93 & 10.13 & 22.33 & 44.29 & 79.88 & 136.81    \\ \hline
	\end{tabular}
\captionof{table}{Dense polynomials in $(-1)^n$, $\binom{2n}{n}$ and $\sum_{i=1}^{n}\frac{(-1)^i}{i}$} 
         \label{tab:Tele2}
\end{center}

\section{Complete reductions in idempotent representations of $R\Pi\Sigma^*$-towers}\label{SECT:IdemotentRepresentation}

 This section consists of two parts. In Section \ref{SUBSECT:idem}, we introduce idempotent representation for difference rings and compute idempotent representations of $R\Pi\Sigma^*$-towers~(see Proposition~\ref{PROP:RPiSigmaIsIdempotent}). This then enables us, in Section~\ref{SUBSECT:crforgeneralcase}, to develop an algorithm for constructing complete reductions for $R\Pi\Sigma^*$-towers (see Theorem~\ref{THM:CR+RPiSigma}).

\subsection{Constructing idempotent representations}\label{SUBSECT:idem}

We begin with a class of special difference rings that is closely related to \cite[Definition 2.2]{OW2015}; for algorithmic aspects, we refer to~\cite{AS:2021}.
\begin{define}\label{DEF:idempotent}
Let $(\bA, \, \sigma)$ be a difference ring and $\lambda \in \bZ^{+}$. If there exist idempotent, pairwise orthogonal elements $e_0,e_1,\ldots,e_{\lambda-1}$ of $\bA$ (i.e., $e_u^2=e_u$ and $e_ue_v=0$ if $u\neq v$)
such that $\sigma(e_u)=e_{u+1 \! \bmod \! \lambda}$, $e_0+e_1+\cdots+e_{\lambda-1}=1$, then $(\bA, \, \sigma)$ is called a {\em weak idempotent difference ring  
with respect to $\{e_i~|~i\in [\lambda-1]_0\}$}. 
If this is the case,  we have the following decomposition
\begin{equation}\label{EQ:idempotentrepresentation}
  \bA=e_0\bA\oplus e_1\bA \oplus \cdots \oplus e_{\lambda-1}\bA
\end{equation}
as a direct sum of ideals by \cite[Chapter~1, Exercise 1.7]{Lam2001}, which is called a {\em weak idempotent representation} of $\bA$. 
In addition,
if for every $i \in [\lambda-1]_0$, $e_i\bA$ is an integral domain, we call $(\bA, \, \sigma)$ an {\em idempotent difference ring with respect to $\{e_i~|~i\in [\lambda-1]_0\}$} and \eqref{EQ:idempotentrepresentation} is called an {\em idempotent representation} of $\bA$.
\end{define}

Picard–Vessiot extensions for linear difference equations with polynomial coefficients in $(C(x), \sigma)$ form idempotent difference rings (see~\cite[Corollary~1.16]{vdSi1997} and \cite[Lemma~6.8]{HS2008}). In contrast, $R\Pi\Sigma^*$-towers are generally not representable as Picard–Vessiot extensions whenever nested $\Pi$-monomials are present (since they do not satisfy such linear difference equations).

\begin{remark}\label{RE:idemp+unit}
Let $(\bA,\, \sigma)$ be a weak idempotent difference ring with respect to $\{e_i~|~i\in [\lambda-1]_0\}$. For every $s \in [\lambda-1]_0$,
\begin{itemize}
  \item [(i)] $(e_s\bA,\, \sigma^{\lambda})$  is a difference ring with multiplicative identity $e_s$.
  \item [(ii)] The map $\sigma$ induces a difference ring isomorphism between $(e_s\bA, \sigma^{\lambda})$ and $(e_{s+1}\bA, \sigma^{\lambda})$.
\end{itemize}

\end{remark}
 
The following lemma reveals the uniqueness of the idempotent representation of an idempotent difference ring.
\begin{lemma}
Let $(\bA,\, \sigma)$ be an idempotent difference ring with respect to $\{e_i~|~i\in[\lambda-1]_0\}$ as given in \eqref{EQ:idempotentrepresentation}. Assume that $(\bA,\, \sigma)$ is also an idempotent difference ring with respect to  $\{f_i~|~i \in [\mu-1]_0\}$.
Then $\lambda=\mu$ and there is $j\in [\lambda-1]_0$  such that $e_i = f_{(i+j)\! \bmod \! \lambda}$ for all $i$.
\end{lemma}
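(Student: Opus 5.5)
The plan is to compare the two families of idempotents via the products $e_if_j$. Since $\sum_j f_j=1$, we have $e_i=\sum_{j=0}^{\mu-1} e_if_j$. Each $e_if_j$ is idempotent, and these products are pairwise orthogonal. Moreover, each $e_if_j$ lies in the integral domain $e_i\bA$, whose identity is $e_i$.

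In an integral domain the only idempotents are $0$ and the identity: from $x^2=x$ we get $x(e_i-x)=0$, so $x=0$ or $x=e_i$. Hence each $e_if_j\in\{0,e_i\}$. By orthogonality, at most one $j$ can have $e_if_j=e_i$ (two would give $e_i=e_i^2=e_if_je_if_{j'}=0$, impossible since $e_i\neq0$). Because the $e_if_j$ sum to $e_i\neq 0$, exactly one such index exists; call it $\pi(i)$. Then $e_i f_{\pi(i)}=e_i$ and $e_if_j=0$ for $j\neq\pi(i)$.

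Running the same argument symmetrically inside the domain $f_j\bA$ gives, for each $j$, exactly one $i$ with $e_if_j=f_j$, and $e_if_j=0$ otherwise. Now fix $i$ and set $j=\pi(i)$. The product $e_if_j=e_i$ is nonzero, so by the symmetric statement $e_if_j=f_j$. Thus $e_i=f_{\pi(i)}$. Since the $f_j$ are pairwise distinct (orthogonal and nonzero), $\pi$ is injective. It is also surjective, since each $f_j$ equals some $e_i$ by the symmetric argument. Hence $\lambda=\mu$ and $\pi$ is a bijection.

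It remains to show that $\pi$ is a cyclic shift. Applying $\sigma$ to $e_i=f_{\pi(i)}$ gives $e_{i+1\bmod\lambda}=f_{\pi(i)+1\bmod\lambda}$, so $\pi(i+1)=\pi(i)+1 \bmod \lambda$. Induction from $j:=\pi(0)$ then yields $\pi(i)=(i+j)\bmod\lambda$, i.e.\ $e_i=f_{(i+j)\bmod\lambda}$.

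The only delicate points are the nonvanishing of the $e_i$ and $f_j$, and the claim that idempotents in $e_i\bA$ are trivial. Nonvanishing holds because integral domains are nonzero rings by convention, so the identity $e_i$ of $e_i\bA$ is nonzero. The triviality of idempotents uses that $e_i\bA$ has no zero divisors, which is exactly the integral-domain hypothesis. Beyond these two observations, the argument is routine bookkeeping.
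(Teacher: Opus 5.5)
Your proof is correct, but it takes a different route from the paper's. The paper first shows that each $e_i$ (and each $f_i$) is primitive. If $e_0=r_1+r_2$ with $r_1,r_2$ nonzero orthogonal idempotents, then $r_1\in e_0\bA$. The integral domain $e_0\bA$ has $e_0$ as its only nonzero idempotent, which forces $r_2=0$. The paper then cites the uniqueness of decompositions of $1$ into pairwise orthogonal primitive idempotents \cite[Proposition~22.1]{Lam2001} to get $\lambda=\mu$ and a permutation $\tau$ with $e_i=f_{\tau(i)}$. It finishes with $j=\tau(0)$, leaving the compatibility with $\sigma$ implicit.

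You avoid both the primitivity step and the citation by refining along the products $e_if_j$. This amounts to a self-contained proof of the cited uniqueness result in this commutative setting. You also spell out the $\sigma$-equivariance step that the paper leaves implicit. That step can be shortened: $e_i=\sigma^i(e_0)=\sigma^i(f_{\pi(0)})=f_{(\pi(0)+i)\bmod\lambda}$. The paper's route is shorter and names the conceptual reason, namely that the $e_i$ are exactly the primitive idempotents of $\bA$. Yours needs no external reference.

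There is one small slip. Injectivity of $\pi$ follows from the $e_i$ being pairwise distinct, or from the uniqueness in your symmetric statement. It does not follow from the distinctness of the $f_j$. The distinctness of the $f_j$ is what you need in the last step, to read off $\pi(i+1)=\pi(i)+1 \bmod \lambda$ from $f_{\pi(i+1)}=f_{(\pi(i)+1)\bmod\lambda}$.
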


\begin{proof}
We say that a nonzero idempotent element of $\bA$ is primitive if it can not be written as a sum of two  non-zero orthogonal idempotent elements of $\bA$.  Suppose that $e_0$ as given in \eqref{EQ:idempotentrepresentation} is not primitive. Then there are  two nonzero orthogonal idempotent elements $r_1, r_2 \in \bA$ such that $e_0=r_1+r_2$. It follows that $e_0\bA \subset r_1\bA+r_2\bA$. For arbitrary $x,y \in \bA$, $r_1x+r_2y=e_0(r_1x+r_2y)$ because  $r_1r_2=0$ and $r_i^2=r_i$ for $i=1,2$. So $e_0\bA=r_1\bA \oplus r_2 \bA$, which implies that $r_1 \in e_0\bA$. 
In particular, $r_1$ is an idempotent of $e_0\bA$. 
Since $e_0\bA$ is an integral domain, $e_0$ is the only nontrivial idempotent. In other words, $r_1=e_0$.
 Hence, $r_2=0$, a contradiction. 
 Similarly, the $e_i$, $f_i$ are primitive. 
 Note that $1=\sum_{i=0}^{\lambda-1}e_i=\sum_{i=0}^{\mu-1}f_i$. 
 So $\lambda=\mu$ and there is a permutation $\tau$ of $\{0, \ldots, \lambda-1\}$ such that $e_i = f_{\tau(i)}$ for all $i$ by \cite[Proposition~22.1]{Lam2001}. Thus, the conclusion holds by setting $j$ to be  $\tau(0)$.  
 \end{proof}

\noindent An equivalent characterization of a weak idempotent difference ring is described below.

\begin{lemma}\label{LM:idemp+equivdef}
Let $(\bA,\, \sigma)$ be a difference ring. Then it is a weak idempotent difference ring
if and only if there is an idempotent element $e_0$ of $\bA$ such that 
\begin{equation}\label{EQ:idempdecomp2}
  \bA=e_0\bA \oplus \sigma(e_0)\bA \oplus \cdots \oplus \sigma^{\lambda-1}(e_0) \bA
\end{equation}
as a direct sum of ideals.
\end{lemma}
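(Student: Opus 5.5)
The plan is to prove both directions directly from Definition~\ref{DEF:idempotent}, reading the direct sum in \eqref{EQ:idempdecomp2} as an internal direct sum of ideals generated by idempotents. In that reading the decomposition means exactly that $e_i:=\sigma^i(e_0)$ for $i\in[\lambda-1]_0$ are pairwise orthogonal idempotents summing to $1$. The main bookkeeping point is to recover the cyclic condition $\sigma(e_{\lambda-1})=e_0$ from the decomposition; the rest is routine.

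For ``$\Rightarrow$'', assume $(\bA,\sigma)$ is a weak idempotent difference ring with respect to $\{e_i\}$. Then $\sigma(e_u)=e_{u+1 \bmod \lambda}$ gives inductively $e_i=\sigma^i(e_0)$ for $i\in[\lambda-1]_0$. Equation \eqref{EQ:idempotentrepresentation} is then literally \eqref{EQ:idempdecomp2}, and $e_0$ is the desired idempotent.

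For ``$\Leftarrow$'', let $e_0$ be idempotent and assume \eqref{EQ:idempdecomp2} holds. Set $e_i:=\sigma^i(e_0)$; each is idempotent because $\sigma$ is a ring automorphism.

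First I show that the $e_i$ are pairwise orthogonal. For $u\ne v$ we have $e_ue_v\in e_u\bA\cap e_v\bA$. This intersection is $\{0\}$, since a direct sum of ideals has pairwise trivial intersections. Hence $e_ue_v=0$.

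Next I show $\sum_i e_i=1$. Write $1=\sum_i e_ia_i$ and multiply by $e_j$; orthogonality gives $e_j=e_ja_j$. So $1=\sum_i e_ia_i=\sum_i e_i$.

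Finally, for the cyclic condition, $\sigma(e_i)=e_{i+1}$ holds for $i<\lambda-1$ by definition. For the last index, apply $\sigma$ to $1=\sum_{i=0}^{\lambda-1}e_i$ to get $1=e_1+\cdots+e_{\lambda-1}+\sigma^{\lambda}(e_0)$. Comparing with $1=e_0+\cdots+e_{\lambda-1}$ yields $\sigma(e_{\lambda-1})=\sigma^\lambda(e_0)=e_0$. Therefore $\sigma(e_u)=e_{u+1 \bmod\lambda}$ for all $u$, and $(\bA,\sigma)$ is a weak idempotent difference ring with respect to $\{e_i\}$.
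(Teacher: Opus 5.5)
Your proposal is correct and follows essentially the same route as the paper's proof. The forward direction sets $e_i=\sigma^i(e_0)$. The converse gets orthogonality from the trivial pairwise intersections, shows the $e_i$ sum to $1$ by multiplying $1=\sum_i e_ia_i$ by $e_j$, and obtains $\sigma^{\lambda}(e_0)=e_0$ by applying $\sigma$ to $1=\sum_i e_i$ and comparing, which is the paper's ``apply $\Delta$'' step.
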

\begin{proof} 
Let $\eqref{EQ:idempotentrepresentation}$ be a weak idempotent representation of $\bA$. Since $\sigma(e_u)=e_{u+1 \! \bmod \! \lambda}$ for every $u \in [\lambda-1]_0$, we have $e_i=\sigma^{i}(e_0)$. So \eqref{EQ:idempdecomp2} holds. Conversely,  note that $e_0$ is idempotent, so are the $\sigma^i(e_0)$. For $i, j\in [\lambda-1]_0$ with $i \neq j$, $\sigma^i(e_0) \sigma^j(e_0)\in \sigma^{i}(e_0)\bA \cap \sigma^{j}(e_0)\bA=\{0\}$. Since $1\in \bA$, there are $a_0, a_1, \ldots, a_{\lambda-1}\in \bA$ such that $1=\sum_{i=0}^{\lambda-1}\sigma^{i}(e_0)a_i$. Multiplying both sides of the above equation by $\sigma^{i}(e_0)$ yields $\sigma^{i}(e_0)=\sigma^{i}(e_0)a_i$, which implies that $1=\sum_{i=0}^{\lambda-1}\sigma^{i}(e_0)$. Then applying $\Delta$  to the above equation gives 
$0=\sigma^{\lambda}(e_0)-e_0$. Hence, $(\bA,\,\sigma)$ is a weak idempotent difference ring with respect to $\{\sigma^i(e_0)~|~i\in[\lambda-1]_0\}$ by Definition~\ref{DEF:idempotent}. 
\end{proof}

We generalize \cite[Theorem~4.3]{Schn2017} from basic $RPS$-towers\footnote{In this restricted version, the $t_i$ in Lemma~\ref{LM:anRmonomial+idem} are simple $PS$-monomials with the following extra restriction: If $t_i$ is a $P$-monomial, then $\sigma(t_i)/t_i$ is free of the $R$-monomial $y$.}~(See \cite[Definition 2.7]{Schn2017}) to $RPS$-towers. It will be used to construct idempotent representations and complete reductions for  $R\Pi\Sigma^*$-towers in the sequel.
 
\begin{lemma}\label{LM:anRmonomial+idem}
Let $(\bA,\, \sigma)$ be a difference ring with constant field $C$. Assume further that 
$(\bE,\,\sigma)$ with $\bE=\bA\langle y \rangle \langle t_1 \rangle \langle t_2 \rangle \ldots \langle t_k \rangle$ is an $APS$-tower over $(\bA,\,\sigma)$, where $y$ is an $R$-monomial over $\bA$ with $\alpha:=\sigma(y)/y \in C$ of order $\lambda$. 
Set $\widetilde{\bE}=\bA \langle t_1 \rangle \langle t_2 \rangle \ldots \langle t_k \rangle$. Then the following assertions hold.
\begin{itemize}
  \item [(i)] One can compute 
  $e_0,e_1,\ldots,e_{\lambda-1}$ of $C\langle y \rangle$
 such that $\bE$ is a weak idempotent difference ring with respect to $\{e_i~|~i\in[\lambda-1]_0\}$.
 \item [(ii)]  
  For every $s \in [\lambda-1]_0$ and $f\in\bE$,  $e_sf=e_s\cdot(f|_{y \rightarrow \alpha^{\lambda-1-s}})$.
  In particular,
  $e_s\bE=e_s\widetilde{\bE}$.
   \item[(iii)] For every $s\in [\lambda-1]_0$, $(\widetilde{\bE},\, \sigma_s)$  is an $APS$-tower over $(\bA,\, \sigma^{\lambda})$, where $\sigma_s$ is defined  by $\sigma_s(f)=\sigma^{\lambda}(f)|_{y \rightarrow \alpha^{\lambda-1-s}}$ for all $f \in \widetilde{\bE}$. 
        In particular, for every $i\in [k]$, $t_i$ is an A-monomial over $(\bA\langle t_1\rangle\langle t_2\rangle \ldots \langle t_{i-1}\rangle,\, \sigma_s)$  of order $m_i$ if $t_i$ is an A-monomial over $(\bA \langle y\rangle\langle t_1\rangle \ldots \langle t_{i-1}\rangle,\, \sigma)$ of order $m_i$, and $t_i$ is a P/S-monomial over $(\bA\langle t_1\rangle \ldots \langle t_{i-1}\rangle,\, \sigma_s)$ if $t_i$ is a P/S-monomial over $(\bA \langle y\rangle\langle t_1\rangle \ldots \langle t_{i-1}\rangle,\, \sigma)$.
       
\item[(iv)] There is a difference ring isomorphism $\tau$ from $(e_s\bE,\, \sigma^{\lambda})$ onto $(\widetilde{\bE},\, \sigma_s)$, mapping $e_sf$  to $f|_{y \rightarrow \alpha^{\lambda-1-s}}$ for every $f\in\bE$, where $\sigma_s$ is given by (iii). Moreover, $\tau^{-1}(\tilde{f})=e_0\tilde{f}$ for every $\tilde{f} \in \widetilde{\bE}$.
    
     \item[(v)]  Assume further that $(\bE,\, \sigma)$ is a $PS$-tower over $(\bA\langle y \rangle,\, \sigma)$ and $\bA$ is an integral domain. Then  $(\bE,\,\sigma)$ is an idempotent difference ring  with respect to $\{e_i~|~i\in[\lambda-1]_0\}$, where the $e_i$ are given by (i).
    
  \item [(vi)]  $(\bE,\, \sigma)$ is an $R\Pi\Sigma^*$-tower over $(\bA,\, \sigma)$ if and only if there is $k\in [\lambda-1]_0$ such that $(\tilde \bE,\,\sigma_k)$ is an $R\Pi\Sigma^*$-tower over $(\bA,\, \sigma^\lambda)$, where $\sigma_k$ is given by (iii). In this case, $\const(\tilde \bE,\, \sigma_k)=  C$ if $(\bA,\, \sigma)$ is constant-stable.
\end{itemize} 
\end{lemma}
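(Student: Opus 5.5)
The plan is to write the idempotents down explicitly as Lagrange-type idempotents in $C\langle y\rangle$ and to derive everything else from the single relation $ye_s=\beta_s e_s$ with $\beta_s:=\alpha^{\lambda-1-s}$.

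For (i), I would set
\[
e_s:=\frac{1}{\lambda}\sum_{j=0}^{\lambda-1}\bigl(\beta_s^{-1}y\bigr)^j ,\qquad s\in[\lambda-1]_0,
\]
which makes sense because the characteristic is $0$. Since $y^\lambda=1$, a direct computation gives $ye_s=\beta_s e_s$. The element $\alpha\in C$ has order $\lambda$, so $\beta_0,\dots,\beta_{\lambda-1}$ are exactly the distinct $\lambda$th roots of unity. The usual character-sum identity $\sum_j\gamma^j=0$ for $\gamma^\lambda=1$, $\gamma\neq1$ then yields $e_s^2=e_s$, $e_se_u=0$ for $s\neq u$, and $\sum_se_s=1$. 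Moreover $\sigma(e_s)=\frac1\lambda\sum_j(\alpha\beta_s^{-1}y)^j$, and since $\beta_s/\alpha=\beta_{s+1}$ (indices mod $\lambda$), we get $\sigma(e_s)=e_{s+1\bmod\lambda}$.

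For (ii), the ring $\bE$ (forgetting $\sigma$) is $\widetilde{\bE}[y]/\langle y^\lambda-1\rangle$. This holds because the defining relations of the $t_i$ are $t_i^{m_i}=1$ or are absent, so they do not involve $y$. Writing $f=\sum_j f_jy^j$ with $f_j\in\widetilde{\bE}$, the relation $ye_s=\beta_se_s$ gives $e_sf=e_sf(\beta_s)$, and hence $e_s\bE=e_s\widetilde{\bE}$.

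For (iii), I would observe that $\sigma_s$ is the composition of the automorphism $\sigma^\lambda$ of $\bE$, restricted to $\widetilde{\bE}$, with the evaluation homomorphism $y\mapsto\beta_s$, which is a ring map $\bE\to\widetilde{\bE}$. Thus $\sigma_s$ is a ring endomorphism of $\widetilde{\bE}$, and $\sigma_s|_{\bA}=\sigma^\lambda$. By Lemma~\ref{LM:RPiSigma+term}, if $\sigma(t_i)=b_it_i$ then $\sigma_s(t_i)=b_i^{\sigma,\lambda}(\beta_s)\,t_i$. The coefficient is still a unit, and in the $A$-case it is still a root of unity. If $\sigma(t_i)=t_i+a_i$ then $\sigma_s(t_i)=t_i+\bigl(\sum_{j<\lambda}\sigma^j(a_i)\bigr)(\beta_s)$. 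The relations $t_i^{m_i}=1$ are preserved. Bijectivity then follows by induction up the tower, since $\sigma_s$ is triangular in each new generator.

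For (iv), I would take $\tau(e_sf):=f(\beta_s)$. This is well defined and injective because, for $h\in\widetilde{\bE}$, the coefficient of $y^j$ in $e_sh$ is $h\beta_s^{-j}/\lambda$, so $e_sh=0$ forces $h=0$. The map is a ring isomorphism onto $\widetilde{\bE}$ with inverse $\tilde f\mapsto e_s\tilde f$; the statement writes $e_0$, and I would check whether this should read $e_s$. It commutes with the shifts because $\sigma^\lambda(e_s)=e_s$, so that $\tau(\sigma^\lambda(e_sf))=\sigma^\lambda(f)(\beta_s)=\sigma_s(\tau(e_sf))$. Part (v) is then immediate: by (iii) each $e_s\bE\cong\widetilde{\bE}$ is a $PS$-tower, that is, an iterated polynomial and Laurent polynomial ring, over the integral domain $\bA$, and hence is itself an integral domain.

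For (vi), I would first show that $c=\sum_se_sc_s$ is a $\sigma$-constant if and only if $e_{s+1}c_{s+1}=\sigma(e_sc_s)$ for all $s$ and $\sigma^\lambda(e_0c_0)=e_0c_0$. Hence $\const(\bE,\sigma)\cong\const(e_0\bE,\sigma^\lambda)\cong\const(\widetilde{\bE},\sigma_0)$ via $c\mapsto\tau(e_0c)$. This map is $C$-linear and fixes $C$. By Remark~\ref{RE:idemp+unit}(ii) all the $(\widetilde{\bE},\sigma_k)$ are isomorphic, so the choice of $k$ does not matter. The direction ($\Rightarrow$) then gives $C\subseteq\const(\bA,\sigma^\lambda)\subseteq\const(\widetilde{\bE},\sigma_k)\cong C$, so all three coincide. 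The main obstacle is the converse: there one only knows $\const(\widetilde{\bE},\sigma_k)=\const(\bA,\sigma^\lambda)=:K$ and must show $K=C$. I would try the following. For $a\in K$, the element $\sum_ie_i\sigma^i(a)$ is a $\sigma$-constant of $\bE$, and its $y$-coefficients $g_j=\sum_i\alpha^{-ij}\sigma^i(a)$ satisfy $\sigma(g_j)=\alpha^jg_j$. I would then use the $R$-monomial criterion for $y$ over $\bA$ from Theorem~\ref{THM:testingRPiSigma-monomials}(iii) to force $g_j=0$ for $j\neq0$, which gives $a\in C$. Under constant-stability, $K=C$ holds directly, which gives the final claim.
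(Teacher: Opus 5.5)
Your proposal is correct. For (i)--(v) it follows the paper's argument. Your $e_s=\frac1\lambda\sum_{j}(\beta_s^{-1}y)^j$ coincides with the paper's normalized Lagrange polynomial $\tilde e_s(y)/\tilde e_s(\alpha^{\lambda-1-s})$. Your relation $ye_s=\beta_se_s$, together with the character-sum identities, replaces three steps of the paper: the ``polynomial of degree $<\lambda$ with $\lambda$ roots'' argument in (i), the division of $f$ by $y-\alpha^{\lambda-1-s}$ in (ii), and the divisibility argument for well-definedness in (iv). You are right about the typo in (iv): the paper's proof builds the inverse as $\tilde f\mapsto e_s\tilde f$, so $e_0$ in the statement should read $e_s$. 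In (iv), use (ii) to assume $f\in\widetilde{\bE}$; then $\sigma^\lambda(f)|_{y\to\beta_s}=\sigma_s(\tau(e_sf))$ is literally the definition of $\sigma_s$. Your bijectivity argument for $\sigma_s$ in (iii) supplies a detail the paper leaves implicit.

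The real difference is in (vi). The paper argues elementwise. For $(\Rightarrow)$ it sums the orbit $h=\sum_j\sigma^j(e_kf)$ of a $\sigma^\lambda$-constant $e_kf$, obtaining $h\in C$ and $e_kf=e_kh$. For $(\Leftarrow)$ it shows that a $\sigma$-constant $f$ of $\bE$ has $e_sf\in e_s\bA$ for every $s$, transporting the hypothesis from $e_k$ to $e_s$ via Remark~\ref{RE:idemp+unit}. Hence $f\in\bA\langle y\rangle$, and it concludes with $\const(\bA\langle y\rangle,\sigma)=C$.

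You instead first prove that $c\mapsto\tau(e_0c)$ is a bijection from $\const(\bE,\sigma)$ onto $\const(\widetilde{\bE},\sigma_0)$ fixing $C$. This makes $(\Rightarrow)$ immediate and reduces $(\Leftarrow)$ to the fact $\const(\bA,\sigma^\lambda)=C$, which you get from Theorem~\ref{THM:testingRPiSigma-monomials}(iii).

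Two points in your (vi) need closing:
\begin{itemize}
\item To finish the $(\Leftarrow)$ step you must invert the transform: $a=\frac1\lambda\sum_j g_j=\frac1\lambda g_0$ with $\sigma(g_0)=g_0$. More directly, $\sum_ie_i\sigma^i(a)\in\const(\bA\langle y\rangle,\sigma)=C$, and comparing $e_0$-components gives $a\in C$.
\item For the choice of $k$, run your constant bijection with $e_k$ in place of $e_0$. Citing Remark~\ref{RE:idemp+unit}(ii) alone would also require checking that the induced isomorphism $(\widetilde{\bE},\sigma_k)\to(\widetilde{\bE},\sigma_{k+1})$ restricts to $\sigma$ on $\bA$.
\end{itemize}

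Your route yields an extra observation: $\const(\bA,\sigma^\lambda)=C$ holds automatically once $\bA$ carries an $R$-monomial $y$ of order $\lambda$ with $\sigma(y)/y\in C$. So the constant-stability hypothesis in the last sentence of (vi) is superfluous. The paper's $(\Rightarrow)$ computation also gives $\const(e_k\bE,\sigma^\lambda)\subseteq e_kC$, but the paper does not draw this conclusion. The paper's route is shorter and uses the definition of an $R$-monomial directly rather than the criterion.
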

\begin{proof}  
Set $\alpha:=\sigma(y)/y$. 
Then $\alpha$ is a $\lambda$th primitive root of unity
by Theorem~\ref{THM:testingRPiSigma-monomials}~(iii). For every $s \in [\lambda-1]_0$, let
$$\tilde{e}_s(y):= \prod_{\substack{j=0 \\ j\neq \lambda-1-s}}^{\lambda-1} (y-\alpha^j).$$
Since $C$ is a field, $\tilde{e}_s(\alpha^{\lambda-1-s}) \neq 0$, which allows us to define
\begin{equation}\label{EQ:IdempEle}
e_s=e_s(y):=\frac{\tilde{e}_s(y)}{\tilde{e}_s(\alpha^{\lambda-1-s})} \in C\langle y \rangle \backslash \{0\}.
\end{equation}

(i)~$\alpha \in C$ implies $\sigma(e_s)=e_s(\alpha y)$. Moreover, $\sigma(e_s)=e_{s+1 \! \bmod \! \lambda}$ by a straightforward computation.
It remains to show that the $e_i$ are pairwise orthogonal, idempotent, and sum to 1. Since $y^{\lambda}-1$ divides $e_i(y)e_j(y)$ for $i, j \in[\lambda-1]_0$ with $i\neq j$, the $e_i$ are pairwise orthogonal. On the other hand, write $e_s(y)^2-e_s(y)$ as $\sum_{i=0}^{\lambda-1}c_iy^i$ for $c_i \in C$. Let $C[z]$ be the ring of polynomials, where $z$ is an indeterminate over $C$. Consider $p_s(z)=\sum_{i=0}^{\lambda-1}c_iz^i \in C[z]$.  Note that
\begin{equation}\label{EQ:idempotent}
e_s(\alpha^i)  =
\left\{
\begin{array}{ll}
1, &  \text{if $i=\lambda-1-s$}, \\ 
0,  & \text{if $i \neq \lambda-1-s$}.
\end{array} \right.
\end{equation}
So $p_s(\alpha^i)=0$ for every $s\in [\lambda-1]_0$.
Since $C$ is a field, $p_s$ has at most $\lambda-1$ zeros, which implies that $p_s$ must be zero.
 Hence, $e_s^2=e_s$. Similarly, we can show that $\sum_{i=0}^{\lambda-1}e_i=1$. 
 
(ii) First we show that the evaluation $\cdot~|_{y \rightarrow \alpha^{\lambda-1-s}}$ on $\bE$ is well-defined. Let $f, g \in \bE$ be such that $f-g=u(y^{\lambda}-1)$ for some $u \in \widetilde{\bE}\langle y \rangle$. Then $f|_{y \rightarrow \alpha^{\lambda-1-s}}=g |_{y \rightarrow \alpha^{\lambda-1-s}}$ by $\alpha^{\lambda}=1$.
 Note that $f$ can be written as $(y-\alpha^{\lambda-1-s})\tilde f
+ f|_{y\to\alpha^{\lambda-1-s}}$
for some $\tilde f\in\widetilde{\bE}[y]$.
Moreover, $e_s(y-\alpha^{\lambda-1-s})$ is a multiple of
$y^\lambda-1$, and hence vanishes. Therefore,
$e_sf= e_s(f|_{y\to\alpha^{\lambda-1-s}}).$

(iii) Set $\bE_0=\bA\langle y\rangle$, $\bE_i=\bE_{i-1}\langle t_i\rangle$, $\widetilde{\bE}_0=\bA$ and $\widetilde{\bE}_i=\widetilde{\bE}_{i-1}\langle t_i\rangle$ for $i \in [k]$. We proceed the assertion by induction on $i$. For the base case $i=0$, $\sigma_s(f)=\sigma^{\lambda}(f)$ for all $f\in \bA$. Assume that $(\widetilde{\bE}_{i-1},\,\sigma_s)$ is an $APS$-tower over $(\bA, \sigma^{\lambda})$. 
Consider $(\widetilde{\bE}_{i-1}\langle t_i\rangle, \,\sigma_s)$.
Let $\sigma(t_i)=a_it_i$ for some  $a_i \in \bE_{i-1}^*$ if $t_i$ is an $AP$-monomial. 
By Lemma~\ref{LM:RPiSigma+term}~(ii), we have  $\sigma^{\lambda}(t_i)= a_i^{\sigma, \lambda}t_i$, which implies that
 $\sigma_s(t_i)=\tilde{a}_it_i$ for  $\tilde{a}_i=a_i^{\sigma, \lambda}~|_{y\rightarrow \alpha^{\lambda-1-s}}$. 
Since $a_i \in \bE_{i-1}^*$ and $\sigma$ is an automorphism, $\tilde{a}_i \in \widetilde{\bE}_{i-1}^*$. 
So $t_i$ is an $AP$-monomial over $(\widetilde{\bE}_{i-1},\, \sigma_s)$.
Furthermore, if $a_i$ is a $\lambda_i$th root of unity in $\bE_{i-1}$, 
then $\tilde{a}_i$ is also a $\lambda_i$th root of unity in $\widetilde{\bE}_{i-1}$.
That is, $t_i$ is an $A$-extension of $(\widetilde{\bE}_{i-1},\, \sigma_s)$ of order $\lambda_i$ if $t_i$ is an $A$-extension of $(\bE_{i-1},\, \sigma)$ of order $\lambda_i$. 
On the other hand,  $\sigma(t_i)=t_i+b_i$ for some $b_i \in \bE_{i-1}$ if $t_i$ is a $\Sigma^*$-monomial. 
It follows from Lemma~\ref{LM:RPiSigma+term}~(i) that $\sigma_s(t_i)=t_i+\tilde{b}_i$ for some $\tilde{b}_i \in  \bE_{i-1}$. 
So $t_i$ is an $S$-monomial over $\widetilde{\bE}_{i-1}$. 
Consequently, $(\widetilde{\bE}_i,\, \sigma_s)$ is an $APS$-tower over $(\bA,\, \sigma^{\lambda})$.

(iv)~By Remark \ref{RE:idemp+unit}~(i), $(e_sE, \, \sigma^{\lambda})$ is a difference ring. 
Next, we verify that $\tau$ is well-defined. Let $f, g \in\bE$ be such that $e_sf=e_sg$. Then $e_s(f-g)=u(y^{\lambda}-1)$ for some $u \in \widetilde{\bE}\langle y \rangle$. By \eqref{EQ:IdempEle}, $y-\alpha^{\lambda-1-s}$ divides $f-g$. So $f|_{y \rightarrow \alpha^{\lambda-1-s}}=g |_{y \rightarrow \alpha^{\lambda-1-s}}$, which implies that $\tau$ is well defined. Furthermore, $\tau$ is a ring homomorphism from $e_s\bE$ to $\widetilde{\bE}$. Now consider the ring homomorphism

\[
 \begin{array}{cccc}
\rho: & \widetilde{\bE} & \rightarrow & e_s\bE \\
       &   \tilde{f}                & \mapsto     & e_s\tilde{f}.
\end{array}
\]
Then, for every $f\in\bE$, $\rho(\tau(e_sf))=e_s(f|_{y \rightarrow \alpha^{\lambda-1-s}})=e_sf$ by (ii). 
On the other hand, $\tau(\rho(\tilde{f}))=\tilde{f}$ for every $\tilde{f} \in \widetilde{\bE}$. 
So $\rho \tau=\tau \rho=\bf{1}$. Thus, $\tau$ is a ring isomorphism and $\tau^{-1}=\rho$. 
By (ii) again, every element in $e_s\bE$ can be expressed as $e_s \tilde{f}$ for some $\tilde{f} \in \widetilde{\bE}$. 
Then
\[\tau(\sigma^{\lambda}(e_s\tilde f))= \tau(e_s\sigma^{\lambda}(\tilde{f}))=\sigma^{\lambda}(\tilde f)|_{y \rightarrow \alpha^{\lambda-1-s}}=\sigma_s(\tilde f)=\sigma_s(\tau(e_s\tilde f)).\]
So $\tau \circ \sigma^{\lambda}=\sigma_s \circ \tau$. Consequently, $\tau$ is a difference ring isomorphism from  $(e_s\bE,\, \sigma^{\lambda})$ onto $(\widetilde{\bE},\, \sigma_s)$.

(v) By (iii) and the assumption as given in (v), $(\widetilde{\bE},\, \sigma_s)$ is a $PS$-tower over $(\bA,\, \sigma^{\lambda})$. So $\widetilde{\bE}$ is an integral domain. Together with (iv), $e_s\bE$ is also an integral domain. Therefore, by (i), $\bE$ is an idempotent difference ring with respect to $\{e_i~|~i \in [\lambda-1]_0\}$.

(vi) By (iii), $(\widetilde{\bE},\, \sigma_s)$ is an $APS$-tower over $(\bA,\, \sigma^{\lambda})$ for every $s\in [\lambda-1]_0$. By (iv), it suffices to show that $\const(\bE,\,\sigma)=\const(\bA,\,\sigma)$ if and only if there is $k\in[\lambda-1]_0$ such that $\const(e_k \bE,\, \sigma^{\lambda})=\const(e_k\bA, \, \sigma^{\lambda})$.  
Assume that $\const(\bE,\, \sigma)=\const(\bA,\,\sigma)$ and fix any $k\in[\lambda-1]_0$. Let $f \in \bE$ be such that $e_kf \in \const(e_k \bE,\, \sigma^{\lambda})$. Set $h$ to be $\sum_{j=0}^{\lambda-1} \sigma^j(e_kf)$. Then $\sigma(h)-h=\sigma^{\lambda}(e_kf)-e_kf=0$ by $e_kf \in \const(e_k \bE,\, \sigma^{\lambda})$. So $h \in \const(\bE,\, \sigma)$, which implies that $h \in \const(\bA, \, \sigma)$. 
Since $\sigma(e_k)=e_{k+1 \! \bmod \! \lambda}$, $e_{k}e_{k+1}=0$ and $e_k^2=e_k$, we have $e_kf=e_kh \in e_k\bA$. So $e_k f \in \const(e_k\bA, \,\sigma^{\lambda})$. 
Thus, $\const(e_k \bE,\, \sigma^{\lambda})=\const(e_k\bA,\, \sigma^{\lambda})$ by $\const(e_k\bA,\, \sigma^{\lambda}) \subset \const(e_k \bE,\, \sigma^{\lambda})$. 
Conversely, suppose that we can choose $k\in[\lambda-1]_0$ such that $\const(e_k \bE,\, \sigma^{\lambda})=\const(e_k\bA,\, \sigma^{\lambda})$.
Let $f \in \const(\bE,\,\sigma)$ and $s \in [\lambda-1]_0$.
Then $e_sf \in \const(e_s\bE,\,\sigma^{\lambda})$. 
Since $\const(e_k \bE,\, \sigma^{\lambda})=\const(e_k\bA,\, \sigma^{\lambda})$, we have $\const(e_s \bE,\, \sigma^{\lambda})=\const(e_s\bA,\, \sigma^{\lambda})$ by Remark \ref{RE:idemp+unit}. 
So $e_s f \in \const(e_s\bA, \,\sigma^{\lambda})$. In particular, $e_sf \in \bA\langle y \rangle$. This leads to
$$f=1\cdot f=e_0f+e_1f+\cdots+e_{\lambda-1}f \in \bA\langle y \rangle.$$
Hence, $f \in \const(\bA\langle y \rangle,\, \sigma)$. Consequently, $f \in \const(\bA,\, \sigma)$ because $y$ is an $R$-monomial. In this case, if $(\bA,\,\sigma)$ is constant-stable, then $\const(\widetilde{\bE},\, \sigma_s)= \const(\bA,\, \sigma^{\lambda})=C$.
\end{proof}

A direct benefit of the above lemma is that, one can determine whether an $APS$-tower
is an $R\Pi\Sigma^*$-tower. This generalizes the method given in \cite{Schn2016}, which applies only to the special case of simple $APS$-towers.

\begin{prop}\label{PROP:checkingRPiSigmaTower}
Let $(\bA,\,\sigma)$ be a computable difference ring which is integral and constant-stable. Assume that, for every $k\in\bZ^{+}$, one can determine whether a $PS$-tower over $(\bA,\, \sigma^k)$ is a $\Pi\Sigma^*$-tower.   Then one can determine whether an $APS$-tower over $(\bA,\,\sigma)$ is an $R\Pi\Sigma^*$-tower.
\end{prop}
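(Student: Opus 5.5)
We argue by induction on the number $\ell$ of $A$-monomials in the given $APS$-tower $(\bE,\sigma)$ over $(\bA,\sigma)$, peeling them off one at a time with Lemma~\ref{LM:anRmonomial+idem}. First we scan the tower from the bottom and locate the first $A$-monomial. If there is none, $\bE$ is a $PS$-tower over $(\bA,\sigma)$, and the hypothesis with $k=1$ decides whether it is a $\Pi\Sigma^*$-tower. That is exactly the question of whether it is an $R\Pi\Sigma^*$-tower.

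Now let $y$ be the lowest $A$-monomial, so $\bE=\bA\langle t'_1\rangle\ldots\langle t'_r\rangle\langle y\rangle\langle t_1\rangle\ldots\langle t_k\rangle$, where the $t'_j$ are $P$/$S$-monomials. We first test whether $\bB:=\bA\langle t'_1\rangle\ldots\langle t'_r\rangle$ is a $\Pi\Sigma^*$-tower over $(\bA,\sigma)$, using the hypothesis. If it is not, neither is $\bE$ an $R\Pi\Sigma^*$-tower, since the constant condition for the whole tower forces it for every intermediate ring. If it is, then by Lemma~\ref{Lemma:ConstantStable} the ring $(\bB,\sigma)$ is a computable, constant-stable integral domain with $\const(\bB,\sigma)=C$. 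Proposition~\ref{PROP:R-monomial+primitive} then says $\alpha=\sigma(y)/y\in C$ and that $y$ is an $R$-monomial exactly when $\ord(\alpha)=\lambda$. We check this by computing powers of $\alpha$ in $C$; if it fails, the answer is no. If it holds, Lemma~\ref{LM:anRmonomial+idem}(vi) applies over the ground ring $\bB$: $\bE$ is an $R\Pi\Sigma^*$-tower over $\bB$ if and only if $(\widetilde{\bE},\sigma_0)$, with $\widetilde{\bE}=\bB\langle t_1\rangle\ldots\langle t_k\rangle$, is an $R\Pi\Sigma^*$-tower over $(\bB,\sigma^\lambda)$. Moreover, since $\bB$ is a $\Pi\Sigma^*$-tower over $\bA$, the whole $\bE$ is an $R\Pi\Sigma^*$-tower over $\bA$ exactly when it is one over $\bB$.

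By Lemma~\ref{LM:anRmonomial+idem}(iii), $(\widetilde{\bE},\sigma_0)$ is an $APS$-tower over $(\bB,\sigma^\lambda)$ with one fewer $A$-monomial, and it is computable because $\sigma_0$ is $\sigma^\lambda$ followed by the substitution $y\mapsto\alpha^{\lambda-1}$. To recurse we view it as a tower over $(\bA,\sigma^\lambda)$, namely $\bA\langle t'_1\rangle\ldots\langle t'_r\rangle\langle t_1\rangle\ldots$, with each $t'_j$ still a $P$/$S$-monomial under $\sigma^\lambda$ by Lemma~\ref{LM:RPiSigma+term}. The ground ring $(\bA,\sigma^\lambda)$ is integral and constant-stable by Remark~\ref{RE:constant-stable}, with constants $C$. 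The oracle hypothesis also transfers: any $PS$-tower over $((\bA,\sigma^\lambda),\sigma^{k'})$ is a $PS$-tower over $(\bA,\sigma^{\lambda k'})$, which is covered by the assumption. So the induction hypothesis decides the reduced tower, and hence the original one.

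The main obstacle is the bookkeeping of the ground ring. Lemma~\ref{LM:anRmonomial+idem} requires the $R$-monomial to sit directly over the ground ring with $\alpha$ constant. We therefore have to confirm that re-basing at $\bB$ and then pulling back to $(\bA,\sigma^\lambda)$ preserves the equivalence. Concretely, we must show that $\bB\langle t_1\rangle\ldots$ over $(\bA,\sigma^\lambda)$ being $R\Pi\Sigma^*$ is equivalent to it being $R\Pi\Sigma^*$ over $(\bB,\sigma^\lambda)$. We will argue this directly from the constant fields: by constant-stability, $\const(\bB,\sigma^\lambda)=\const(\bA,\sigma^\lambda)=C$. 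The constant-field equalities then match.
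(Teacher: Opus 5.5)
Your proposal is correct and takes essentially the same route as the paper. Both argue by induction on the number of $A$-monomials. The $PS$-prefix below the lowest $A$-monomial is tested with the oracle. Lemma~\ref{Lemma:ConstantStable} and Proposition~\ref{PROP:R-monomial+primitive} then decide whether that monomial is an $R$-monomial. Lemma~\ref{LM:anRmonomial+idem}~(vi), combined with constant-stability (so that $\const(\bB,\sigma^{\lambda})=\const(\bA,\sigma^{\lambda})=C$), passes to $(\widetilde{\bE},\sigma_0)$ over $(\bA,\sigma^{\lambda})$, and you recurse.

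One small point: (vi) is stated with ``there is $k$''. To fix $k=0$ in the forward direction, you should note that all components $(e_k\bE,\sigma^{\lambda})$ are isomorphic. The paper does this via Remark~\ref{RE:idemp+unit}~(ii); alternatively, the proof of (vi) works for an arbitrary $k$.
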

\begin{proof} 
Let $(\bE,\,\sigma)$ with $\bE=\bA\langle t_1 \rangle \langle t_2 \rangle \ldots \langle t_e \rangle$ be an $APS$-tower over $(\bA, \, \sigma)$. We proceed the conclusion by induction on the number $\ell$ of $A$-monomials in $\bE$. The base case $\ell=0$ follows from the assumption of the proposition. Assume that $\ell >0$ and the conclusion holds for $\ell-1$. Let $j$ be the minimal such that $t_j$ is an $A$-monomial. Then $(\tilde{\bA},\,\sigma)$ with $\tilde{\bA}=\bA \langle t_1 \rangle \langle t_2 \rangle \ldots \langle t_{j-1} \rangle$ is a $PS$-tower over  $(\bA, \, \sigma)$. 
By the assumption of the proposition, one can determine whether $(\tilde{\bA},\, \sigma)$ is a $\Pi\Sigma^*$-tower over $(\bA,\,\sigma)$. If the answer is negative, $(\bE,\,\sigma)$ is not an $R\Pi\Sigma^*$-tower over $(\bA,\,\sigma)$. Otherwise, $(\tilde{\bA},\, \sigma)$ is constant-stable by Lemma~\ref{Lemma:ConstantStable}.  Set $\lambda=\ord(t_j)$. Then $t_j$ is an $R$-monomial over $(\tilde{\bA},\, \sigma)$ if and only if $\sigma(t_j)/t_j$ is a primitive $\lambda$th root of unity in $C$ by Proposition~\ref{PROP:R-monomial+primitive}. If the answer is negative, $(\bE,\,\sigma)$ is not an $R\Pi\Sigma^*$-tower over $(\bA,\,\sigma)$. Otherwise, using Lemma~\ref{LM:anRmonomial+idem} (i), we obtain a weak idempotent representation of $\bE$:
  $$\bE=e_0\bE \oplus e_1\bE \oplus \cdots \oplus e_{\lambda-1}\bE.$$
  Set $\widetilde{\bE}:=\tilde{\bA} \langle t_{j+1} \rangle \ldots \langle t_e \rangle$. By Lemma~\ref{LM:anRmonomial+idem} (vi) and Remark \ref{RE:idemp+unit}~(ii), $(\bE,\,\sigma)$ is an $R\Pi\Sigma^*$-tower over $(\tilde{\bA},\, \sigma)$ if and only if $(\widetilde{\bE},\, \sigma_0)$ with $\sigma_0(\tilde{f})=\sigma^{\lambda}(\tilde{f})|_{t_j\rightarrow \alpha^{\lambda-1}}$ for $\tilde{f} \in \widetilde{\bE}$ is an $R\Pi\Sigma^*$-tower over $(\tilde{\bA},\, \sigma^{\lambda})$. Recall that  $(\tilde{\bA},\, \sigma^{\lambda})$ is constant-stable by Lemma~\ref{Lemma:ConstantStable}. Thus $(\tilde{\bA}, \, \sigma^{\lambda})$ is a $\Pi\Sigma^*$-tower over $(\bA,\, \sigma^{\lambda})$, which implies that $(\bE, \, \sigma)$ is an $R\Pi\Sigma^*$-tower over $(\bA,\, \sigma)$ if and only if $(\widetilde{\bE}, \, \sigma_0)$ is an $R\Pi\Sigma^*$-tower over $(\bA,\, \sigma^{\lambda})$.\\ 
  Again we can determine whether a $PS$-tower over $(\bA,\, \sigma^{\lambda})$ is a $\Pi\Sigma^*$-tower due to the assumption of the proposition. Since in the $APS$-tower
  $(\widetilde{\bE}, \sigma_0)$ over $(\bA,\, \sigma^{\lambda})$ there are only $\ell-1$ $A$-monomials left, we can apply the induction hypothesis to determine whether it is an $R\Pi\Sigma^*$-tower. If so, $(\bE,\, \sigma)$ is also an $R\Pi\Sigma^*$-tower over $(\bA,\,\sigma)$.
\end{proof}

Recall that any $\Pi\Sigma^*$-field $(\bF,\, \sigma)$ over $C$ is a constant-stable integral domain. Provided the constant field $C$ is orbit-computable, the methods in~\cite{Karr1981,Schn2016} enable us to decide algorithmically whether a $PS$-tower over $(\bF, \, \sigma^k)$ is a $\Pi\Sigma^*$-tower for all $k \in \mathbb{Z}^+$\footnote{For instance, testing whether an $S$-monomial is a $\Sigma^*$-monomial relies on Theorem~\ref{THM:testingRPiSigma-monomials}~(i) combined with Theorem~\ref{Thm:PiSiOverSimpleRPiSiDESolver}.}. Consequently, we obtain the following algorithmic special case of Proposition~\ref{PROP:checkingRPiSigmaTower}.

\begin{cor}\label{Cor:CheckInPiSigma}
Let $(\bF,\,\sigma)$ be a $\Pi\Sigma^*$-field over an orbit-computable constant field $C$. Then one can determine whether an $APS$-tower over $(\bF,\,\sigma)$ is an $R\Pi\Sigma^*$-tower algorithmically.
\end{cor}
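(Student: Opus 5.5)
The plan is to obtain the corollary as a direct instance of Proposition~\ref{PROP:checkingRPiSigmaTower} with $(\bA,\sigma)=(\bF,\sigma)$. We then only have to check its three hypotheses for a $\Pi\Sigma^*$-field over an orbit-computable $C$.

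The first two hypotheses are routine. Computability of $(\bF,\sigma)$ follows because $\bF=C(t_1)\dots(t_e)$ is a rational function field over a computable $C$, and $\sigma$ is given explicitly on the generators. The field $\bF$ is trivially an integral domain. Constant-stability of $\Pi\Sigma^*$-fields is the field version cited before Lemma~\ref{Lemma:ConstantStable} (\cite[Proposition~32]{Schn2020}).

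The remaining hypothesis is the real content: for every $k\in\bZ^+$ we must decide whether a $PS$-tower over $(\bF,\sigma^k)$ is a $\Pi\Sigma^*$-tower.
\begin{enumerate}
\item We first observe that $(\bF,\sigma^k)$ is again a $\Pi\Sigma^*$-field over $C$, by induction along the tower $C(t_1)\dots(t_e)$. By Lemma~\ref{LM:RPiSigma+term}, $\sigma^k(t_i)/t_i=a_i^{\sigma,k}$ in the $\Pi$-case and $\sigma^k(t_i)-t_i=\sum_{j<k}\sigma^j(b_i)$ in the $\Sigma^*$-case, and both lie in $C(t_1)\dots(t_{i-1})$. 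Constant-stability gives $\const(\bF,\sigma^k)=C$, so every intermediate field also has constants $C$.
\item We then test the monomials of the $PS$-tower one at a time, using Theorem~\ref{THM:testingRPiSigma-monomials}~(i),(ii). An $S$-monomial is a $\Sigma^*$-monomial exactly when a telescoping equation has no solution. A $P$-monomial is a $\Pi$-monomial exactly when $\sigma(g)=b^mg$ has no solution with $g\neq0$ and $m\neq0$.
\item Each intermediate ring is a $\Pi\Sigma^*$-tower over a $\Pi\Sigma^*$-field, hence simple by Remark~\ref{RE:SimplePiSigma=PiSigma}. The telescoping test is therefore decidable by Theorem~\ref{Thm:PiSiOverSimpleRPiSiDESolver}.
\item The product test amounts to computing the $\bZ$-module of such $m$. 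This is handled by the algorithms in \cite{Karr1981,Schn2016}, which use orbit-computability of $C$.
\end{enumerate}

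The main obstacle is this third hypothesis: confirming that the decision procedures of \cite{Karr1981,Schn2016} apply over the shifted field $(\bF,\sigma^k)$ and not only over $(\bF,\sigma)$. I would handle it through the reduction in the first step, since once $(\bF,\sigma^k)$ is recognised as a $\Pi\Sigma^*$-field over the same orbit-computable $C$, those algorithms apply verbatim. With all hypotheses verified, Proposition~\ref{PROP:checkingRPiSigmaTower} gives the decision procedure.
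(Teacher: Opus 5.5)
Your proposal is correct and follows the paper's own route: the corollary is the special case $\bA=\bF$ of Proposition~\ref{PROP:checkingRPiSigmaTower}, with the $PS$-tower test over $(\bF,\sigma^k)$ supplied by the methods of Karr and Schneider via Theorem~\ref{THM:testingRPiSigma-monomials} and Theorem~\ref{Thm:PiSiOverSimpleRPiSiDESolver}. Your explicit check that $(\bF,\sigma^k)$ is again a $\Pi\Sigma^*$-field over $C$ makes precise a step the paper leaves implicit.
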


We illustrate the above decision procedure with a concrete example for the rational difference field.

\begin{example}\label{EX:CheckRmonomials}
Let $(\bF,\,\sigma)$ be the rational difference field with $\bF=C(x)$ and $\sigma(x)=x+1$, where $C:=\bQ(\sqrt{-1})$. 
By Remark~\ref{Remark:AlgRatOribt}, $C$ is orbit-computable. Take the $R$-monomial $y_1$ over $\bF$ of order $2$ with $\sigma(y_1)=-y_1$. Let $e_0 :=\frac{1-y_1}{2}$ and $e_1 :=\frac{1+y_1}{2}$. Then $e_0, e_1$ are idempotent, pairwise orthogonal and $e_0+e_1=1$. Moreover, $e_0\bF \oplus e_1\bF$ is an idempotent representation of $\bF \langle y_1 \rangle$ by Lemma~\ref{LM:anRmonomial+idem}~(i). Set $w:=e_0+\sqrt{-1}e_1\in\bF\langle y_1\rangle$. Then $w^4=1$. Let $y_2$ be an A-monomial of order $4$ with $\sigma(y_2)=wy_2$. Clearly, $y_2$ is not simple. 
Thus a check if it is an $R$-monomial cannot be carried out by the algorithms given in~\cite{Schn2016}. 
But it is possible by our new technique summarized in Corollary~\ref{Cor:CheckInPiSigma}. 
Consider $(\bF\langle y_2 \rangle,\, \sigma_0)$, where $\sigma_0(\tilde{f})=\sigma^2(\tilde{f})|_{y_1\rightarrow -1}$ for every $\tilde{f} \in \bF\langle y_2 \rangle$. By a direct computation, $\sigma_0(y_2)=\sigma^2(y_2)|_{y_1\rightarrow -1}=\sqrt{-1}y_2$. Since $\sqrt{-1}$ is a $4$th primitive root of unity, $y_2$ is an $R$-monomial over $(\bF,\,\sigma_0)$ and thus an $R$-extension over $(\bF\langle y_1\rangle,\sigma)$.
\end{example}

The following proposition constructs an idempotent representation of an $R\Pi\Sigma^*$-tower, in which every component is a $\Pi\Sigma^*$-tower. 
\begin{prop}\label{PROP:RPiSigmaIsIdempotent}
Suppose that the difference ring $(\bA,\, \sigma)$ with constant field $C$ is integral and constant-stable. Let $(\bE,\,\sigma)$ with
$\bE:=\bA\langle t_1 \rangle \langle t_2 \rangle \ldots \langle t_k \rangle$ be an $R\Pi\Sigma^*$-tower over $(\bA,\,\sigma)$, and set $$\bI:=\{ i \in [k]~|~t_i~\text{is an R-monomial}\}.$$ Assume further that $\bI$ is not empty. Set $\lambda=\prod_{i\in \bI} \ord(t_i)$. 
Then the following assertions hold.

\begin{itemize}
  \item [(i)]  One can compute $e_0,\ldots,e_{\lambda-1} \in \bE$ such that $(\bE,\,\sigma)$ is an idempotent difference ring with respect to $\{e_i \mid i\in[\lambda-1]_0\}$.
  \item [(ii)] Set $\bH :=\bA\langle t_i~|~i\in [k] \backslash \bI \rangle$. One can construct $\{\beta_i\in C \mid i\in \bI \}$ such that $(\bH,\, \sigma_0)$, where,  for every $h \in \bH$, $\sigma_0(h)=\sigma^{\lambda}(h)|_{\{t_i \rightarrow \beta_i\mid i\in\bI\}}$,  is a $\Pi\Sigma^*$-tower over $(\bA,\, \sigma^{\lambda})$. Moreover, there is a difference ring isomorphism
      \[
\begin{array}{cccc}
\delta: & (e_0\bE,\, \sigma^\lambda) & \rightarrow & (\bH, \, \sigma_0) \\
      & e_0f& \mapsto     & f|_{\{t_i \rightarrow \beta_i \mid i\in\bI\}}
\end{array}
\]
for every $f \in \bE$. Furthermore, $\delta^{-1}(h)=e_0h$ for every $h \in \bH$.     
\end{itemize}

\end{prop}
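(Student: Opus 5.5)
Induction on $|\bI|$, the number of $R$-monomials, peeling off the first $R$-monomial with Lemma~\ref{LM:anRmonomial+idem} and recursing on the resulting component. By Proposition~\ref{PROP:orderRPiSigma} (since $\bA$ is an integral domain) we first reorder $\bE$ so that all $R$-monomials come first. Then $\bE=\bA\langle y_1\rangle\ldots\langle y_r\rangle\langle p_1\rangle\ldots\langle s_n\rangle$, and $\bH$ is generated by the $\Pi\Sigma^*$-monomials. The reordering is a difference ring isomorphism fixing each $t_i$, so evaluation and idempotents transfer back.

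For the step, take the first $R$-monomial $y=y_1$ of order $\lambda_1$. By Proposition~\ref{PROP:R-monomial+primitive}, $\alpha_1:=\sigma(y)/y$ is a primitive $\lambda_1$th root of unity in $C$. Lemma~\ref{LM:anRmonomial+idem} (i),(iv) then gives computable idempotents $e'_0,\ldots,e'_{\lambda_1-1}\in C\langle y\rangle$. It also gives an isomorphism $(e'_0\bE,\sigma^{\lambda_1})\cong(\widetilde\bE,\sigma'_0)$ via $y\mapsto\alpha_1^{\lambda_1-1}$, with inverse $\tilde f\mapsto e'_0\tilde f$. By (vi), $(\widetilde\bE,\sigma'_0)$ is an $R\Pi\Sigma^*$-tower over $(\bA,\sigma^{\lambda_1})$ with constants $C$, and $(\bA,\sigma^{\lambda_1})$ is again integral and constant-stable by Remark~\ref{RE:constant-stable}. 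By (iii) it has $r-1$ $R$-monomials, with orders preserved. It is simplest to phrase the induction hypothesis in the stronger form: for every $\mu$ and every such tower over $(\bA,\sigma^\mu)$ the statement holds. The case $r-1=0$ is trivial, with $\widetilde\bE=\bH$ a $\Pi\Sigma^*$-tower and the single idempotent $1$.

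From the hypothesis we get $\lambda'=\lambda/\lambda_1$ idempotents $\tilde e_0,\ldots,\tilde e_{\lambda'-1}$ of $(\widetilde\bE,\sigma'_0)$ with $\sigma'_0(\tilde e_j)=\tilde e_{j+1\bmod\lambda'}$, each $\tilde e_j\widetilde\bE$ an integral domain. We also get constants $\beta_i$ for $i\in\bI\setminus\{1\}$ and an isomorphism $\tilde e_0\widetilde\bE\cong(\bH,\sigma_0)$, where $\sigma_0=(\sigma'_0)^{\lambda'}$ composed with evaluation. Set $\beta_1:=\alpha_1^{\lambda_1-1}$ and $e_0:=e'_0\tilde e_0\in\bE$, and define $e_j:=\sigma^j(e_0)$ for $j\in[\lambda-1]_0$. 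Part (i) is then reduced via Lemma~\ref{LM:idemp+equivdef} to showing $\bE=\bigoplus_{j=0}^{\lambda-1}\sigma^j(e_0)\bE$. For this, write $j=q\lambda_1+u$ with $u\in[\lambda_1-1]_0$. Then $\sigma^j(e_0)=\sigma^u(e'_0\,\sigma^{q\lambda_1}(\tilde e_0))$ and $\sigma^u(e'_0)=e'_u$. Under the isomorphism of Lemma~\ref{LM:anRmonomial+idem}(iv), $e'_0\sigma^{q\lambda_1}(\tilde e_0)$ corresponds to $\sigma_0'^{\,q}(\tilde e_0)=\tilde e_q$. Hence the $\lambda'$ elements $e'_0\sigma^{q\lambda_1}(\tilde e_0)$ are orthogonal idempotents summing to $e'_0$. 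Applying $\sigma^u$ and summing over $u$ gives orthogonality and $\sum_j e_j=1$. Periodicity $\sigma^\lambda(e_0)=e_0$ follows since $\tilde e_{\lambda'}=\tilde e_0$. Each $e_j\bE\cong e_0\bE$ by Remark~\ref{RE:idemp+unit}(ii), so it suffices that $e_0\bE\cong\tilde e_0\widetilde\bE$ is an integral domain.

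For (ii), compose the isomorphisms: $e_0f=\tilde e_0(e'_0f)\mapsto\tilde e_0\, f|_{y\to\beta_1}\mapsto f|_{\{t_i\to\beta_i\}}$, with inverse $h\mapsto e'_0\tilde e_0h=e_0h$. We must check that $\sigma^\lambda=(\sigma^{\lambda_1})^{\lambda'}$ is transported to $(\sigma'_0)^{\lambda'}$ followed by evaluation, which is exactly the $\sigma_0$ of the statement. This needs the identity $\sigma^{\lambda}(h)|_{\{t_i\to\beta_i\}}$ equals the iterated $\sigma'_0$ evaluated; it holds because evaluation at $y\to\beta_1$ commutes with $\sigma^{\lambda_1}$ on $e'_0\bE$, by Lemma~\ref{LM:anRmonomial+idem}(ii). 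That $(\bH,\sigma_0)$ is a $\Pi\Sigma^*$-tower over $(\bA,\sigma^\lambda)$ comes from the induction hypothesis.

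I expect the main obstacle to be this bookkeeping in the step: showing that the composite idempotents are genuinely the $\sigma$-orbit of $e_0$ with period exactly $\lambda$, and that the two substitution steps compose into the single $\sigma_0$ on $\bH$. I would handle it by systematically transporting statements through the isomorphism $\tau$ of Lemma~\ref{LM:anRmonomial+idem}(iv).
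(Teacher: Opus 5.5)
Your proposal is correct and follows essentially the same route as the paper. It reorders via Proposition~\ref{PROP:orderRPiSigma}, peels off $y_1$ with Lemma~\ref{LM:anRmonomial+idem}, applies the induction hypothesis to $(\widetilde{\bE},\tilde\sigma_0)$ over $(\bA,\sigma^{\lambda_1})$, sets $e_0=u_0v_0$, and composes $\tau$ with $\tilde\delta$. The differences are cosmetic: you obtain the paper's Claim~1 directly from $\tau$ being a difference isomorphism, you verify the decomposition (Claim~3) on the idempotents $\sigma^{q\lambda_1+u}(e_0)$ rather than on the ideals, and you use the trivial case $\lambda'=1$ as base case instead of Lemma~\ref{LM:anRmonomial+idem}(v).
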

\begin{proof} 
By Proposition~\ref{PROP:orderRPiSigma}, $\bE$ can be reordered as~\eqref{EQ:reoder+RPiSigma}
where the $y_i$ are $R$-monomials, the $p_i$ are $\Pi$-monomials, and the $s_i$ are $\Sigma^*$-monomials. Since $\bI\neq\emptyset$, it follows that $\ell>0$.
Then $\lambda$ and $\bH$ as given in the proposition are equal to $\prod_{i=1}^{\ell} \ord(y_i)$ and $\bA\langle p_1 \rangle \ldots \langle p_{m} \rangle  \langle s_1 \rangle \ldots \langle s_n \rangle$, respectively.
We proceed the conclusion by induction on $\ell$. 
First we have $\sigma(y_1)/y_1 \in C$ by Proposition~\ref{PROP:R-monomial+primitive}. Set $\ord(y_{1})$  to be $\lambda_1$. 
According to Lemma~\ref{LM:anRmonomial+idem} (i), there are 
$u_0, u_1, \ldots, u_{\lambda_1-1} \in C \langle y \rangle$ such that $(\bE,\,\sigma)$ is a weak idempotent difference ring with respect to 
$\{u_i\mid i\in[\lambda_1-1]_0\}$.

Let $\widetilde{\bE}:=\bA\langle y_2 \rangle \ldots \langle y_{\ell} \rangle \langle p_1 \rangle \ldots \langle p_{m} \rangle  \langle s_1 \rangle \ldots \langle s_n \rangle$ and $\beta_1:=(\sigma(y_1)/y_1)^{\lambda_1-1}$.
By application of Lemma~\ref{LM:anRmonomial+idem}~(iv), there is a difference ring isomorphism
 \[\begin{array}{cccc}
\tau: & (u_0\bE,\,\sigma^{\lambda_1}) & \rightarrow & (\widetilde{\bE}, \tilde{\sigma}_0) \\
      & u_0f& \mapsto     & f|_{y_1 \rightarrow \beta_1}
\end{array}
\] 
for every $f \in \bE$, where the automorphism $\tilde{\sigma}_0$ on $\widetilde{\bE}$ is given by $$\tilde{\sigma}_0(\tilde{f}):=\sigma^{\lambda_1}(\tilde{f})|_{y_1 \rightarrow \beta_1}$$ for every $\tilde{f} \in \widetilde{\bE}$. Moreover, $\tau^{-1}(\tilde{f})=u_0\tilde{f}$ for every $\tilde{f} \in \widetilde{\bE}$. 
On the other hand, by Lemma~\ref{LM:anRmonomial+idem} (vi), $(\widetilde{\bE}, \tilde{\sigma}_0)$ is an $R\Pi\Sigma^*$-tower over $(\bA, \sigma^{\lambda_1})$.

If $\ell=1$, then the proposition holds  by Lemma~\ref{LM:anRmonomial+idem} (v). 

Assume that $\ell>1$  and the conclusion holds for $\ell-1$. 
By Remark \ref{RE:constant-stable}, $\const(\bA,\, \sigma^{\lambda_1})=C$ and $(\bA,\, \sigma^{\lambda_1})$ is a constant-stable integral domain. Note that the number of generators in $\widetilde{\bE}$ is equal to $\ell-1$ and the product of the orders of $R$-monomials in $\widetilde{\bE}$ is equal to $\lambda/\lambda_1$,  denoted by $\mu$.
So by the induction hypothesis, 
\begin{itemize}
  \item  one can compute $v_0, v_1, \ldots, v_{\mu-1} \in \widetilde{\bE}$ such that $(\widetilde{\bE},\, \tilde{\sigma}_0)$ is an idempotent difference ring with respect to $\{v_i~|~i\in [\mu-1]_0\}$;
  \item there is  $\{\beta_i\in C~|~2 \le i \le k\}$ such that $(\bH, \sigma_0)$, where for every $h\in \bH$, $\sigma_0(h):=\tilde{\sigma}_0^{\mu}(h)|_{\{y_i \rightarrow \beta_i | 2\le i \le \ell\}}$,  is a $\Pi\Sigma^*$-tower over $(\bA,\, \sigma^{\lambda})$. 
Moreover, we obtain a difference ring isomorphism
\[\begin{array}{cccc}
\tilde{\delta}: & (v_0\tilde \bE,\tilde \sigma_0^\mu) & \rightarrow & (\bH, \sigma_0) \\
      & v_0f& \mapsto     & f|_{\{y_i \rightarrow \beta_i\mid 2\le i\le \ell\}},
\end{array}
\] 
for every $f\in \widetilde{\bE}$, and $\tilde{\delta}^{-1}(h)=v_0h$ for every $h\in \bH$.
\end{itemize}

The proof will be completed by the following three claims.

\noindent{\bf Claim~1.} For every $f \in \bE$ and $i \in \bZ^{+}$, $\tilde{\sigma}_0^i(f|_{y_1 \rightarrow \beta_1})=\sigma^{\lambda_1i}(f)|_{y_1 \rightarrow \beta_1}$.

{\em Proof of Claim~1.} Write $f$ as $\sum_{i=0}^{\lambda_1-1}f_iy_1^i$, where $f_i \in \widetilde{\bE}$. Then
$$\tilde{\sigma}_0(f|_{y_1 \rightarrow \beta_1})=\tilde{\sigma}_0\left(\sum_{i=0}^{\lambda_1-1}f_i \beta_1^i\right)=\sum_{i=0}^{\lambda_1-1}\beta_1^i \sigma^{\lambda_1}(f_i)|_{y_1 \rightarrow \beta_1}$$
by $\beta_1\in C$, and
$$\sigma^{\lambda_1}(f)|_{y_1 \rightarrow \beta_1}=\sum_{i=0}^{\lambda_1-1}(\sigma^{\lambda_1}(f_i)y_1^i)|_{y_1 \rightarrow \beta_1}=\sum_{i=0}^{\lambda_1-1}\beta_1^i \sigma^{\lambda_1}(f_i)|_{y_1 \rightarrow \beta_1}$$
by $\sigma^{\lambda_1}(y_1)=y_1$. So $\tilde{\sigma}_0(f|_{y_1 \rightarrow \beta_1})=\sigma^{\lambda_1}(f)|_{y_1 \rightarrow \beta_1}$. Hence, the claim is completed by induction on $i$.
 
In the rest of this proof, set $e_0:=u_0v_0$. 

\noindent{\bf Claim~2.} For every $h \in \bH$, $\sigma_0(h)=\sigma^{\lambda}(h)|_{\{y_i \rightarrow \beta_i \mid i \in [\ell]\}}$. Furthermore, $\delta:=\tilde{\delta} \circ \tau|_{e_0\bE}$ is a difference ring isomorphism from $(e_0\bE,\,\sigma^{\lambda})$ onto $(\bH, \sigma_0)$ by sending $e_0f$ to $f|_{\{y_i \rightarrow \beta_i\mid i\in [\ell]\}}$. Moreover, $\delta^{-1}(h)=e_0h$ for every $h\in \bH$.

{\em Proof of Claim~2.} Since $h$ is free of $y_1$ and $\lambda_1\mu=\lambda$, we have 
$\tilde{\sigma}_0^{\mu}(h)=\tilde{\sigma}_0^{\mu}(h|_{y_1 \rightarrow \beta_1})=\sigma^{\lambda}(h)|_{y_1 \rightarrow \beta_1}$ by Claim 1. It follows that $\sigma_0(h)=\sigma^{\lambda}(h)|_{\{y_i \rightarrow \beta_i \mid i \in [\ell]\}}$ by the definition of $\sigma_0$. 
Set $\tilde{\tau}:=\tau|_{e_0\bE}$. Then $\tilde{\tau}$ is a ring isomorphism from $e_0\bE$ onto $v_0\widetilde{\bE}$ by mapping $e_0f$ to $(v_0 f)|_{y_1\rightarrow \beta_1}$ for every $f\in \bE$. On the other hand, $(e_0\bE,\,\sigma^{\lambda})$ and $(v_0\widetilde{\bE}, \tilde{\sigma}_0^{\mu})$ are two difference rings. Moreover,
$$\tilde{\sigma}_0^{\mu}(\tilde{\tau}(e_0f))=\tilde{\sigma}_0^{\mu}((v_0f)|_{y_1\rightarrow \beta_1})=\sigma^{\lambda}(v_0f)|_{y_1\rightarrow \beta_1}=\tau(u_0\sigma^{\lambda}(v_0f))=\tilde{\tau}(e_0\sigma^{\lambda}(f))$$
by Claim~1. So $\tilde{\tau}$ is a difference ring isomorphism from $(e_0\bE,\,\sigma^{\lambda})$ onto $(v_0\widetilde{\bE}, \tilde{\sigma}_0^{\mu})$. Hence, 
$\delta:=\tilde{\delta} \circ \tilde{\tau}$ is a difference ring isomorphism from $(e_0\bE,\,\sigma^{\lambda})$ onto $(\bH, \sigma_0)$. Note that $v_0$ is free of $y_1$. So $\delta(e_0f)=\tilde{\delta}(v_0\cdot (f|_{y_1\rightarrow \beta_1}))=f|_{\{y_i \rightarrow \beta_i\mid i\in [\ell]\}}$. Furthermore, for every $h\in \bH$, $\delta^{-1}(h)=\tilde{\tau}^{-1}(\tilde{\delta}^{-1}(h))=\tilde{\tau}^{-1}(v_0h)=e_0h$.

\noindent{\bf Claim~3.} $\bE= e_0\bE \oplus\sigma( e_0)\bE\oplus \cdots \oplus\sigma^{\lambda-1} ( e_{0})\bE.$

{\em Proof of Claim~3.} Recall that $\tau^{-1}$ is a difference ring isomorphism from $(\widetilde{\bE},\, \tilde{\sigma}_0)$ onto $(u_0\bE,\, \sigma^{\lambda_1})$ given by $\tau^{-1}(f)=u_0f$ for every $f\in \widetilde{\bE}$. Then, for every $j \in [\mu-1]_0$, 
 $$\tau^{-1}(\tilde{\sigma}_0^j (v_0))=\tau^{-1}(\sigma^{\lambda_1j}(v_0)|_{y_1 \rightarrow \beta_1})=u_0(\sigma^{\lambda_1j}(v_0)|_{y_1 \rightarrow \beta_1})=u_0\sigma^{\lambda_1j}(v_0)=\sigma^{\lambda_1j}(e_0).$$
  The first and third equalities are due to Claim 1 and Lemma~\ref{LM:anRmonomial+idem}~(ii), respectively. Furthermore, it follows from $u_0^2=u_0$ and $\sigma^{\lambda_1}(u_0)=u_0$ that $\tau^{-1}(\tilde{\sigma}_0^j (v_0)\widetilde{\bE})=\sigma^{\lambda_1j}(e_0)u_0\bE=\sigma^{\lambda_1j}(u_0e_0)\bE=\sigma^{\lambda_1j}(e_0)\bE$. By Lemma~\ref{LM:idemp+equivdef}, we have $\tilde \bE=v_0\tilde \bE\oplus \tilde\sigma_0(v_0)\tilde \bE\oplus\cdots \oplus  \tilde\sigma_0^{\mu-1}(v_0)\tilde \bE$. Applying $\tau^{-1}$ to the above direct sum yields 

 $$u_0\bE=e_0\bE \oplus \sigma^{\lambda_1}(e_0)\bE \oplus \cdots \oplus \sigma^{\lambda_1(\mu-1)}(e_0)\bE.$$
Recall that $\sigma(u_i)=u_{i+1  \! \bmod \! \lambda_1}$. Then applying $\sigma, \sigma^2, \ldots, \sigma^{\lambda_1-1}$ to the above equation  successively leads to
\[
  \left\{
\begin{aligned}
u_0\bE& =e_0\bE \oplus \sigma^{\lambda_1}(e_0)\bE \oplus \cdots \oplus \sigma^{\lambda_1(\mu-1)}(e_0)\bE\\
u_1\bE  &= \sigma(e_0)\bE \oplus \sigma^{\lambda_1+1}(e_0)\bE\oplus \cdots \oplus \sigma^{\lambda_1(\mu-1)+1}(e_0)\bE \\
\vdots \\
  u_i\bE   &= \sigma^i(e_0)\bE \oplus \sigma^{\lambda_1+i}(e_0)\bE\oplus \cdots \oplus \sigma^{\lambda_1(\mu-1)+i}(e_0)\bE \\
   \vdots \\
   u_{\lambda_1-1}\bE  &= \sigma^{\lambda_1-1}(e_0)\bE \oplus \sigma^{2\lambda_1-1}(e_0)\bE\oplus \cdots \oplus \sigma^{\lambda_1\mu-1}(e_0)\bE.
\end{aligned}
\right.
\]
Recall that $\lambda=\lambda_1\mu$. Arranging the right-hand side of the above system in an $S$-shaped order, we obtain
 $$\bE=u_0\bE \oplus u_1\bE\oplus \cdots \oplus u_{\lambda_1-1}\bE=\directsumideals_{i=0}^{\lambda-1} \sigma^i(e_0)\bE,$$
 which proves the third claim.\\
Finally, we are ready to show that the conclusion holds for the case $\ell$. The second assertion follows from Claim 2 and the induction hypothesis. For every $i \in [\lambda-1]$, let $e_i:=\sigma^{i}(e_0)$. Then $\bE$ is a weak idempotent difference ring with respect to $\{e_i~|~i \in [\lambda-1]_0\}$ by Lemma~\ref{LM:idemp+equivdef} and Claim 3. Moreover, we note that $\bH$ is an integral domain, so is $e_0\bE$ by the second assertion. Hence, the first assertion holds.
 \end{proof}

The constructive proof leads to the following algorithm.
 
 \begin{alg}\label{ALG:IdemDecomp}
 {\tt IdempotentDecompositionsFor$R\Pi\Sigma^*$-Towers}$(\bE,\, \sigma)$
 
  \noindent
 {\tt Input:} 
An $R\Pi\Sigma^*$-tower $(\bE,\, \sigma)$ over a computable, constant-stable, integral difference ring $(\bA,\,\sigma)$ with constant field $C$, of
the form
$\bE =
\bA\langle y_1\rangle\ldots\langle y_\ell\rangle
\langle p_1\rangle\ldots\langle p_m\rangle
\langle s_1\rangle\ldots\langle s_n\rangle,
$
where $\ell>0$, and the $y_i$, $p_i$, and $s_i$ are $R$-, $\Pi$-, and $\Sigma^*$-monomials, respectively.

 \noindent
 {\tt Output:} $\{(e_0, \lambda), (\bH,\, \sigma_0), \delta\}$ such that
 $\bE=\directsumideals_{i=0}^{\lambda-1} \sigma^i(e_0)\bE$ with the idempotent $e_0 \in \bE$ and $\lambda\in \bZ^{+}$; $(\bH,\, \sigma_0)$ is a $\Pi\Sigma^*$-tower over $(\bA,\, \sigma^{\lambda})$ with constant field $C$  and $\delta$ is a difference ring isomorphism from $(e_0\bE,\,\sigma^{\lambda})$ onto  $(\bH,\, \sigma_0)$.
 
\begin{enumerate}
	\item[(1)] Set $a_1:=\frac{\sigma(y_1)}{y_1}$, $\lambda_1:=\ord(y_1)$ and $u_0:=\prod_{j=0}^{\lambda_1-2}\frac{y_1-a_1^j}{a_1^{\lambda_1-1}-a_1^j}$.
	
	\item[(2)] Set $\beta_1:=a_1^{\lambda_1-1}$ and $\widetilde{\bE}:=C(x)\langle y_2 \rangle \ldots \langle y_{\ell} \rangle \langle p_1 \rangle \ldots \langle p_{m} \rangle  \langle s_1 \rangle \ldots \langle s_{n} \rangle$ together with $\tilde{\sigma}_0(f):=\sigma^{\lambda_1}(f)|_{y_1 \rightarrow \beta_1}$ for  $f\in \widetilde{\bE}$, and  
\[\begin{array}{cccc}
		\tau: & (u_0\bE,\,\sigma^{\lambda_1}) & \rightarrow & (\widetilde{\bE}, \,\tilde{\sigma}_0) \\
		& u_0f& \mapsto     & f|_{y_1 \rightarrow \beta_1}.
	\end{array}
	\] 
	\item [(3)] $(^*${\sl Base case}$^*)$ If $\ell=1$ then return $\{(u_0, \lambda_1), (\widetilde{\bE}, \tilde{\sigma}_0), \tau\}$. 	
	\item [(4)] $(^*${\sl Recursion}$^*)$ Note that the number of $R$-monomials in $\widetilde{\bE}$ is $\ell-1$. 
	So using the algorithm, the output for $(\widetilde{\bE}, \tilde{\sigma}_0)$ is $\{(v_0, \mu), (\bH, \sigma_0), \tilde{\delta}\}$.
	\item [(5)] Return $\{(u_0v_0, \lambda_1\mu), (\bH, \sigma_0),\tilde{\delta} \circ \tau|_{u_0v_0\bE}\}$.
\end{enumerate}
 \end{alg}

\begin{example}\label{EX:IdemDecomp}
Let $C=\bQ(\sqrt{-1})$ and $(C(x),\,\sigma)$ be the rational difference field with $\sigma(x)=x+1$. Consider the $R$-tower $(\bE,\,\sigma)$ over $(C(x),\,\sigma)$ with
 $\bE=C(x)\langle y_1 \rangle \langle y_2 \rangle$ of $C(x)$ and  $\sigma(y_1)=-y_1$ and $\sigma(y_2)=(\frac{1-y_1}{2}+\frac{1+y_1}{2} \sqrt{-1})y_2$ ; compare Example~\ref{EX:CheckRmonomials}. We will compute an idempotent representation of $(\bE,\, \sigma)$ by using Algorithm~\ref{ALG:IdemDecomp}. 

In step 1, $a_1=-1$, $\lambda_1=2$, $u_0=\frac{1-y_1}{2}$. Further, we get the difference ring isomorphism 
\[\begin{array}{cccc}
\tau: & (u_0\bE,\,\sigma^{\lambda_1}) & \rightarrow & (C(x)\langle y_2 \rangle, \tilde{\sigma}_0) \\
      & u_0f& \mapsto     & f|_{y_1 \rightarrow -1}
\end{array}
\] 
 with $\tilde{\sigma}_0(f)=\sigma^2(f)|_{y_1\rightarrow -1}$ for every $f \in C(x)\langle y_2 \rangle$ by Example~\ref{EX:CheckRmonomials} again. In particular, $\tilde{\sigma}_0(y_2)= \sqrt{-1}y_2$. Next, we go directly to step 4. By recursion, $\mu=4$,
 $$v_0=\frac{y_2^3-\sqrt{-1}y_2^2-y_2+\sqrt{-1}}{4\sqrt{-1}},$$
 and there is a difference ring isomorphism
 \[\begin{array}{cccc}
\tilde{\delta}: & (v_0C(x)\langle y_2 \rangle, \sigma^8) & \rightarrow & (C(x), \sigma^8) \\
      & v_0f& \mapsto     & f|_{y_2 \rightarrow -\sqrt{-1}}.
\end{array}
\] 
Therefore, $\directsumideals_{i=0}^{7}\sigma^i(u_0v_0)\bE$ is an idempotent representation of $(\bE,\, \sigma)$. Furthermore, $(u_0v_0\bE,\, \sigma)$ and $(C(x), \sigma^8)$ are isomorphic by sending $u_0v_0f$ to $f|_{\{y_1 \rightarrow -1, y_2 \rightarrow -\sqrt{-1}\}}$ for every $f\in \bE$.
\end{example}

\subsection{Complete reductions for $R\Pi\Sigma^*$-towers}\label{SUBSECT:crforgeneralcase}

First we show the following three lemmas. It sets the stage for constructing a complete reduction for an $R\Pi\Sigma^*$-tower.

\begin{lemma}\label{LM:projecttothefirstcompotent}
Let $(\bA,\, \sigma)$ be an idempotent difference ring with respect to $\{e_i~|~i\in[\lambda-1]_0\}$ and $\lambda>1$.  Then, for every $f\in \bA$, 
  $f=\Delta(v)+e_0\sum_{i=0}^{\lambda-1}\sigma^{-i}(f)$
for $v=\sum_{i=1}^{\lambda-1}\sum_{j=1}^i e_{i-j} \sigma^{-j}(f)$.
\end{lemma}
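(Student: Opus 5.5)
This is a direct computation. The plan is to expand $\Delta(v)=\sigma(v)-v$ and show that it equals $f-e_0\sum_{i=0}^{\lambda-1}\sigma^{-i}(f)$. Two facts will be used throughout. First, $\sigma(e_k)=e_{k+1 \bmod \lambda}$. Second, $f=\sum_{k=0}^{\lambda-1}e_kf$, because the $e_k$ sum to $1$. Only the weak idempotent structure is needed, not integrality.

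First, reindex $v$ by setting $k=i-j$. For fixed $j\in[\lambda-1]$ the index $i$ runs from $j$ to $\lambda-1$, so $k$ runs over $[\lambda-1-j]_0$. This gives
\[
v=\sum_{j=1}^{\lambda-1}\Bigl(\sum_{k=0}^{\lambda-1-j}e_k\Bigr)\sigma^{-j}(f).
\]
Applying $\sigma$ sends $e_k$ to $e_{k+1}$ (no wrap-around occurs, since $k+1\le\lambda-j\le\lambda-1$) and sends $\sigma^{-j}(f)$ to $\sigma^{-(j-1)}(f)$. Shifting $j\mapsto j+1$, we get
\[
\sigma(v)=\sum_{j=0}^{\lambda-2}\Bigl(\sum_{k=1}^{\lambda-1-j}e_k\Bigr)\sigma^{-j}(f).
\]

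Next, subtract $v$ coefficientwise in $\sigma^{-j}(f)$.
\begin{itemize}
\item For $j=0$, only $\sigma(v)$ contributes, giving $(e_1+\cdots+e_{\lambda-1})f=(1-e_0)f$.
\item For $1\le j\le\lambda-2$, the bracket from $\sigma(v)$ minus the bracket from $v$ is $\sum_{k=1}^{\lambda-1-j}e_k-\sum_{k=0}^{\lambda-1-j}e_k=-e_0$.
\item For $j=\lambda-1$, only $-v$ contributes, with coefficient $-e_0$.
\end{itemize}
Hence
\[
\Delta(v)=f-e_0\sum_{j=0}^{\lambda-1}\sigma^{-j}(f),
\]
which is the claimed identity. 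The case $\lambda=2$ fits the same pattern, with the middle range empty.

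The only delicate step is the index bookkeeping: checking that no $e_{\lambda-1}$ is shifted past the wrap-around under $\sigma$, and that the boundary terms $j=0$ and $j=\lambda-1$ are handled correctly. Everything else is telescoping.
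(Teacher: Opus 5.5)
Your computation is correct and is essentially the paper's argument. Both verify the identity by direct telescoping with the same $v$. The only difference is bookkeeping: the paper splits $f=\sum_{i}e_if$ and checks $e_if=\Delta(u_i)+e_0\sigma^{-i}(f)$ separately for each $u_i=\sum_{j=1}^{i}e_{i-j}\sigma^{-j}(f)$, whereas you swap the order of summation and collect the coefficients of $\sigma^{-j}(f)$ globally. Your $j=0$ boundary term $(1-e_0)f$ also makes explicit the $e_0f$ contribution that the paper's last displayed line drops; that line sums from $i=1$ where it should sum from $i=0$.
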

\begin{proof} 
First we have $f=\sum_{i=0}^{\lambda-1}e_if$ by \eqref{EQ:idempotentrepresentation}. For $i\in [\lambda-1]$, set $u_i$ to be $\sum_{j=1}^i e_{i-j} \sigma^{-j}(f)$. It follows from telescoping that $e_if=\Delta(u_i)+e_0\sigma^{-i}(f)$. So $f=\Delta(\sum_{i=1}^{\lambda-1}u_i)+e_0\sum_{i=1}^{\lambda-1}\sigma^{-i}(f).$ 
\end{proof}

\begin{lemma}\label{LM:lambdatozero}
Let $(\bA,\, \sigma)$ be a difference ring, $\lambda > 1$, and $f\in \bA$. If $f=\Delta^{(\lambda)}(g)$ for some $g \in \bA$, then there is $\tilde{g}=\sum_{i=0}^{\lambda-1}\sigma^{i}(g)$ such that $f=\Delta(\tilde{g})$. In particular, $ \Delta^{(\lambda)}(\bA) \subset \Delta(\bA)$.
\end{lemma}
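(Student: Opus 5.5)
The identity in question is the telescoping of $\sigma^\lambda(g)-g$ through the intermediate shifts. Here $\Delta^{(\lambda)}$ denotes $\Delta^{(\lambda)}_1$ from Definition~\ref{DEF:Generalizedkarrequation}, that is, $\Delta^{(\lambda)}(g)=\sigma^{\lambda}(g)-g$.

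Given $f=\sigma^\lambda(g)-g$, set $\tilde g:=\sum_{i=0}^{\lambda-1}\sigma^i(g)$. Since $\sigma$ is additive, we have $\sigma(\tilde g)=\sum_{i=1}^{\lambda}\sigma^i(g)$. Subtracting $\tilde g$, all terms with $1\le i\le\lambda-1$ cancel, which leaves
\[
\Delta(\tilde g)=\sigma(\tilde g)-\tilde g=\sigma^{\lambda}(g)-g=\Delta^{(\lambda)}(g)=f.
\]

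For the inclusion $\Delta^{(\lambda)}(\bA)\subset\Delta(\bA)$, observe that every element of $\Delta^{(\lambda)}(\bA)$ has the form $\Delta^{(\lambda)}(g)$ for some $g\in\bA$. The first part then exhibits it as $\Delta(\tilde g)$ with $\tilde g\in\bA$.

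There is no real obstacle here. The only points to check are that $\tilde g$ lies in $\bA$, which holds because $\sigma$ is an automorphism of $\bA$, and that the indices of the telescoping sum are shifted correctly. The hypothesis $\lambda>1$ is not needed, since for $\lambda=1$ the statement is trivial.
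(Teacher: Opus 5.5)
Your proof is correct and is the same argument as the paper's: the paper simply states that the lemma is an immediate consequence of telescoping cancellation, and you carry out exactly that cancellation $\sigma(\tilde g)-\tilde g=\sigma^{\lambda}(g)-g$. Your remark that the hypothesis $\lambda>1$ is not actually needed is also accurate.
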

\begin{proof} 
This is an immediate consequence of telescoping cancellation.
\end{proof}

We reduce the problem of constructing a complete reduction of an idempotent difference ring to constructing it on one of its idempotent components.

\begin{lemma}\label{LM:convert}
Let $(\bA,\, \sigma)$ be an idempotent difference ring with respect to $\{e_i~|~i\in[\lambda-1]_0\}$ and $\lambda>1$. Assume that the set $C$ of constants in $\bA$ is a field. Define the map
\[
\begin{array}{cccc}
\psi:& \bA & \rightarrow & e_0\bA \\
      & f & \mapsto     & e_0\sum_{i=0}^{\lambda-1}\sigma^{-i}(f)
\end{array}
\]
and let $\phi$ be a complete reduction for $(e_0\bA, \, \Delta^{(\lambda)})$ that is $e_0C$-linear. 
Then
\begin{itemize}
\item [(i)]  $\psi$ is $C$-linear with $\psi(e_jf)=e_0\sigma^{-j}(f)$ for every $f \in \bA$ and $j\in[\lambda-1]_0$.
 \item [(ii)] $\phi \circ \psi$ is a complete reduction for $(\bA, \, \Delta)$ as a $C$-linear map.
 \item [(iii)] Let $f\in \bA$. Then $(\sum_{i=1}^{\lambda-1}\sum_{j=1}^i e_{i-j} \sigma^{-j}(f)+\sum_{i=0}^{\lambda-1}\sigma^{i}(g), r)$ is a $\Sigma$-pair with respect to $\phi \circ \psi$ if $(g, r)$ is a $\Sigma$-pair of $\psi(f)$ with respect to $\phi$.
 \item[(iv)] Let $f\in\bA$. Then $f\in \Delta(\bA)$ if and only if $\psi(f) \in \Delta^{(\lambda)}(e_0\bA)$. In particular, if there is  $\tilde{g} \in e_0\bA$ such that $\psi(f)=\Delta^{(\lambda)}(\tilde{g})$, then $f=\Delta(g)$ with $g=\sum_{i=1}^{\lambda-1}\sum_{j=1}^i e_{i-j} \sigma^{-j}(f)+\sum_{i=0}^{\lambda-1}\sigma^{i}(\tilde{g})$.
\end{itemize} 
\end{lemma}

\begin{proof} 
(i) Let $f\in \bA$ and $i,j\in [\lambda-1]$. Then $\psi$ is $C$-linear by $\sigma^{-i}(cf)=c\sigma^{-i}(f)$ for every $c\in C$. Consider $e_0 \sigma^{-i}(e_j)$. It is equal to $e_0$ if $i=j$, and 0 otherwise. So $\psi(e_jf)=e_0\sum_{i=0}^{\lambda-1}\sigma^{-i}(e_jf)=e_0\sigma^{-j}(e_jf)=e_0\sigma^{-j}(f)$.

(ii) Since $C$ is a field and $e_0$ is the multiplicative identity of $e_0C$, $e_0C$ is a subfield of $e_0\bA$. Moreover, $e_0C \subset \const(e_0\bA,\, \sigma^{\lambda})$. So $\phi$ is well-defined. 
Let $f\in \bA$ and $c\in C$. 
Then $$\phi(\psi(cf))=\phi(c\psi(f))=\phi(ce_0\psi(f))=ce_0(\phi(\psi(f)))=c\phi(\psi(f)).$$
The first and third equalities follow from the fact that $\psi$ and $\phi$ are $C$-linear and $e_0C$-linear, respectively. 
The second and fourth equalities hold because $\im(\psi) \subset e_0\bA$ and $r=e_0r$ for all $r \in e_0\bA$,  as noted in Remark \ref{RE:idemp+unit}~(i).
Hence, $\phi \circ \psi$ is $C$-linear. Next, we are going to show that $\phi \circ \psi$ is a complete reduction for $(\bA, \, \Delta)$. 
Since $\phi(\psi(f)) \in e_0\bA$, we have $\psi(\phi(\psi(f)))=\phi(\psi(f))$ by (i), which, together with $\phi^2=\phi$, implies that $\phi \circ \psi \circ \phi \circ \psi(f)=\phi \circ \psi(f).$ So $\phi \circ \psi$ is an idempotent map. 

By Remark \ref{RE:cr}, it remains to show $\Delta(\bA)=\ker(\phi \circ \psi)$. Assume that $f\in \Delta(\bA)$. Then there is $g\in \bA$ such that $f=\Delta(g)$. According to \eqref{EQ:idempotentrepresentation}, we expand $\Delta(g)$ as $\sum_{i=0}^{\lambda-1}e_i\Delta(g)$. Then, by (i) again,
$\psi(e_i\Delta(g))=e_0\sigma^{-i}(\sigma(g)-g)=e_0(\sigma^{1-i}(g)-\sigma^{-i}(g)).$ This yields
$$\psi(f)=e_0\sum_{i=0}^{\lambda-1}(\sigma^{1-i}(g)-\sigma^{-i}(g))=e_0(\sigma(g)-\sigma^{1-\lambda}(g))=\sigma^{\lambda}(\tilde{g})-\tilde{g} $$
for $\tilde{g}=e_0\sigma^{1-\lambda}(g)$. So $\psi(f)\in \Delta^{(\lambda)}(e_0\bA)$. Hence, $f \in \ker(\phi \circ \psi)$ by $\ker(\phi)=\Delta^{(\lambda)}(e_0\bA)$. Conversely, let $f \in \ker(\phi \circ \psi)$. Then $\psi(f) \in \Delta^{(\lambda)}(\bA)$ by $e_0\bA \subset \bA$ and $\ker(\phi)=\Delta^{(\lambda)}(e_0\bA)$ again.  Lemma~\ref{LM:lambdatozero} ensures that $\psi(f) \in \Delta(\bA)$. Consequently, $f \in \Delta(\bA)$ by Lemma~\ref{LM:projecttothefirstcompotent}.

(iii) Set $v$ to be $\sum_{i=1}^{\lambda-1}\sum_{j=1}^i e_{i-j} \sigma^{-j}(f)$.  By Lemma~\ref{LM:projecttothefirstcompotent}, $f=\Delta(v)+\psi(f)$.  Let $(g, r)$ be a $\Sigma$-pair of $\psi(f)$ with respect to $\phi$.
Then $f=\Delta(v)+\Delta^{(\lambda)}(g)+r=\Delta(v+\sum_{i=0}^{\lambda-1}\sigma^{i}(g))+r$ by Lemma~\ref{LM:lambdatozero}. Note that $r=\phi \circ \psi(f) \in \im(\phi \circ \psi)$. So $(v+\sum_{i=0}^{\lambda-1}\sigma^{i}(g), r)$ is a $\Sigma$-pair of $f$ with respect to  $\phi \circ \psi$.

(iv) By (ii), $\phi \circ \psi$ is a complete reduction for $(\bA, \, \Delta)$. So $\Delta(\bA)=\ker(\phi \circ \psi)$, which implies that $f\in \Delta(\bA)$ if and only if $f\in \ker(\phi \circ \psi)$, i.e., $\psi(f) \in \ker(\phi)$. Moreover, $\ker(\phi)=\Delta^{(\lambda)}(e_0\bA)$ by the definition of $\phi$.
It follows that  $\psi(f)\in \ker(\phi)$ if and only if $\psi(f)\in \Delta^{(\lambda)}(e_0\bA)$. Thus, $f\in \Delta(\bA)$ if and only if $\psi(f) \in \Delta^{(\lambda)}(e_0\bA)$. The remaining conclusion follows directly from (iii).
\end{proof}

We are now ready to state the main result in this section.

\begin{thm}\label{THM:CR+RPiSigma}
	Let $(\bA,\,\sigma)$ be a computable difference ring which is constant-stable and an integral domain, and  let $(\bE,\,\sigma)$ be an $R\Pi\Sigma^*$-tower over $(\bA, \, \sigma)$. Assume that the set of constants $C$ forms a field. Then the following assertions hold.
	\begin{itemize}
    \item [(i)] If one can solve the telescoping problem for every $ i\in\bZ^{+}$ and every $\Pi\Sigma^*$-tower over $(\bA,\, \sigma^{i})$,
    then it can also be  solved for $(\bE,\,\sigma)$.
    \item [(ii)] If  
    $\bA$ has an effective $C$-basis, and there exists an algorithm which, for every $i\in \bZ^{+}$ and $f \in \bA^*$, constructs a complete reduction $\phi_f^{(i)}$ for $(\bA,\, \Delta^{(i)}_{f})$. Then there is an algorithm for constructing a complete reduction $\omega$ for $(\bE, \, \Delta)$. 
    \end{itemize}
\end{thm}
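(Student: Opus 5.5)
We split according to whether the tower contains $R$-monomials and reduce everything to the idempotent component $e_0\bE$. By Proposition~\ref{PROP:orderRPiSigma} we reorder $\bE$ as in \eqref{EQ:reoder+RPiSigma}. If there are no $R$-monomials, $(\bE,\sigma)$ is itself a $\Pi\Sigma^*$-tower over $(\bA,\sigma)$. Then (i) is the hypothesis with $i=1$. For (ii), Remark~\ref{RE:SimplePiSigma=PiSigma} shows the tower is simple, so Theorem~\ref{THM:SimpleRPiSigmaExt} applies directly.

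Now assume there are $R$-monomials. Proposition~\ref{PROP:RPiSigmaIsIdempotent} (made constructive by Algorithm~\ref{ALG:IdemDecomp}) gives the following data:
\begin{itemize}
\item computable idempotents $e_0,\dots,e_{\lambda-1}$ making $(\bE,\sigma)$ an idempotent difference ring;
\item a $\Pi\Sigma^*$-tower $(\bH,\sigma_0)$ over $(\bA,\sigma^\lambda)$;
\item a difference ring isomorphism $\delta:(e_0\bE,\sigma^\lambda)\to(\bH,\sigma_0)$ with $\delta^{-1}(h)=e_0h$.
\end{itemize}
By Remark~\ref{RE:constant-stable}, $(\bA,\sigma^\lambda)$ is a constant-stable integral domain with constants $C$. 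Lemma~\ref{Lemma:ConstantStable} and Lemma~\ref{LM:anRmonomial+idem}(vi) then give $\const(\bH,\sigma_0)=C$.

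For (i), take $f\in\bE$ and compute $\psi(f)=e_0\sum_{i=0}^{\lambda-1}\sigma^{-i}(f)$. By Lemma~\ref{LM:convert}(iv), $f\in\Delta(\bE)$ iff $\psi(f)\in\Delta^{(\lambda)}(e_0\bE)$. Transport this through $\delta$: since $\delta\circ\sigma^\lambda=\sigma_0\circ\delta$, we have $\delta\circ\Delta^{(\lambda)}=(\sigma_0-\mathbf 1)\circ\delta$. So the question becomes telescoping of $\delta(\psi(f))$ in the $\Pi\Sigma^*$-tower $(\bH,\sigma_0)$ over $(\bA,\sigma^\lambda)$, which is decidable by hypothesis. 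If $\delta(\psi(f))=\sigma_0(h)-h$, then $\tilde g=e_0h$ satisfies $\psi(f)=\Delta^{(\lambda)}(\tilde g)$. Lemma~\ref{LM:convert}(iv) then yields an explicit $g$ with $f=\Delta(g)$.

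For (ii), we need a complete reduction for $(\bH,\sigma_0-\mathbf 1)$. Note that $(\bH,\sigma_0)$ is a $\Pi\Sigma^*$-tower over $(\bA,\sigma^\lambda)$, hence simple by Remark~\ref{RE:SimplePiSigma=PiSigma}. So Theorem~\ref{THM:SimpleRPiSigmaExt} applies with ground ring $(\bA,\sigma^\lambda)$, provided it has the following two inputs.
\begin{itemize}
\item An effective $C$-basis: the given one works.
\item For every $f\in\bA^*$ and $i\in\bZ^+$, complete reductions for $(\bA,\, f\sigma^{\lambda i}-\mathbf 1)$: these are exactly $\phi_f^{(\lambda i)}$, supplied by hypothesis.
\end{itemize}
This yields a complete reduction $\rho$ for $(\bH,\sigma_0-\mathbf 1)$.

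Next, Lemma~\ref{LM:diffiso+CR} with $h=1$ and $\tau=\delta$ shows that $\phi:=\delta^{-1}\circ\rho\circ\delta$ is a complete reduction for $(e_0\bE,\Delta^{(\lambda)})$. The lemma is stated for a $\tau$ from the tilde ring onto $\bA$, so we apply it with $(\tilde\bA,\tilde\sigma)=(e_0\bE,\sigma^\lambda)$; its constants form the field $e_0C$. Lemma~\ref{LM:convert}(ii),(iii) then gives $\omega:=\phi\circ\psi$ as a complete reduction for $(\bE,\Delta)$, with explicitly computable $\Sigma$-pairs.

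The main obstacle is the linearity bookkeeping. Lemma~\ref{LM:convert} requires $\phi$ to be $e_0C$-linear, whereas $\rho$ is only $C$-linear on $\bH$. We must check that $\delta(e_0c)=c$ for $c\in C$ (indeed $c$ is free of the $t_i$). From this, $\phi(e_0c\,u)=e_0c\,\phi(u)$ follows. We must also check that $\const(e_0\bE,\sigma^\lambda)=e_0C$, so that Lemma~\ref{LM:diffiso+CR} is applicable.
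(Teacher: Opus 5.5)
Your proposal is correct and follows essentially the same route as the paper. You split on the presence of $R$-monomials, reduce via the idempotent representation of Proposition~\ref{PROP:RPiSigmaIsIdempotent} and Lemma~\ref{LM:convert} to the $\Pi\Sigma^*$-tower $(\bH,\sigma_0)$, transport through $\delta$ with Lemma~\ref{LM:diffiso+CR}, and invoke Theorem~\ref{THM:SimpleRPiSigmaExt} over $(\bA,\sigma^\lambda)$. Your two extra checks, that the required ground-ring reductions are exactly $\phi_f^{(\lambda i)}$ and that $\delta^{-1}\circ\rho\circ\delta$ is $e_0C$-linear, fill in details the paper leaves implicit.
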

\begin{proof}
 If $\bE$ is free of $R$-monomials, then (i) is trivial and (ii) follows by Theorem~\ref{THM:SimpleRPiSigmaExt}. 
Otherwise,  
let $\bE=\bA\langle t_1 \rangle \langle t_2 \rangle \ldots \langle t_k \rangle$. Define 
$$\bI:=\{1\le i \le k~|~t_i~\text{is an $R$-monomial}\}\neq\emptyset$$ 
and $\lambda:=\prod_{i\in \bI} \ord(t_i)$. Moreover, set $\bH:=\bA\langle t_i\mid i\in [k]\setminus \bI\rangle$.
By Proposition~\ref{PROP:RPiSigmaIsIdempotent}~(i) (for details see Algorithm~\ref{ALG:IdemDecomp}), one can compute $e_0,\ldots,e_{\lambda-1} \in \bE$ such that $(\bE,\,\sigma)$ is an idempotent difference ring  with respect to $\{e_i~|~i\in[\lambda-1]_0\}$. And
there is $\{\beta_i\in C~|~i\in \bI \}$ such that $(\bH,\, \sigma_0)$ with $\sigma_0(h)=\sigma^{\lambda}(h)|_{\{t_i \rightarrow \beta_i|i \in \bI\}}$ for every $h \in \bH$ is a $\Pi\Sigma^*$-tower over $(\bA,\, \sigma^{\lambda})$. Moreover, we get a difference ring isomorphism
\[
\begin{array}{cccc}
	\delta: & (e_0\bE,\,\sigma^\lambda) & \rightarrow & (\bH, \sigma_0) \\
	& e_0f& \mapsto     & f|_{\{t_i \rightarrow \beta_i\mid i\in\bI\}}
\end{array}
\]
for every $f \in \bE$ and $\delta^{-1}(h)=e_0h$ for every $h \in \bH$. Note that $\const(\bH,\, \sigma_0)=C$. 
Note further that a complete reduction for $(e_0\bA, \,\Delta^{(\lambda)})$ that is $e_0C$-linear always exists, as assumed in Lemma~\ref{LM:convert};
 we will use the existence in statement (i), but rely on an explicit construction in statement (ii).

Let $\psi$ be a $C$-linear function from $\bE$ to $e_0\bE$ by sending $f$ to $e_0\sum_{i=0}^{\lambda-1}\sigma^{-i}(f)$. And set $\rho:=\sigma_0-\bf{1}$. 

(i) Let $f\in\bE$. Then $f\in \Delta(\bE)$ if and only if $\psi(f) \in \Delta^{(\lambda)}(e_0\bE)$ by Lemma~\ref{LM:convert} (iv). Moreover, $\psi(f) \in \Delta^{(\lambda)}(e_0\bE)$ if and only $\delta(\psi(f)) \in \rho(\bH)$. By the assumption, the latter statement can be decided.  
If there is $g' \in \bH$ such that $\rho(g')=\delta(\psi(f))$, then $\delta^{-1} \circ \sigma^{\lambda}=\sigma_0 \circ \delta$ implies that $\psi(f)=\Delta^{(\lambda)}(\delta^{-1}(g'))$. Thus, $f=\Delta(g)$ with $g=\sum_{i=1}^{\lambda-1}\sum_{j=1}^i e_{i-j} \sigma^{-j}(f)+\sum_{i=0}^{\lambda-1}\sigma^{i}(\delta^{-1}(g'))$ by Lemma~\ref{LM:convert} (iv) again.

(ii) Under the assumption of statement (ii), there is an algorithm for constructing a complete reduction $\Phi$ as a $C$-linear map for $(\bH,\, \rho)$ by Theorem~\ref{THM:SimpleRPiSigmaExt}. 
By Lemma~\ref{LM:diffiso+CR}, $\phi:=\delta^{-1} \circ \Phi \circ \delta$ is a complete reduction for $(e_0\bE,\, \Delta^{(\lambda)})$.
Moreover, $\omega:=\phi\circ \psi$ is a complete reduction for $(\bE,\,\Delta)$ by Lemma~\ref{LM:convert}~(ii). It remains to describe how to compute $\Sigma$-pairs with respect to $\omega$. Let $f\in\bE$ and denote $\psi(f)$ by $f'$. Then by assumption, we can compute a $\Sigma$-pair $(g',r')$ of $\delta(f')$ with respect to $\Phi$. 
 Then $(g, r):=(\delta^{-1}(g'),\delta^{-1}(r'))$ is a $\Sigma$-pair of $f'$ with respect to $\phi$ by Lemma~\ref{LM:diffiso+CR}. Finally, $(\sum_{i=1}^{\lambda-1}\sum_{j=1}^i e_{i-j} \sigma^{-j}(f)+\sum_{i=0}^{\lambda-1}e_i\sigma^{i}(g), r)$ is a $\Sigma$-pair of $f$ with respect to $\omega$ by Lemma~\ref{LM:convert}~(iii). 
\end{proof}

Combining Theorem~\ref{THM:CR+RPiSigma}~(i) with Theorem~\ref{Thm:PiSiOverSimpleRPiSiDESolver} leads to the following general telescoping solver for general $R\Pi\Sigma^*$-towers.

\begin{cor}\label{Cor:PiSiOverRPiSiTelescoper}
	Let $(\bE,\,\sigma)$ be an $R\Pi\Sigma^*$-tower over a $\Pi\Sigma^*$-field whose constant field $C$ is orbit-computable. Then, for every $g\in\bE$, one can decide algorithmically if there is $q\in\bE$ with $\Delta(q)=g$.
\end{cor}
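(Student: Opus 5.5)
We will obtain the corollary as the special case $(\bA,\sigma)=(\bF,\sigma)$ of Theorem~\ref{THM:CR+RPiSigma}~(i). The task is to verify that theorem's hypotheses for a $\Pi\Sigma^*$-field over an orbit-computable constant field.

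\emph{Ground ring hypotheses.} We check the conditions on $(\bF,\sigma)$ first. Since $C$ is computable and $\bF=C(t_1)\dots(t_e)$ is a tower of rational function field extensions with explicitly computable $\sigma$, the difference field $(\bF,\sigma)$ is computable. Being a field, it is an integral domain, and its constants form the field $C$. Constant-stability of $\Pi\Sigma^*$-fields was recalled after Definition~\ref{DEF:stable}, citing \cite[Proposition~32]{Schn2020}.

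\emph{Telescoping over $\Pi\Sigma^*$-towers of $(\bF,\sigma^i)$.} Next we fix $i\in\bZ^+$ and a $\Pi\Sigma^*$-tower $(\bH,\sigma_0)$ over $(\bF,\sigma^i)$, and we must decide telescoping in it. We plan to observe that $(\bF,\sigma^i)$ is again a $\Pi\Sigma^*$-field over $C$.
\begin{itemize}
\item Constant-stability gives $\const(\bF,\sigma^i)=C$, and the same holds for every intermediate field $C(t_1)\dots(t_j)$.
\item Each generator keeps its type under $\sigma^i$, by Lemma~\ref{LM:RPiSigma+term}. A $\Pi$-field monomial satisfies $\sigma^i(t_j)/t_j=a^{\sigma,i}\in C(t_1)\dots(t_{j-1})^*$. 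A $\Sigma^*$-field monomial satisfies $\sigma^i(t_j)-t_j=\sum_{k<i}\sigma^k(a)\in C(t_1)\dots(t_{j-1})$.
\end{itemize}
Then $(\bH,\sigma_0)$ is a $\Pi\Sigma^*$-tower over a $\Pi\Sigma^*$-field, which is simple by Remark~\ref{RE:SimplePiSigma=PiSigma}. Theorem~\ref{Thm:PiSiOverSimpleRPiSiDESolver}, applied with $f=1\in(\bF^*)_{\bF}^{\bH}$ and shift exponent $1$, decides whether $g=\Delta(q)$ has a solution $q\in\bH$. This is exactly the telescoping problem in $(\bH,\sigma_0)$. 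Orbit-computability of $C$ is unchanged, since the constant field is still $C$.

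\emph{Main obstacle.} The delicate point is that $(\bF,\sigma^i)$ is a $\Pi\Sigma^*$-field. In particular, the intermediate constant fields must remain $C$ under $\sigma^i$, which is exactly constant-stability applied at each level of the tower. We also need the algorithms of \cite{Karr1981,Schn2016} to accept the shifted automorphism $\sigma^i$ as input. This is unproblematic because $\sigma^i$ is computable and the field is again presented as a $\Pi\Sigma^*$-field.

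With both hypotheses in place, Theorem~\ref{THM:CR+RPiSigma}~(i) decides whether $g\in\Delta(\bE)$. Its proof also returns an explicit $q$ via Lemma~\ref{LM:convert}~(iv) and the isomorphism $\delta$ of Proposition~\ref{PROP:RPiSigmaIsIdempotent}.
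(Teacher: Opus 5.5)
Your proposal is correct and follows the paper's route: the paper obtains the corollary by combining Theorem~\ref{THM:CR+RPiSigma}~(i) with Theorem~\ref{Thm:PiSiOverSimpleRPiSiDESolver}, just as you do. You also spell out hypothesis checks that the paper leaves implicit: constant-stability of $(\bF,\sigma)$, the fact that $(\bF,\sigma^i)$ is again a $\Pi\Sigma^*$-field over $C$, and simplicity of $\Pi\Sigma^*$-towers over it (Remark~\ref{RE:SimplePiSigma=PiSigma}).
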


More generally, one may use the telescoping tools developed and implemented within the summation package \texttt{Sigma}, yielding telescoping algorithms  for cases where the ground ring can be, e.g., also a $\Pi\Sigma^*$-field  over unspecified sequences~\cite{KS:06,PS:19}.\\ 
We remark that combining \cite[Proposition~6.4]{Schn2017} with the idempotent representation from Proposition~\ref{PROP:RPiSigmaIsIdempotent} provides an alternative algorithmic approach to Corollary~\ref{Cor:PiSiOverRPiSiTelescoper}.
However, the proposed reduction strategy therein requires solving the underlying telescoping problem for each component separately; in contrast, the new algorithm derived in Theorem~\ref{THM:CR+RPiSigma} reduces this requirement to solving the telescoping problem in only one of the components. This new feature is not only more efficient, but provides also more flexibility. One can apply all of the refined telescoping algorithms given in~~\cite{Schn2007,Schn2008,Schneider:2015,Schn2016} straightforwardly (without synchronizing refined extensions coming from different components).

 The statement (ii) of Theorem~\ref{THM:CR+RPiSigma} reduces the calculation of a complete reduction for a general $R\Pi\Sigma^*$-tower to that for a $\Pi\Sigma^*$-tower over 
 the ground ring.
As a consequence, we obtain the following algorithmic version for the rational difference field.

\begin{cor}\label{COR:CR+RPiSigma+Rational}
Let $\bF=C(x)$ be the rational difference field with $\sigma(x)=x+1$, where $C$ is factorizable, and let $(\bE,\, \sigma)$ be an $R\Pi\Sigma^*$-tower over $(\bF,\, \sigma)$. Then there is an algorithm for constructing a complete reduction $\omega$ for $(\bE,\, \Delta)$.
\end{cor}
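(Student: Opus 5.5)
The plan is to verify the hypotheses of Theorem~\ref{THM:CR+RPiSigma}~(ii) for the ground ring $(\bA,\sigma)=(C(x),\sigma)$ and then apply that theorem directly.

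\emph{Ground-ring hypotheses.} The field $C(x)$ is a computable integral domain whenever $C$ is factorizable, and its constant field is $C$. It is constant-stable: if $\sigma^k(w)=w$ for some $k\in\bZ^+$, then every pole or zero orbit of $w$ is invariant under the shift $x\mapsto x+k$. This forces $w$ to have no poles or zeros at all, so $w\in C$.

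\emph{Effective basis.} An effective $C$-basis of $C(x)$ is $\Theta_0$ from \eqref{EQ:basis0}. Both algorithms of Definition~\ref{Def:effectiveBasis} are realized by irreducible partial fraction decomposition, which is available because $C$ is factorizable, exactly as in the proof of Theorem~\ref{COR:RPiSigmaExt+Rational}.

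\emph{Complete reductions in the ground ring.} For every $f\in C(x)^*$ and $i\in\bZ^+$, a complete reduction $\phi^{(i)}_f$ for $(C(x),\Delta^{(i)}_f)$ can be constructed, together with its $\Sigma$-pairs, by Proposition~\ref{PROP:Generalrational+highorder} (Algorithm~\ref{ALG:CRForRationalFunctions}).

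\emph{Applying the theorem.} With these hypotheses in place, Theorem~\ref{THM:CR+RPiSigma}~(ii) yields an algorithm for $\omega$. If $\bE$ has no $R$-monomials, the result is Theorem~\ref{THM:SimpleRPiSigmaExt}, since a $\Pi\Sigma^*$-tower over an integral domain is simple by Remark~\ref{RE:SimplePiSigma=PiSigma}. Otherwise the construction proceeds in three steps:
\begin{enumerate}
\item Run Algorithm~\ref{ALG:IdemDecomp} to obtain $e_0$, $\lambda$, the $\Pi\Sigma^*$-tower $(\bH,\sigma_0)$ over $(C(x),\sigma^\lambda)$, and the isomorphism $\delta$.
\item Compute a complete reduction $\Phi$ on $(\bH,\sigma_0-\mathbf{1})$ and transport it back through $\delta$ via Lemma~\ref{LM:diffiso+CR}.
\item Compose the result with $\psi$ as in Lemma~\ref{LM:convert}.
\end{enumerate}

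\emph{Main obstacle.} The delicate point is step~2: the ground ring of $\bH$ is $(C(x),\sigma^\lambda)$, not $(C(x),\sigma)$. Theorem~\ref{THM:SimpleRPiSigmaExt} must therefore be invoked over $(C(x),\sigma^\lambda)$, which requires complete reductions for $(C(x),(\sigma^{\lambda})^{i}$-twisted operators $\Delta^{(\lambda i)}_f)$. This is again covered by Proposition~\ref{PROP:Generalrational+highorder}, and it is precisely the case $k=\lambda$ of Theorem~\ref{COR:RPiSigmaExt+Rational}. Alternatively, one can rescale $x\mapsto \lambda x$ via Lemma~\ref{LM:GeneralRational} to reduce to $\sigma$.

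Two further facts are needed here. First, $\const(C(x),\sigma^\lambda)=C$, which holds by constant-stability. Second, the tower $\bH$ is simple over $(C(x),\sigma^\lambda)$, which again follows from Remark~\ref{RE:SimplePiSigma=PiSigma}. Once these are settled, the $\Sigma$-pair formula in the proof of Theorem~\ref{THM:CR+RPiSigma}~(ii) gives $\omega$ constructively.
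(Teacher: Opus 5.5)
Your proposal is correct and follows the paper's proof: verify that $(C(x),\sigma)$ is constant-stable, note that Proposition~\ref{PROP:Generalrational+highorder} supplies complete reductions for every $\Delta_f^{(i)}$, and apply Theorem~\ref{THM:CR+RPiSigma}~(ii). The differences are minor. The paper cites an earlier result for constant-stability instead of giving your direct argument. Your discussion of the effective basis and of the $\sigma^\lambda$ ground ring makes explicit what is already absorbed in the proofs of Theorems~\ref{COR:RPiSigmaExt+Rational} and~\ref{THM:CR+RPiSigma}.
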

\begin{proof} 
Note that $(\bF,\, \sigma)$ is constant-stable by \cite[Proposition~32]{Schn2020}. Then the result follows from  Proposition~\ref{PROP:Generalrational+highorder} and Theorem~\ref{THM:CR+RPiSigma}~(ii). \end{proof}

We summarize the algorithmic procedure as follows.

 \begin{alg}\label{ALG:SigmaPairinGeneralRPiSigma}
{\tt CRFor$R\Pi\Sigma^*$-Towers}$(\bE, g)$

	\noindent
	{\tt Input:} an  $R\Pi\Sigma^*$-tower $(\bE,\,\sigma)$ over $(C(x),\,\sigma)$ with $\sigma(x)=x+1$, where $C$ is factorizable; $g\in\bE$.
	
	\noindent
	{\tt Output:} a $\Sigma$-pair of $g$ with respect to $\omega$ as given in Corollary \ref{COR:CR+RPiSigma+Rational}.
\begin{enumerate}

\item[(1)] If $(\bE,\,\sigma)$ is a $\Pi\Sigma^*$-tower, use Algorithm~\ref{ALG:CRForSimpleRPiSigma-Towers} to derive the desired result. Otherwise, we proceed as follows.

\item[(2)] Applying Algorithm~\ref{ALG:IdemDecomp} to $(\bE,\,\sigma)$ yields $\{(e_0, \lambda), (\bH,\, \sigma_0),\delta\}$.
Here $e_0\in\bE$ is an idempotent and $\bE=\directsumideals_{i=0}^{\lambda-1}\sigma^i(e_0)\bE$. Moreover, $(\bH,\,\sigma_0)$ is a $\Pi\Sigma^*$-tower over $(C(x),\sigma^\lambda)$, and $\delta$ is a difference ring isomorphism from $(e_0\bE,\,\sigma^\lambda)$ onto $(\bH,\, \sigma_0)$.
    
\item[(3)] Set $\psi$ to be a $C$-linear function from $\bE$ to $e_0\bE$ by sending $f$ to $e_0\sum_{i=0}^{\lambda-1}\sigma^{-i}(f)$. 
Using Algorithm~\ref{ALG:CRForSimpleRPiSigma-Towers}, we get a $\Sigma$-pair $(u,v)$ of $\delta(\psi(f))$ with respect to a complete reduction $\Phi$ for $(\bH,\, \sigma_0-\bf{1})$.
 
 \item [(4)] 
Return $$(\sum_{i=1}^{\lambda-1}\sum_{j=1}^i e_{i-j} \sigma^{-j}(f)+\sum_{i=0}^{\lambda-1}e_i\sigma^{i}(u), e_0v).$$
\end{enumerate}
\end{alg} 
 We note that the proposed algorithm can be improved further to compute $\Sigma^*$-pairs in an $R\Pi\Sigma^*$-tower $(\bE,\,\sigma)$. Let $\bE=\bA\langle s_1 \rangle \ldots \langle s_n \rangle$, where $(\bA,\,\sigma)$ is an $R\Pi$-tower and the $s_i$ are $\Sigma^*$-monomials.  Instead of computing an idempotent representation of the full tower $\bE$, one can apply Algorithm~\ref{ALG:SigmaPairinGeneralRPiSigma} only to compute $\Sigma$-pairs in the $R\Pi$-tower $\bA$ and then handle the $\Sigma^*$-monomials using Algorithm~\ref{ALG:Sigmacase}. We denote the resulting algorithm by \texttt{CRhalfID}. In particular, if $(\bE, \,\sigma)$ is a simple $R\Pi\Sigma^*$-tower, Algorithm~\ref{ALG:CRForSimpleRPiSigma-Towers} can be applied directly. When applicable, these strategies are expected to outperform Algorithm~\ref{ALG:SigmaPairinGeneralRPiSigma} (\texttt{CRfullID}). We illustrate this by recomputing the benchmarks from Table~\ref{tab:Tele2} with \texttt{CRfullID} and \texttt{CRhalfID}, yielding the timings shown below.

\begin{center}
	\begin{tabular}{|c|c|c|c|c|c|c|c|c|c|} \hline
		$i$ & 5&10&15&20&25&30&35&40    \\ \hline
		{\tt CRfullID} &2.30 &35.9&292.1&1484.5&5317.8&14950.9&$>36000$& $>36000$  \\ \hline 
        {\tt CRhalfID} &0.65&4.3&17.3&53.1&132.5&283.1&555.4&984.8   \\ \hline 
	\end{tabular}
\end{center}

Compared to Table \ref{tab:Tele2}, the direct approach to compute complete reductions outperforms all other approaches for the given simple $R\Pi\Sigma^*$-tower. {\tt CRhalfID} still outperforms the function {\tt RefinedTelescoping} in the {\tt Sigma} summation package, and {\tt CRfullID} exhibits relatively poor performance.

\begin{example}\label{EX:newidentity}
Simplify    
	\[\sum_{j=0}^n \frac{((-1)^j(2-j)+j)j!}{2^j \lfloor \frac j2\rfloor!}.\]
 We start with the rational difference field $(\bQ(x), \, \sigma)$ with $\sigma(x)=x+1$ and construct the $R\Pi\Sigma^*$-tower $(\bE,\,\sigma)$ over $(\bQ(x),\sigma)$ with $\bE=\bQ(x)\langle y\rangle \langle p_1\rangle \langle p_2\rangle \langle p_3\rangle$ and
   \[\sigma(y)=-y, \, \sigma(p_1)=2p_1, \, \sigma(p_2)=(x+1)p_2 \,\, \text{and} \,\, \sigma(p_3)= ap_3,\]
   where $a=\frac {x+3-y(x-1)}{4}$. By a direct computation, $\left(\frac{1-y}{x+1}+\frac{1+y}{2}\right)a=1$. So $a\in\bQ(x)\langle y\rangle \langle p_1\rangle \langle p_2\rangle^*$. Since $a$ is not a product of $y, p_1$ and $p_2$, $\bE$ is not simple.  
   On the other hand, $y, p_1$ and $p_2$ model $(-1)^n$, $2^j$ and $j!$, respectively. Note that $\lfloor \frac j2\rfloor!$ can be expressed as $\prod_{k=1}^{j}\frac{1}{4}(k+2+(-1)^k(k-2))$. So $p_3$ models $\lfloor \frac j2\rfloor!$. It follows that the given summand can be represented by 
   $$f=\frac{(y (2 - x) + x) p_2}{p_1 p_3} \in \bE.$$ 
   In this setting, we use Algorithm~\ref{ALG:SigmaPairinGeneralRPiSigma} to compute a $\Sigma$-pair of $f$ with respect to $\omega$ given in Corollary~\ref{COR:CR+RPiSigma+Rational}. 
   In step~2, we get the idempotent representation $e_0\bE \oplus \sigma(e_0)\bE$ of $(\bE,\, \sigma)$ with $e_0=\frac{1-y}2$. And we obtain the $\Pi$-tower $(\bH,\sigma_0)$ over $(\bQ(x),\sigma^2)$ with $\bH=\bQ(x) \langle p_1\rangle \langle p_2\rangle \langle p_3\rangle$ and
       \[\sigma_0(p_1)=4p_1, \, \sigma_0(p_2)=(x+1)(x+2)p_2, \, \text{and} \, \sigma_0(p_3)= \frac{1+x}2 p_3.\]
    Moreover, there is a difference ring isomorphism $\delta\colon (e_0 \bE,\, \sigma^2) \to (\bH,\, \sigma_0)$ by sending $g$ to $g|_{y\to -1}$ for every $g\in e_0\bE$. In step 3,  $\delta(\psi(f))=\frac{1-y}2(f+\sigma^{-1}(f))|_{y\to -1}=2 x^{-1}(2 - x + x^2)p_1^{-1}p_2p_3^{-1}$. Applying Algorithm~\ref{ALG:CRForSimpleRPiSigma-Towers} to
     $(\bH,\sigma_0)$ and $\delta(\psi(f))$, we obtain a $\Sigma$-pair $(u, 0)$ of $\delta(\psi(f))$, where $u=4  (x-1)p_2x^{-1}p_1^{-1} p_3^{-1}$. Finally, we conclude that
     $$\left((2+x-2y+xy)p_2p_1^{-1}p_3^{-1}, 0\right)$$
     is a $\Sigma$-pair of $f$ with respect to $\omega$. Rephrasing this result to our summation objects yields the identity
      \[\sum_{j=0}^n \frac{((-1)^j(2-j)+j)j!}{2^j \lfloor \frac j2\rfloor!}=\frac{2(n+1)!}{2^n \lfloor \frac n2\rfloor!}.\]
\end{example}

\section{Creative telescoping}\label{SECT:ParaTele}

A classical problem in symbolic summation is creative telescoping introduced in~\cite{Zeil1991} for hypergeometric terms that can be stated more generally as follows. Given a bivariate function $F(n,k)$, determine whether there are $m\in \bN$ and polynomials $c_0(n),\ldots, c_m(n)$, free of $k$, not all zero, and $G(n,k)$ such that
\begin{equation}\label{EQ:creativetelescoping}
c_0(n)F(n,k)+\cdots+c_m(n)F(n+m,k)=G(n, k+1)-G(n,k).
\end{equation}
Let $S(n)=\sum_{k=a}^{b} F(n, k)$. If one finds such a solution for \eqref{EQ:creativetelescoping} that holds for $k$ with $a \le k \le b$, then summing \eqref{EQ:creativetelescoping} over $k$, we get a recurrence for $S(n)$:
$$c_0(n)S(n)+\cdots+c_m(n)S(n+m)=q(n).$$
The creative telescoping problem can be reformulated in the difference ring setting as follows. Let $\bK(n, k)$ be a rational function field in the variables $n,k$ over a field $\bK$ of characteristic $0$, and let $(\bK(n, k),\, \{\sigma_n, \sigma_k\})$ be the partial difference field with two commuting  shift operators $\sigma_n$ and $\sigma_k$ defined by $\sigma_k(k)=k+1$ and $\sigma_n(n)=n+1$. Note that $(C(k),\,\sigma_k)$ with $C=\bK(n)$ forms the rational difference field with constant field $C$. 
Let $(\bE,\, \sigma_k)$ be an $R\Pi\Sigma^*$-tower over $(C(k),\, \sigma_k)$ and extend $\sigma_n$ to an appropriate ring automorphism $\sigma_n:\bE\to\bE$ (modeling the shift in $n$ on the summation objects accordingly) such that $\sigma_n\sigma_k(f)=\sigma_k\sigma_n(f)$ for every $f\in \bE$. Given $f\in \bE$, determine whether there are $m\in \bN$, 
$c_0,\ldots,c_m \in C$ not all zero, and $g \in \bE$ such that 
\begin{equation}\label{EQ:paratele}
c_0f+\cdots+c_m\sigma_n^{m}(f)=\sigma_k(g)-g
\end{equation}
holds.  If such a solution exists, then we call $\sum_{i=0}^{m}c_i \sigma_n^{i}$ a {\em telescoper for $f$}. Moreover, if $m$ is minimal such that \eqref{EQ:paratele} holds, then $\sum_{i=0}^{m}c_i \sigma_n^{i}$ is called a {\em minimal telescoper for $f$}; note that it is unique up to multiplication by a constant. 

We note that creative telescoping can be rephrased in the setting of parameterized telescoping by letting $f_i=\sigma_n^{i}(f)$ with $i=0,\dots,m$ in~\eqref{Equ:ParaTele} for a fixed $m$. 
As stated in Property (iii) on page~\pageref{prop:complete_reduction} of the introduction, complete reductions can be used to deal with the above problem as follows.
\begin{prop}\label{PROP:telescoper}
With the notation introduced above, let $\omega$ be a complete reduction for $(\bE,\,\sigma_k-\bf{1})$. 
For $f\in \bE$ and $m\in \bN$, $f$ has a telescoper of order  $m$  if and only if there exist $c_0,\ldots, c_m\in C$ with $c_m \neq 0$ such that
\begin{equation}\label{EQ:telescopingequation}
c_0\omega(f)+c_1 \omega(\sigma_n(f))+\cdots+c_m\omega(\sigma_n^m(f))=0.
\end{equation}
\end{prop}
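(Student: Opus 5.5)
The plan is to derive both directions from two facts. First, $\omega$ is $C$-linear. Second, by Remark~\ref{RE:cr}, its kernel is exactly $\im(\sigma_k-\mathbf{1})=\Delta(\bE)$. One preliminary point needs checking: $C=\bK(n)$ is the constant field of $(\bE,\sigma_k)$, because $(\bE,\sigma_k)$ is an $R\Pi\Sigma^*$-tower over $(C(k),\sigma_k)$. Hence scalars $c_i\in C$ commute with $\omega$ and with $\sigma_k-\mathbf{1}$. I also read ``telescoper of order $m$'' as the operator $\sum_{i=0}^m c_i\sigma_n^i$ with $c_m\neq 0$ and some $g\in\bE$ satisfying \eqref{EQ:paratele}. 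With that convention the leading-coefficient condition is the same on both sides, and nothing needs to be said about it.

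For ($\Rightarrow$), suppose $c_0,\dots,c_m\in C$ with $c_m\neq 0$ and $g\in\bE$ satisfy \eqref{EQ:paratele}. I apply $\omega$ to both sides. The right-hand side lies in $\im(\sigma_k-\mathbf{1})=\ker(\omega)$, so its image is $0$. By $C$-linearity (Property~(ii)), the left-hand side becomes $\sum_i c_i\,\omega(\sigma_n^i(f))$. This gives \eqref{EQ:telescopingequation}.

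For ($\Leftarrow$), suppose \eqref{EQ:telescopingequation} holds with $c_m\neq0$. By linearity, $\omega\bigl(\sum_i c_i\sigma_n^i(f)\bigr)=0$, so $\sum_i c_i\sigma_n^i(f)\in\ker(\omega)=\im(\sigma_k-\mathbf{1})$. Hence some $g\in\bE$ satisfies \eqref{EQ:paratele}. To make $g$ explicit, I take $\Sigma$-pairs $(v_i,\omega(\sigma_n^i(f)))$ for each $i$ and set $g=\sum_i c_iv_i$. Then
\[
\sum_i c_i\sigma_n^i(f)=(\sigma_k-\mathbf{1})(g)+\sum_i c_i\,\omega(\sigma_n^i(f))=(\sigma_k-\mathbf{1})(g).
\]

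I do not expect a real obstacle. The only step needing care is justifying that $\omega$ is $C$-linear with $C=\bK(n)$ rather than merely $\bK$-linear. This follows from Definition~\ref{DEF:cr} and the convention that operators are linear over the constant field of $(\bE,\sigma_k)$. For the algorithmic tower this is guaranteed by Corollary~\ref{COR:CR+RPiSigma+Rational}.
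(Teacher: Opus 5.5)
Your proposal is correct and is essentially the paper's proof: both directions rest on the $C$-linearity of $\omega$ and on $\ker(\omega)=\im(\sigma_k-\mathbf{1})$. The only cosmetic difference is in how $g$ is obtained. The paper takes a single $\Sigma$-pair $(g,r)$ of $L(f)$ and notes $r=\omega(L(f))=0$, whereas you assemble $g=\sum_i c_iv_i$ from $\Sigma$-pairs of the individual $\sigma_n^i(f)$; this amounts to the same thing.
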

\begin{proof} 
Let $L=\sum_{i=0}^{m} c_i \sigma_n^i$ for $c_0, \ldots, c_m$ with $c_m \neq 0$ and $(g,r)$ be a $\Sigma$-pair of $L(f)$ with respect to $\omega$. Then $\omega(L(f))=\sum_{i=0}^{m} c_i \omega(\sigma_n^i(f))$ because $\omega$ is $\bK(n)$-linear. Assume that \eqref{EQ:telescopingequation} holds, i.e., $r=0$. Then~\eqref{EQ:paratele} holds and 
$L$ is a telescoper for $f$ with order $m$. Conversely, assume that $L$ is a telescoper for $f$ of order $m$.  Since $\omega$ is a complete reduction, then $r=\omega(L(f))=0$. So \eqref{EQ:telescopingequation} holds. 
\end{proof}

In particular, we can extract the following algorithm.

\begin{alg}{\tt CreativeTelescopingViaCR}$(f, \bE)$ \label{ALG:ParametricTelescoping}

 \noindent
{\tt Input:} An R$\Pi\Sigma^*$-tower $(\bE,\,\sigma_k)$ over $(\bK(n, k),\, \{\sigma_n, \sigma_k\})$, where $\bK$ is factorizable. Furthermore, an element $f\in \bE$ where we assume that $f$ has a telescoper\footnote{This assertion is intended to guarantee that the loop below terminates. However, from a practical standpoint, it is often relaxed in the hope of finding a solution.}.

 \noindent
{\tt Output:} A minimal telescoper $\sum_{i=0}^{m}c_i\sigma_n^i$ for $f$ and $g\in \bE$ such that \eqref{EQ:paratele} holds.

\smallskip \noindent
\begin{enumerate}
	\item[(1)]  Use Algorithm~\ref{ALG:SigmaPairinGeneralRPiSigma} to compute a $\Sigma$-pair $(g_0, r_0)$ of $f$ with respect to a complete reduction $\omega$ for $(\bE,\,\sigma_k-\mathbf{1})$
\item[(2)] If $r_0=0$, then return $(\mathbf{1}, g_0)$. Otherwise, set $g:=g_0$, $r:=\ell_0r_0$, where $\ell_0$ is a new constant indeterminate.
\item [(3)] For $m=1,2,3, \ldots$ do
\begin{itemize}
  \item [(3.1)] Use Algorithm~\ref{ALG:SigmaPairinGeneralRPiSigma} to compute  a $\Sigma$-pair $(g_m, r_m)$ of $\sigma_n(r_{m-1})$ with respect to $\omega$, and update $g, r$ to $\sigma_n(g)+\ell_mg_m$ and $r+\ell_mr_m$, where $\ell_m$ is a new constant indeterminate. 
 
  \item [(3.2)] Find $c_0,\ldots, c_{m} \in \bK(n)$ such that $r=0$ by solving a linear system in $\ell_0, \ldots, \ell_m$ over $\bK(n)$ (note that the coefficients of $r$ with respect to $\ell_i$ can be represented explicitly in a $\bK(n)$-basis of $\bE$; see the construction in the proof of Theorem \ref{THM:SimpleRPiSigmaExt}).\\ 
  If there is a nontrivial solution, return  $(\sum_{i=0}^{m}c_i\sigma_n^i,$  $g|_{\{ \ell_i \rightarrow c_j \mid i\in[m]_0\}})$
  \end{itemize}
 \end{enumerate}
\end{alg}

We present some empirical results about computing telescopers obtained
 by  Algorithm~\ref{ALG:ParametricTelescoping}
 ({\tt CTviaCR}) and the function {\tt CreativeTelescoping}~({\tt CT}) of the summation package {\tt Sigma} that implements the algorithms given in~\cite{Schn2008,Schneider:2015}. 

Here we study in detail the family of sums given by \eqref{EQ:recurrence}.
First we construct the $R\Pi\Sigma^*$-tower $(\bE,\sigma_k)$ of $(\bQ(n)(k),\sigma_k)$ with constant field $C=\bQ(n)$, $\bE:=\bQ(n)( k)[p,p^{-1}][s]$ and
$$\sigma_k(p)=\frac{n-k}{k+1}p \quad \text{and} \quad \sigma_k(s)=s+\frac{1}{k+1}.$$
To model the shift in $n$, we define the automorphism $\sigma_n:\bE\to\bE$ by $\sigma_n|_{\bQ(k)}=\mathbf{1}$, $\sigma_n(n)=n+1$, $\sigma_n(p)=(n+1)p/(n-k+1)$ and $\sigma_n(s)=s$.
Then, for every $\ell\in\bN$ and $i\in\bN$, the summand of $S_{n+i,\ell}$ can be expressed as $\sigma_n^i(f_{\ell})\in\bE$ with
$f_{\ell}=(1-\ell ks+\ell(n-k)s)p^{\ell}$. Note that $S_{n,\ell}$ satisfies a linear recurrence with polynomials in $n$ for every $\ell\in\bN$ by WZ-summation theory~\cite{PWZ1996}. So $f_{\ell}$ has a telescoper, which implies that both Algorithm~\ref{ALG:ParametricTelescoping} and \texttt{CreativeTelescoping} terminate when applied to $f_{\ell}$.
Our experiments up to $\ell = 16$ verified that the telescopers obtained by both algorithms agree up to a multiplicative constant. The timings for $8\leq\ell \leq 16$ are listed in the following table, in seconds. The second row of the table records orders of telescopers for the corresponding summands, as computed by the algorithms. 

\begin{center}
	\begin{tabular}{|c|c|c|c|c|c|c|c|c|c|c|c|} \hline
		$\ell$ & 8 & 9 & 10 & 11 & 12 & 13 & 14 & 15  & 16   \\ \hline
        {\tt order}& 7&9& 9 & 11 & 11 & 13 & 13 & 15 & 15      \\ \hline
		{\tt CT}& 20.19 & 52.70 & 136.1 & 250.5 & 1928.7 & 1242.3 & 14644.4 & 8258.0  & $>36000$ \\ \hline
		{\tt CTviaCR} &  4.08 & 7.18 & 19.3 & 29.3 & 107.1 & 102.2 & 322.0 & 320.7 & 1104.8  \\ \hline
	\end{tabular}
\captionof{table}{Telescoper computation via the classical method}
    \label{tab:CT1}
\end{center}

In~\cite{PS2003}, it has been observed that one may refine the conditions on the telescoping solutions to find telescopers of smaller orders as follows.  
Find the minimal  $m \in \bN$ and a $\Sigma^*$-extension $(\bE [s'],\, \sigma)$ of $(\bE,\,\sigma)$  with  $\sigma_k(s')=s'+a$ for some  $a\in \bQ(n)(k)[p,p^{-1}]$ such that 
\begin{equation}\label{EQ:PT2} 
c_0f_{\ell}+\cdots+c_m\sigma_n^{m}(f_{\ell})=\sigma_k(g_{\ell})-g_{\ell}
\end{equation}
 for some $g_{\ell} \in \bE[s']$. 
Here the summation package \texttt{Sigma} employs \cite[Algorithm~1]{Schn2004} to solve this problem. Similarly, we can carry out this refined version with complete reductions by using the fact that such a solution $g_{\ell}$ of~\eqref{EQ:PT2} has the form $g_{\ell}=c'\,s'+\gamma$ for some $c'\in\bQ(n)$ and $\gamma\in\bE$. Namely, let $\omega$ be the complete reduction for $(\bE,\, \sigma_k-\mathbf{1})$. Then the above problem is equivalent to finding the minimal $m$ such that $a=c_0\omega (f_{\ell})+\cdots+c_m \omega (\sigma_n^{m}(f_{\ell})) \in\bQ(n)(k)[p, p^{-1}]$. This leads to a refined linear system that enables one to search for such $\Sigma$-pairs. We performed additional experimental tests for all values of $\ell$ listed above, confirming that both algorithms produce the same minimal telescopers

\begin{center}
	\begin{tabular}{|c|c|c|c|c|c|c|c|c|c|c|c|} \hline
		$\ell$& 8&9& 10 & $11$ & $12$ & $13$ &  $14$ & $15$ & $16$   \\ \hline
        {\tt Order}& 4 & 5 & 5 & 6 & 6 & 7 & 7&8&8     \\ \hline
		{\tt CT}& 1.48 & 2.76 & 4.08 & 7.70 & 13.39 & 23.98 & 44.1 & 64.9 & 136.8 \\ \hline
		{\tt CTviaCR} & 1.19 & 2.18 & 4.19 & 7.24 & 13.44 & 20.38 & 41.6 & 57.8 & 105.9  \\ \hline
	 \end{tabular}
\captionof{table}{Telescoper computation  via the refined method}
    \label{tab:CT2}
\end{center}

Also here the timings improve slightly, but we have to consider higher $\ell$ to enter into the situation that the underlying brute-force linear system to be solved in \texttt{Sigma} gets substantially more complicated.

\begin{center}
	\begin{tabular}{|c|c|c|c|c|c|c|c|c|c|} \hline
		$\ell$ & 17 & $18$ & $19$ & $20$ &  $21$ & $22$ & $23$   \\ \hline
        {\tt Order} &  9 & 9 & 10 & 10 & 11  &11&12  \\ \hline
		{\tt CT}& 176.3 & 418.4 & 423.3 & 1239.4 & 1468.3 & 3621.2 & 7353.8 \\ \hline
		{\tt CTviaCR} & 144.1 & 290.5 & 345.8 & 680.2 & 729.6 & 1475.2 & 1639.9  \\ \hline
	\end{tabular}
\captionof{table}{Telescoper computation at higher orders via the refined method}
    \label{tab:CT3}
\end{center}

These timings illustrate that complete reductions in the setting of difference rings are significantly more efficient for large-scale problems. We therefore expect that calculations in particle physics, for instance, will open up new avenues for simplifying Feynman integrals once they have been transformed into multiple sums.
In~\cite{QCD20}, these sums, which reached summands requiring up to 1 GB of memory, were evaluated using the classical difference ring tools in \texttt{Sigma}. As future calculations are expected to involve even larger expressions, the complete reduction techniques presented here will render such challenging computations feasible for the first time.
\section{Concluding remarks}\label{SECT:Conclusion}

In this paper, we introduced a general framework for constructing complete reductions in $R\Pi\Sigma^*$-towers, assuming that the ground difference ring $(\bA,\,\sigma)$ satisfies certain algorithmic properties. Specifically, we provided explicit constructions of complete reductions for the rational difference field $\bA=C(x)$ with $\sigma(x)=x+1$. Extending this framework to the mixed case~\cite{BaPe1999}—including the $q$-rational difference field with $\sigma(x) = qx$ where $q \in C^*$ is not a root of unity—and to general $R\Pi\Sigma^*$-towers over $\Pi\Sigma^*$-fields \cite{Karr1981} will significantly broaden the scope and applicability of this machinery.

Based on additive decomposition, criteria for the existence of telescopers have been established for the bivariate hypergeometric case~\cite{Abra2003}, the $q$-hypergeometric case~\cite{CHM2005}, and the mixed case~\cite{CCFFL2015}. A natural question is how to leverage complete reductions in $R\Pi\Sigma^*$-towers to derive an existence criterion for telescopers of elements in such extensions.

In Example~\ref{EX:newidentity}, we showed that $\lfloor \frac{n}{2} \rfloor !$ can be modeled in a non-simple $R\Pi\Sigma^*$-tower. This opens the way to representing compositions of summation expressions and floor or ceiling functions via $R\Pi\Sigma^*$-tower constructions. In particular, the double factorial 
\[
n!! = \left\{\begin{array}{ll}
	2^k k!\ & n=2k,\\
	\frac{(2k+1)!}{2^k k!} & n=2k+1,
\end{array}
\right. \]
which is defined by interlacing two hypergeometric terms, can be rewritten as 
$$\frac{1+(-1)^n}{2} 2^{\lfloor \frac{n}{2} \rfloor} \lfloor \frac{n}{2} \rfloor ! +\frac{1-(-1)^n}{2} \frac{n!}{2^{\lfloor \frac{n}{2} \rfloor} \lfloor \frac{n}{2} \rfloor !}.$$
Together with Example~\ref{EX:newidentity} and the identity
$$2^{\lfloor \frac{n}{2} \rfloor} = \frac{1+(-1)^n}{2} (\sqrt{2})^n + \frac{1-(-1)^n}{2}\frac{(\sqrt{2})^n}{\sqrt{2}},$$ 
we obtain that $n!!$ can be modeled in the $R\Pi\Sigma^*$-tower  $(\bQ(\sqrt2)(x)\langle y \rangle \langle p_1 \rangle \langle p_2 \rangle \langle p_3 \rangle,\,\sigma)$ over $(\bQ(\sqrt2)(x),\,\sigma)$ equipped with
\[\sigma(y)=-y, \, \sigma(p_1)=\sqrt{2}p_1, \, \sigma(p_2)=(x+1)p_2 \,\, \text{and} \,\, \sigma(p_3)= ap_3,\]
where $a=\frac {x+3-y(x-1)}{4}$.
A future task will be the automatic representation
of such interlaced products and sums within the general framework of $R\Pi\Sigma^*$-extensions.
One prominent application of this framework including interlacing is the representation of Liouvillian sequences. These extend d'Alembertian sequences~\cite{Abramov:94} --which are constructed via nested indefinite sums and hypergeometric products -- by additionally incorporating the interlacing operator. Introduced by Hendriks and Singer~\cite{HeSi1999}, this solution class is fundamental: the solution space of a linear recurrence with polynomial coefficients admits a basis of Liouvillian sequences if and only if its difference Galois group is solvable. Similarly to the optimal representation of d'Alembertian sequences in the setting of difference rings~\cite{Schn2021,Schneider:2023,Schneider:26}, it will be an important task to focus on the algebraic simplification of Liouvillian solutions. 
For formalizing their representation within difference rings, one may utilize the result of~\cite{Reut2012,Petkov:2013} that every Liouvillian sequence can be expressed as an interlacing of d'Alembertian sequences and the algorithms from~\cite{Schn2020,OS2024,Schn2021} that any d'Alembertian sequence can be represented in a simple $R\Pi\Sigma^*$-tower. 

\subsection*{Acknowledgment} 
This research was funded in whole or in part by the
Austrian Science Fund (FWF) grants DOI 10.55776/P33530, DOI 10.55776/PAT1332123, and DOI 10.55776/I6130.


\newcommand{\Gathen}{\relax}\newcommand{\Hoeij}{\relax}\newcommand{\Hoeven}{\relax}\def\cprime{$'$}
\def\cprime{$'$} \def\cprime{$'$} \def\cprime{$'$} \def\cprime{$'$}
\def\cprime{$'$} \def\cprime{$'$} \def\cprime{$'$} \def\cprime{$'$}
\def\polhk#1{\setbox0=\hbox{#1}{\ooalign{\hidewidth
			\lower1.5ex\hbox{`}\hidewidth\crcr\unhbox0}}} \def\cprime{$'$}

\end{document}